\definecolor{niceblue}{RGB}{0,104,178}
\definecolor{darkgreen}{RGB}{3, 131, 127}
\definecolor{orange}{RGB}{213,94,0}
\definecolor{darkorange}{RGB}{191,84,0}
\definecolor{paleniceblue}{RGB}{118,171,208}
\definecolor{camniceblue}{RGB}{145, 182, 177}
\definecolor{grey}{RGB}{60, 62, 61}
\definecolor{purple}{RGB}{153, 51, 255}
\definecolor{red}{RGB}{255, 51, 51}
\definecolor{darkdarkgreen}{RGB}{10, 105, 102}
\definecolor{darkred}{RGB}{148, 47, 47}
\definecolor{niceblue}{RGB}{0,104,178}
\definecolor{darkgrey}{RGB}{3, 131, 127}
\definecolor{orange}{RGB}{213,94,0}
\definecolor{darkorange}{RGB}{191,84,0}
\definecolor{paleniceblue}{RGB}{118,171,208}
\definecolor{camniceblue}{RGB}{145, 182, 177}
\definecolor{grey}{RGB}{60, 62, 61}
\definecolor{purple}{RGB}{153, 51, 255}
\definecolor{red}{RGB}{255, 51, 51}
\definecolor{darkdarkgrey}{RGB}{10, 105, 102}
\definecolor{darkred}{RGB}{148, 47, 47}
\definecolor{teal}{RGB}{0, 150, 136}         
\definecolor{peach}{RGB}{255, 160, 122}      
\definecolor{lavender}{RGB}{178, 102, 255}   
\definecolor{midnightniceblue}{RGB}{25, 25, 112} 
\definecolor{olive}{RGB}{128, 128, 0}        
\definecolor{mustard}{RGB}{204, 153, 0}      
\definecolor{rose}{RGB}{255, 102, 102}       
\definecolor{plum}{RGB}{102, 0, 102}         
\newtheorem*{theorem*}{Theorem}
\newtheorem*{corollary*}{Corollary}
\newtheorem*{proposition*}{Proposition}
\newtheorem*{lemma*}{Lemma}
\newtheorem*{fact*}{Fact}
\newtheorem*{definition*}{Definition}
\newtheorem*{conjecture*}{Conjecture}
\newtheorem{theorem}{Theorem}
\newtheorem{example}{Example}
\newtheorem{corollary}{Corollary}
\newtheorem{proposition}{Proposition}
\newtheorem{assumption}{Assumption}
\newtheorem{definition}{Definition}
\newtheorem{lemma}{Lemma}
\newtheorem{remark}{Remark}
\DeclarePairedDelimiterX{\inp}[2]{\langle}{\rangle}{#1, #2}
\newcommand{\ool}[1]{\overline{#1}}
\newcommand{\uul}[1]{\underline{#1}}
\DeclareMathOperator*{\argmax}{arg\,max}
\titleformat{\subsubsection}[runin]
        {\normalfont\bfseries}
        {\thesubsubsection}
        {0.5em}
        {}
        [.]
\title{Screening with Tolls and Damages\thanks{I am grateful to Rafael Berriel, Benjamin Brooks, Rebecca Diamond, Laura Doval, Piotr Dworczak, Joey Feffer, Zi Yang Kang, Jacob Leshno, Federico Llarena, Michael Ostrovsky, Ilya Segal, Andrzej Skrzypacz, and Sam Wycherley for their helpful comments and suggestions.}}
\author{Filip Tokarski \\ Stanford GSB}
\begin{document}
\date{} 
\maketitle
\vspace*{-1cm} 

\begin{abstract}
A welfare-maximizing designer allocates two kinds of goods using two screening instruments: \emph{tolls}, whose costs are separable from agents' values, and \emph{damages}, which are more costly to agents whose values for the goods are higher. Tolls include payments, queues, and administrative burdens; damages include quality reductions, delays, and restrictions on use. When agents differ only in their value for one kind of good, the designer can never gain from damaging it. However, when valuations for both kinds of goods are heterogeneous, damages can be useful. I provide conditions under which the optimal mechanism includes a damaged option, as well as conditions under which it does not; in the latter case, the optimal mechanism posts ``market-clearing'' tolls for each good. Intuitively, damages are more likely to be optimal when values for the two kinds of goods are positively affiliated, and less likely when high value for one good predicts low value for the other.
\end{abstract}

\section{Introduction}

Compare two mechanisms for scheduling appointments, such as clinic visits, passport renewals, or public-benefit interviews. The first mechanism offers walk-ins allocated through a \emph{queue}: the first \(n\) people to arrive after the office opens are served. In practice, this means that an applicant may need to arrive hours ahead of time to secure a place. The second mechanism is a \emph{waitlist}: people sign up for an appointment in advance and are given the first available slot. This, in turn, entails waiting weeks for one's scheduled time. Both mechanisms impose a burden on participants: in the queue, it is the time spent waiting before the office opens; in the waitlist, it is the delay before service. However, while both burdens may discourage some applicants from using the system, they differ crucially in how they interact with people's values for the appointment: the disutility of arriving ahead of time is largely separable from one's need for the service, whereas delay is more costly for those whose need is higher. Indeed, a patient with an untreated injury suffers more from a month-long wait than someone seeking a routine check-up; a delayed caseworker appointment is more costly for a family that relies exclusively on benefits than for one with other sources of support.

Waitlists and queues exemplify the two kinds of screening instruments studied in this paper. The former are more costly to agents whose value for the allocated good is higher; I call these instruments \emph{damages}. By contrast, the costs imposed by the latter class, which I call \emph{tolls}, are separable from agents' values for the allocated good. Both categories should be understood broadly: damages include literal reductions in quality, but also delays in allocation, restrictions on use, or other policies that reduce the value of the good itself. Tolls include monetary payments, queues, bureaucratic processes, and other ordeals that burden participants without directly reducing the value of the good they receive. The distinction between tolls and damages appears in a variety of economic settings; I provide several examples below.

\vspace{-0.2cm}
\paragraph{Regulating essential services.}
Regulators often require providers of essential services, such as telecommunications or banking, to offer basic, affordable versions of their products. Such policies can combine price regulation with restrictions on the quality or scope of service. For example, in the United Kingdom, major banks are required to offer fee-free basic bank accounts, which provide core services but cannot include an overdraft facility.\footnote{\url{https://www.fca.org.uk/publications/multi-firm-reviews/retail-banking-our-review-basic-bank-accounts}.} Similarly, Portugal requires telecommunications providers to offer a basic service with low speeds and a 15 GB monthly data allowance at a fixed monthly price.\footnote{\url{https://digital.gov.pt/en/estrategia-digital/plano-de-acao-para-a-transicao-digital/01-capacitacao-e-inclusao-digital-das-pessoas/tarifa-social-de-acesso-a-internet}.} In these settings, the regulated fee is a toll, whereas the exclusion of overdraft access, speed restrictions, and data caps are damages.
\vspace{-0.2cm}

\paragraph{Referrals vs. copayments in healthcare.}
Many conditions admit both low- and high-intensity treatments: for instance, musculoskeletal injuries can often be treated with physical therapy or surgery. Because capacity is limited, health systems often ration access to more intensive treatment. This can be done by charging copayments, which constitute tolls. Healthcare providers can also require referrals, prior authorization, or documentation that less intensive treatments have been tried first. Indeed, \citet{world2023high} notes that such measures are intended to ``ensure patient access to specialist healthcare when needed, while maintaining resource efficiency.'' However, they also delay access to the treatment itself, and are therefore especially costly for patients whose injuries are more severe. In this sense, referral and authorization requirements act as damages.

\vspace{-0.2cm}
\paragraph{Self-targeting in social programs.}
Goods provided through public programs are often less desirable than their private-market counterparts. Social housing is frequently located in disadvantaged or peripheral neighborhoods, Medicaid covers fewer providers than private insurance, and food assistance delivered through in-kind or voucher programs such as SNAP restricts choice.\footnote{Some quality reductions may of course reflect lower provision costs. I interpret them as damages only to the extent that quality is reduced below the efficient level, so that the reduction serves as a screening device.} These restrictions are costly to recipients, but can also serve as self-targeting devices \citep{nichols,Besley,currieGahvari}. For instance, if subsidized housing is built in less desirable neighborhoods, families with better alternatives may decide not to apply, so the subsidy disproportionately benefits poorer households. However, quality reductions and usage restrictions are also more costly for households that value the program more: a sicker patient is more burdened by a narrow provider network, and a family relying more heavily on food vouchers is more constrained by restrictions on eligible products. Thus, in my framework, these instruments constitute damages. Nevertheless, self-targeting can also be achieved without diminishing the value of the allocated good: for example, social programs often require applicants to wait in line, complete paperwork, travel to a distant office, or recertify eligibility \citep{nichols1971discrimination,nichols,besley1992workfare,kleven2011transfer,dupas2016targeting,alatas2016self,deshpande2019screened}. Because such ordeals do not diminish the value agents derive from the allocated good, they constitute tolls.

Other environments in which the distinction between tolls and damages is relevant include appointment scheduling, discussed above, congestion pricing, discussed in Section \ref{sec:1d}, and, more generally, settings where goods can be allocated either through passive waitlists or through queues.\footnote{For flow goods that can be enjoyed in every period after receipt, long waitlists cause delays that deprive recipients of periods of use. For consumable goods, they lead to temporal discounting of the good's value.}
I microfound this interpretation in Appendix \ref{appsec:waitlists}, where I show that delays arising endogenously in a steady-state waitlist can act as damages. 

Motivated by these examples, I consider a model in which a welfarist designer allocates scarce goods using deterministic mechanisms that may combine tolls and damages. I study settings involving the rationing of a single kind of good, as well as settings in which agents choose among horizontally differentiated options. Both arise in the examples above. Unemployment insurance programs, for
instance, ration access to a specific benefit, even though application hurdles or delays may reduce its value. By contrast, affordable
housing programs offer units that vary in location and size, with households'
preferences over them having a strong horizontal component
\citep{waldinger2021targeting}. Similarly, appointment systems offer slots
at different offices, and hospitals offer treatments that differ
in intensity. This distinction matters because screening instruments play qualitatively
different roles in the two cases. With one kind of scarce good, they serve only to
exclude low-value agents from the system. With heterogeneous options,
they also affect how participants sort across the available goods. Indeed, in the public-housing context, waitlist lengths often differ
substantially across developments and households decide where to apply
not only based on how much they value a particular unit, but also on how long
they expect to wait for it \citep{van2016households}. Similar sorting motives arise in healthcare, where copays and referral requirements affect whether
patients pursue low- or high-intensity care, and in appointment systems, where
agents choose among clinics or offices that differ in proximity and congestion.

I first study the former case with one scarce good
and a common-value outside option. There, Proposition \ref{prop:1Dnodamage} shows that damages are suboptimal in a strong sense: any feasible mechanism that uses damages is Pareto-dominated by one that uses only tolls. Intuitively, damages are
inefficient because they burden high-value recipients more than the marginal
agents they are meant to screen out. Tolls, meanwhile, can implement the same allocation
while leaving higher rents to inframarginal types. When agents have heterogeneous values for different kinds of goods, however, damaging some of them can be optimal. I study when this occurs in a setting with two kinds of goods. Theorem \ref{th:1} gives sufficient conditions under which damages are not useful. The optimal mechanism is then the \emph{market-clearing toll mechanism}: the designer offers undamaged goods at tolls chosen so that both supplies are exhausted when agents choose their favorite option. Theorem \ref{thm:2} gives conditions under which the optimal mechanism instead uses damages. Intuitively, whether damaging some goods is optimal depends on the statistical dependence between values in the population. When high value for one good predicts low value for the other, tolls alone are likely to be optimal. When values are instead sufficiently positively affiliated, damages can improve welfare.

I later extend the analysis by allowing agents to differ in how costly they find the toll. For instance, when screening is done with monetary payments, poorer agents whom the program tries to target may find them more burdensome. Conversely, the same agents may be more willing to wait, travel, or endure other inconveniences in order to get the good \citep{dupas2016targeting}. I show that this extension leaves the main logic of the results unchanged: after appropriately reweighting the type distribution to account for heterogeneous ordeal costs, the same forces determine when damages can improve welfare.

My results have two main implications for market design. First, when the allocated good is homogeneous, screening should rely on tolls rather than damages: delays, usage restrictions, and quality reductions should, where possible, be replaced by instruments such as fees, application requirements, or queues. Second, when goods are heterogeneous, the appropriate instrument depends on how agents' values for them are related. If high value for one good predicts low value for another, tolls alone are likely to be optimal. This may happen when options are geographically dispersed and agents prefer those closer to where they live. In appointment systems, for instance, this means that offices in the most densely populated locations should charge higher booking fees or use walk-in queues rather than allowing long waitlists to build up. If values are instead strongly positively related, as may be the case in public housing where households' valuations are driven primarily by overall need, differentiated waitlists may be optimal, possibly in combination with toll instruments such as differential rent subsidies.

From a technical perspective, my model is an instance of a tractable multidimensional screening problem. By restricting attention to deterministic mechanisms, I am able to characterize them as pairs of toll and damage menus for the two goods. This in turn allows me to represent two-dimensional mechanisms as interconnected single-dimensional screening problems. The interaction between them is summarized by a boundary in the type space that separates the sets of types who choose each good. The multidimensional problem can then be broken up into two stages: first, determining the optimal way to implement a given boundary, and second, solving an optimal control problem to select the best boundary among all implementable ones. While my paper applies this method to the problem of a welfarist designer, similar ideas could be useful for studying other settings, such as the problem of a two-good monopolist choosing deterministic mechanisms for selling to unit-demand consumers.

The rest of the paper is organized as follows. The next section discusses the related literature and Section \ref{sec:model} presents the model. Section \ref{sec:1d} studies the case where agents differ only in their value for one type of good. Section \ref{sec:2d} extends the analysis to two-dimensional heterogeneity, introduces the boundary representation of mechanisms, and gives conditions under which damages are and are not optimal. Section \ref{sec:proofth1} presents the key steps in the proof of Theorem \ref{th:1}. Section \ref{sec:heterotollcost} extends the analysis to heterogeneous toll costs. Finally, Section \ref{sec:disc} discusses the implications for market design.

\section{Related literature}

This paper contributes to the literature on using costly screening devices and money-burning to maximize welfare. \citet{hartline2008optimal} and \citet{CONDORELLI2012613} show that when goods are allocated without monetary transfers, requiring agents to undertake socially wasteful actions can sometimes improve welfare. \citet{dworczak2026allocate} asks when a redistributive designer would like to hand out money in exchange for completing a costly ordeal. This literature, however, focuses primarily on the allocation of homogeneous or vertically differentiated goods. An exception is \citet{noda2024no}, who study a symmetric environment with heterogeneous good varieties and show that screening through money burning becomes less efficient as the number of varieties grows; nevertheless, they restrict attention to mechanisms that offer agents only their favorite variety. By allowing for richer heterogeneity, I explore an additional role for wasteful screening devices: rather than only affecting the participation margin by determining who is excluded, they can also direct agents toward different goods in a socially efficient manner. When one variety is significantly overdemanded, for instance, the designer can damage it or increase its toll to redirect agents with weaker preferences to less scarce alternatives. My work also relates to \citet{akbarpour2023comparison}, who compare various wasteful screening devices: they ask when one screening instrument dominates another for a planner seeking to maximize social welfare. Unlike them, I allow the designer to \emph{combine} instruments and study when using only one screening device dominates any combination of tolls and damages.

A related literature studies wasteful screening by profit-maximizing sellers. \citet{deneckere1996damaged} give conditions under which a monopolist may want to damage goods in order to price discriminate. \citet{yang2021costly} studies a more general problem in which the monopolist has access to both wasteful and non-wasteful instruments and describes cases in which the wasteful instrument should not be used. My analysis instead concerns how a welfare-maximizing regulator would want sellers to use such instruments. In this sense, it connects to \citet{bulow2012regulated}, who study price controls from the perspective of consumer surplus: in my framework, prices are tolls, and their objective corresponds to the case in which the designer places no value on toll revenue. They observe that price controls can be useful when buyers' values are clustered near the market-clearing price, leaving little surplus net of payments. I extend this logic to heterogeneous goods: when toll-based sorting leaves many
recipients nearly indifferent between the available goods because their value
differences are close to the market-clearing toll differential, damages can
improve welfare by changing the sorting pattern while preserving more surplus
for inframarginal recipients.

The effects of the screening instruments I classify as tolls and damages have also been documented empirically in several settings to which my paper could be applied. In healthcare, \citet{manning1987health} show that higher patient cost-sharing---a monetary toll---substantially reduces medical care use, in part by discouraging low-value care. More closely related to the sorting forces in my model, \citet{brot2023rationing} show that prior authorization restrictions in Medicare Part D substantially reduce the use of restricted drugs and lead many marginal patients to switch to cheaper options. Other work considers non-monetary ordeals in social programs. For example, \citet{finkelstein2019take} find that administrative burdens can improve targeting in SNAP. A related literature also studies self-targeting through offering inferior goods and restrictions on use \citep{nichols,Besley,currieGahvari}. Since these policies reduce the value of the allocated good, I consider them damages.

Finally, my paper relates to the literature on waitlist design. While no paper has studied combining waitlists with payments or ordeals in settings with heterogeneous goods, a substantial literature examines the design of waitlists without transfers. \citet{arnosti2020design} and \citet{waldinger2021targeting} study the effects of restricting recipients' choice on targeting. \citet{barzel1974theory}, \citet{blochCantala}, and \citet{leshno2022dynamic} observe that in environments with homogeneous waiting costs, wait times may ``act as prices,'' screening for agents with higher valuations. I refine this intuition by showing that the screening properties of wait times are different when the cost of waiting stems from delayed receipt.

\section{Model}\label{sec:model}

A designer distributes two types of goods, $A$ and $B$; their supplies are $s_A,s_B> 0$. There is a unit mass of agents whose values for the two goods, $(a,b)$, are distributed according to a nonatomic distribution $F$ on $[0,1]^2$. The designer chooses a menu of qualities and tolls for each of the goods. That is, an agent can choose which good she wants to get and then choose a quality and toll option from the relevant good's menu. She can also choose not to participate, which gives her utility $0$. When a type-$(a,b)$ agent participates and receives good $y$, her utility is:
\begin{align*}
x\cdot a - c & \ \  \text{if}\ \  y = A, \\
x\cdot b - c & \ \  \text{if}\ \  y = B,
\end{align*}
where $c \in \mathbb{R}_+$ is the toll the agent incurs and $x \in [0,1]$ is her good's quality. Whenever $x<1$, we say the good has been \emph{damaged}. The designer chooses the menu to maximize welfare, counting a fraction \(\gamma\in[0,1]\) of each toll as socially valuable. We can then reduce her problem to picking allocation rules for tolls, $c:[0,1]^2\to \mathbb{R}_+$, qualities, $x: [0,1]^2\to [0,1]$, and goods, $y:[0,1]^2 \to \{\varnothing, A,B\}$ to maximize:
\begin{equation}\label{eq:obj}
\int
\left[
u_{a,b}(a,b)
+
\gamma c(a,b)
\right]
\,dF(a,b),
\tag{O}
\end{equation}
subject to incentive-compatibility, participation, and supply constraints:
\begin{equation}\label{eq:IC}
        \text{for all } (a,b), (a',b') \in [0,1]^2, \quad u_{a,b}(a,b) \geq u_{a,b}(a',b'),
        \tag{IC}
        \end{equation}
        \begin{equation}\label{eq:IR}
        \text{for all } (a,b) \in [0,1]^2, \quad u_{a,b}(a,b) \geq 0,
        \tag{IR}
        \end{equation}
        \begin{equation}\label{eq:S}
        \int \mathbb{1}_{y(a,b)=A} \ d  F(a,b) \leq s_A, \quad \int \mathbb{1}_{y(a,b)=B} \ d  F(a,b) \leq s_B.
        \tag{S}
        \end{equation}
Here $u_{a,b}(a',b')$ denotes the utility type $(a,b)$ gets from reporting $(a',b')$ in the mechanism $(c,x,y)$. I call a mechanism $(c,x,y)$ satisfying \eqref{eq:IC},\eqref{eq:IR}, \eqref{eq:S} \emph{feasible}.

\subsection{Discussion of the model}

\paragraph{Explicit design choices and equilibrium objects.}
In the formulation above, tolls and damages are chosen by the designer.
Nevertheless, the model covers both cases in which these instruments are imposed
directly, as with fees or explicit quality restrictions, and cases in which the
designer chooses rules that make them arise endogenously, as with congestion or
waitlists. This is because, even when tolls or damages are equilibrium objects,
the equilibrium conditions can be written in the form of incentive and supply
constraints: IC constraints ensure that agents optimize given equilibrium
burdens, and supply constraints ensure that these burdens adjust so that demand
for each option matches the allocated supply. Appendix \ref{appsec:waitlists}
makes this reduction explicit for waitlists.
\vspace{-0.2cm}

\paragraph{Restriction to deterministic mechanisms.}
My model does not allow the designer to offer lotteries. This reflects practical
constraints and considerations: for instance, in many settings of interest, lotteries can raise
trust concerns as agents may not believe that the random draw
is implemented honestly. Indeed, sellers rarely use randomized allocations, presumably
because customers may fear that the lottery can be manipulated. Similar worries
arise even in the context of public programs: allocating affordable housing
randomly sometimes raises concerns about corruption or draw-faking, prompting
calls to replace lotteries with more transparent mechanisms such as
first-come-first-served waitlists.\footnote{See, for instance,
\url{https://www.camara.leg.br/noticias/523091-projeto-veda-sorteio-na-selecao-dos-beneficiarios-do-minha-casa-minha-vida/}
and
\url{https://citymeetings.nyc/meetings/new-york-city-council/2025-04-29-1000-am-committee-on-housing-and-buildings/chapter/consideration-of-moving-from-lottery-system-to-universal-waiting-list-for-affordable-housing}.}
Relatedly, even honestly executed lotteries for public housing are sometimes
perceived as unfair; for example, Whistler, Canada, ``allocates units based on a
waitlist, a method chosen due to its perceived fairness and ease of
administration, though lottery and points schemes have been used in the past''
\citep{vancouver2016aho}. Lotteries also create gaming incentives. For example,
Beijing used license-plate numbers to effectively randomize which days a car
could be driven: access to the road depended on the terminal digit of the
car's plate. This led households to circumvent the restriction by swapping or
borrowing cars with different plate numbers, or by acquiring additional cars
whose plates allowed driving on otherwise restricted days \citep{wang2014will}. In other contexts, lotteries may encourage unsuccessful applicants to re-enter illegally, for instance by applying
through proxies or other family members.

Finally, the restriction is useful analytically. It isolates the comparison between tolls and damages, which is the focus of the paper (for a model of welfare-maximizing screening with lotteries, see \citet{tokarski2026targeting}). It also gives the model enough structure to make a multidimensional screening problem tractable: together with the assumption of unit demand, it lets me represent mechanisms using boundaries in the type space, which mitigates some of the difficulties posed by multidimensional screening, as studied, among others, by \citet{rochetChone,manelliVincent} and \citet{daskalakisDeckelbaumTzamos}.
\vspace{-0.2cm}

\paragraph{Value of tolls.} To capture different interpretations of tolls, I use the parameter $\gamma$ to represent their social value. Consider first the case where tolls are monetary payments. If the designer can costlessly rebate revenue to participants, it is natural to set \(\gamma=1\). Intermediate cases with \(\gamma\in(0,1)\) may describe government programs where rebating revenue to participants is possible but administratively costly, or where distributing cash undermines the screening benefits of in-kind transfers.\footnote{
When the designer provides a free or subsidized inferior good, only relatively poor agents will want to participate, since wealthier agents can afford higher-quality alternatives. Thus, the subsidy is automatically targeted to those who need it most \citep{Besley}. Once the designer provides cash, this form of targeting disappears, since money is desired by everyone regardless of wealth.} Setting \(\gamma=0\) may be appropriate, for example, when the designer regulates a seller and cares only about consumer surplus. Beyond monetary tolls, the case of \(\gamma=0\) can also be interpreted as ``money burning'' à la \cite{hartline2008optimal}, capturing ordeals like queueing.
\vspace{-0.2cm}

\paragraph{The revelation principle.} Although the designer's problem is written as a choice over direct mechanisms, the usual revelation-principle argument is not immediate because I restrict attention to deterministic options. Appendix \ref{sec:dirrev} shows that the direct-revelation formulation used above is nevertheless without loss.

\section{One-dimensional heterogeneity}\label{sec:1d}

I first consider the case where agents differ only with respect to their value for good $A$; good $B$ plays the role of a common-value outside option. The following result shows that using damages in this setting is suboptimal in a strong sense.

\begin{proposition}\label{prop:1Dnodamage}
Suppose \(s_A<1\), $s_B\geq1$, and all agents have the same value \(b>0\) for good \(B\). Then the optimal mechanism does not use damages. Moreover, any feasible mechanism is weakly Pareto-dominated by a mechanism that uses only tolls, i.e., one with \(x(a,b)\equiv 1\).
\end{proposition}

The proof is in the appendix. To understand the intuition behind this result, note that under any mechanism, good $A$ will go to agents whose values for it lie in some upper interval $[\uul a, 1]$. The designer's problem therefore boils down to selecting a cutoff $\uul a$ and choosing how to enforce it. This requires deterring some agents from choosing good $A$, and can be done by damaging it or by pairing it with a toll.  Note, however, that damages are more burdensome to \emph{inframarginal types} than the types below $\uul a$ they are actually meant to deter. Tolls, on the other hand, are equally burdensome to everyone. Thus, enforcing the cutoff $\uul a$ with tolls yields higher welfare, as it leaves more rents to inframarginal takers of $A$ (Figure \ref{fig:intuition1}).

\begin{figure}[h!]
        \centering
        \begin{subfigure}[b]{0.45\linewidth}
            \centering
            \begin{tikzpicture}[scale=0.65]
                \draw[->] (0,0) -- (6.3,0) node[right] {$a$};
                \draw[->] (0,0) -- (0,5.3) node[above] {};
            
                \draw[ultra thick, darkgreen] 
                    (0,1.3) -- (2.2,1.3) 
                    -- (6.1,1.3+2.2) node[pos=0.5, left, xshift=-5pt] {$U(a,b)$};
            
                \node at (2.2, -0.6) {$\uul{a}$};
                \node at (6.1, -0.5) {$1$};
                \node at (0, -0.5) {$0$};
                \node at (-0.35, 1.3) {$\uul U$};
            
                \draw[decorate, decoration={brace, amplitude=5pt}, ultra thick, darkorange] (2.2,0) -- (6.1,0)
                    node[midway, above=5pt, font=\footnotesize\color{darkorange}] {Receive $A$};
            \end{tikzpicture}   
        \end{subfigure}
        \hspace{0.2cm}
        \begin{subfigure}[b]{0.45\linewidth}
            \centering
            \begin{tikzpicture}[scale=0.65]
                \draw[->] (0,0) -- (6.3,0) node[right] {$a$};
                \draw[->] (0,0) -- (0,5.3) node[above] {};
            
                \draw[ultra thick, darkgreen] 
                    (0,1.3) -- (2.2,1.3) 
                    -- (6.1,1.3+3.9) node[pos=0.5, left] {$U(a,b)$};
            
                \node at (2.2, -0.6) {$\uul{a}$};
                \node at (6.1, -0.5) {$1$};
                \node at (0, -0.5) {$0$};
                \node at (-0.35, 1.3) {$\uul U$};
            
                \draw[decorate, decoration={brace, amplitude=5pt}, ultra thick, darkorange] (2.2,0) -- (6.1,0)
                    node[midway, above=5pt, font=\footnotesize\color{darkorange}] {Receive $A$};
            \end{tikzpicture}         
        \end{subfigure}
        \caption{Indirect utilities $U(a,b) = \uul U +\int_{\uul a}^{\max\{\uul a,a\}} x(v,b) \ dv$ for mechanisms enforcing the cutoff $\uul a$ with damages (left) and tolls (right).}
        \label{fig:intuition1}
    \end{figure}

Replacing damages with tolls is also beneficial for the designer, who does not benefit from damages but values agents' utility and toll revenue.

\begin{remark}
The logic of Proposition \ref{prop:1Dnodamage} extends to a more general class of screening instruments. We could consider two wasteful screening instruments, where the cost of one increases more steeply with the value for good $A$ than does the cost of the other. An analogous result would then say that any mechanism using the former instrument is Pareto-dominated by one using only the latter instrument.
\end{remark}

While the result is simple, it provides economic insight. For example, it captures a key force in the congestion-pricing model of \citet{vickrey1973pricing} and its subsequent generalization by
\citet{van2011winning}. These models study the use of congestion pricing to spread traffic over time
on a capacity-constrained road. They show that when drivers' values for time
are identical, congestion pricing does not improve their utility. This is because the same allocation of driving times is implemented with or without tolls: the ``prices'' for traveling at specific times are then pinned down by market clearing, and it is irrelevant whether they are paid in money, through tolls, or in wasted time, through congestion. This conclusion is overturned, however, when agents' values for time are heterogeneous. Proposition \ref{prop:1Dnodamage} makes this clear, as waiting in traffic is a \emph{damage}. When the road's capacity constraint at peak time is enforced through payments, everyone pays the same price as the marginal driver. When it is enforced through waiting in traffic, the marginal driver experiences the same disutility as she would from the payment, but the inframarginal drivers with the highest values for arriving early suffer strictly more.

\section{Two-dimensional heterogeneity}\label{sec:2d}

I now turn to the case where both goods are scarce and values are heterogeneous in both dimensions. I therefore assume that $s_A,s_B>0$, $s_A+s_B\leq 1$ and that the distribution of values $F$ has full support on $[0,1]^2$. This two-dimensional heterogeneity makes the setting significantly richer. Indeed, tolls and damages will serve not only to exclude low-value agents, but also to direct recipients to choose goods in a socially efficient way. Before I discuss how screening devices ``sort'' agents into goods, however, I show that incentive-compatible mechanisms can be conveniently represented using a boundary in the type space. 

\paragraph{Boundary structure of mechanisms.} Let us first define good-specific indirect utility functions $U_A,U_B:[0,1]\to \mathbb{R}_+$, given by:
\[
U_A(a)
=
\max\Big\{
0,
\sup_{\substack{(a',b')\\ y(a',b')=A}}
\bigl(x(a',b')a-c(a',b')\bigr)
\Big\},
\ \ 
U_B(b)
=
\max\Big\{
0,
\sup_{\substack{(a',b')\\ y(a',b')=B}}
\bigl(x(a',b')b-c(a',b')\bigr)
\Big\}.
\]
Intuitively, $U_A(a)$ and $U_B(b)$ represent the highest utility that type $(a,b)$ could get from selecting some quality and toll option for goods \(A\) and \(B\), respectively, or not participating. $U_A$ and $U_B$ are convex, increasing, and depend only on one dimension of the type---an agent's value for good $B$ does not affect her choice of quality and toll option if she chooses good $A$. Notice also that agents for whom $U_A(a)>U_B(b)$ choose good $A$ and those for whom $U_A(a)<U_B(b)$ choose good $B$. These observations let us define a boundary characterizing different types' good choices.

\begin{definition}
Define a mechanism's  \textbf{participation cutoffs} as follows:
\begin{equation*}
        \uul a = \sup \{a:  \ U_A(a)=0 \}, \quad \uul b = \sup \{b:  \ U_B(b)=0 \}.
\end{equation*}
Let a \textbf{boundary} be a strictly increasing, continuous function
\(z:[\uul a,\ool a]\to[\uul b,\ool b]\) such that
\(z(\uul a)=\uul b\), \(z(\ool a)=\ool b\), \(\ool a\leq 1\),
\(\ool b\leq 1\), and at least one of the last two inequalities holds with
equality.
\end{definition} 

\begin{proposition}\label{prop:boundary}
Suppose the mechanism allocates positive masses of both goods. Then agents' choices of goods are characterized by the mechanism's participation cutoffs $\uul a,\uul b$ and a boundary $z$:
\begin{enumerate}
        \item Types $(a,b)<(\uul a,\uul b)$ do not get either good.
        \item Types for whom $a>\uul a$ and $b<\uul b$ get good $A$; types for whom $a<\uul a$ and $b>\uul b$ get good $B$. 
        \item Types \((a,b)>(\uul a,\uul b)\) with \(a\leq\ool a\) get good \(A\) if they are below the boundary \(z\), that is, if \(z(a)>b\), and good \(B\) if they are above it, that is, if \(z(a)<b\). If \(\ool a<1\), types with \(a>\ool a\) get good \(A\).
\end{enumerate}
Moreover, types on the boundary are indifferent between their preferred options for the two goods, so
\begin{equation}\label{eq:boundaryindifference}
        U_A(a) = U_B(z(a)) \quad \text{for all} \quad a \in [\uul a,\ool a].  
\end{equation}
\end{proposition}

\begin{figure}[h!]
        \centering
        \begin{tikzpicture}[scale=0.75]
                \draw[->] (0,0) -- (5.1,0) node[right] {$a$};
                \draw[->] (0,0) -- (0,5.1) node[above] {$b$};
        
                \draw[darkgreen, very thick] (1.2,1.6) .. controls (1.8,2) .. (2.5,3)
                                        .. controls (3.1,3.8) and (3.4,4.5) .. (4,5);
                
                \draw[darkgreen, dashed] (1.2,0) -- (1.2,1.6);  
                \draw[darkgreen, dashed] (0,1.6) -- (1.2,1.6);  
        
                \fill[niceblue, opacity=0.3] (0,5) -- (0,1.6) -- (1.2,1.6)
                                        .. controls (1.8,2) .. (2.5,3)
                                        .. controls (3.1,3.8) and (3.4,4.5) .. (4,5) -- cycle;
        
                \fill[orange, opacity=0.3] (5,0) -- (1.2,0) -- (1.2,1.6)
                .. controls (1.8,2) .. (2.5,3)
                .. controls (3.1,3.8) and (3.4,4.5) .. (4,5) --(5,5) -- cycle;
        
                \node[darkgreen, anchor=south west, rotate=55] at (1.8,1.9) {\fontsize{14}{14}\selectfont $z(a)$};
        
                \node[darkgreen, anchor=south] at (0.9,3) {\fontsize{20}{20}\selectfont $B$};
                \node[darkgreen, anchor=south] at (3.4,1) {\fontsize{20}{20}\selectfont $A$};
                \node[darkgreen, anchor=south] at (0.6,0.4) {\fontsize{14}{14}\selectfont $\varnothing$};
                \node at (1.2, -0.5) {$\uul{a}$}; 
                \node at (-0.35, 1.6) {$\uul{b}$}; 
            \end{tikzpicture}
                    \captionsetup{width=0.5\textwidth}
            \caption{Types below the boundary (orange) choose good $A$ and types above it (blue) choose good $B$.}
            \label{fig:bdry}
\end{figure}

When all offered \(A\)- and \(B\)-options come with tolls, types with sufficiently
low values for both goods, i.e. \((a,b)<(\uul a,\uul b)\), do not participate. The choices of participating types are then described by an increasing boundary $z$
along which the good-specific indirect utilities are equal. Types below this
boundary choose good \(A\), while types above it choose good \(B\). Indeed, if a type \((a,b)\) lies below the boundary, then
\(b<z(a)\). Since \(U_A(a)=U_B(z(a))\), we have
\[
        U_A(a)=U_B(z(a))>U_B(b),
\]
so the type prefers good \(A\). Analogously, if \((a,b)\) lies above
the boundary, then \(b>z(a)\), and hence \(U_B(b)>U_A(a)\).

I now discuss how different combinations of tolls and damages induce different boundaries $z$. This is in contrast with the one-dimensional setting studied in Section \ref{sec:1d}, where every pattern of sorting agents into goods could be achieved using either instrument.

\paragraph{Tolls and damages sort agents differently.} For illustration, suppose that supplies add up to one, \(s_A+s_B=1\), and consider mechanisms that use only tolls and only damages. In the former case, we have $x(a,b)\equiv 1$, and so \eqref{eq:IC} requires that each good be given with a single toll, $c_A$ or $c_B$. Note that type-$(a,b)$ agents will select good $A$ if
\[
a-c_A > b-c_B, 
\]
and select good $B$ otherwise. This kind of screening can only lead to sorting patterns like the one illustrated in Figure \ref{fg:nomoney}, where agents get good $A$ if their types lie below a certain $45$-degree line. It cannot create a sorting pattern like that in Figure \ref{fg:costs}, where agents get good $A$ if their types lie below a ray from the origin, that is, if
\[
\frac{b}{a}< q,
\]
for some $q$, and get good $B$ otherwise. Such a sorting pattern can be achieved with damages, however. Consider a mechanism offering $A$ with a damage, $x=q<1$, and $B$ without it, $x=1$, with no tolls for either. The set of indifferent agents will then be given by:
\[
a\cdot q = b \quad \Rightarrow \quad \frac{b}{a}= q.
\]
Agents below and above this boundary will then choose goods $A$ and $B$, respectively. 

\begin{figure}[h!]
        \centering
        \begin{subfigure}[b]{0.4\linewidth}
            \centering
                    \begin{tikzpicture}[scale=0.7]

            \draw[darkdarkgreen, very thick] (1.2,0) -- (5,3.8) node[pos=0.7, above, sloped]{\fontsize{14}{14}\selectfont $z(a)$};
    
            \fill[niceblue, opacity=0.3] (0,5) -- (0,0) -- (1.2,0) -- (5,3.8) -- (5,5) -- cycle;
    
            \fill[orange, opacity=0.3] (1.2,0) -- (5,0) -- (5,3.8) -- cycle;
    
            \node[darkdarkgreen, anchor=south] at (1.8,2.5) {\fontsize{20}{20}\selectfont $B$};
            \node[darkdarkgreen, anchor=south] at (3.6,0.6) {\fontsize{20}{20}\selectfont $A$};

            \draw[->] (0,0) -- (5.1,0) node[right] {$a$};
            \draw[->] (0,0) -- (0,5.1) node[above] {$b$};
            
        \end{tikzpicture}              
            \caption{Figure \ref{fg:nomoney}: Sorting pattern implementable with tolls alone.}
            \label{fg:nomoney}
        \end{subfigure}
        \hspace{1.3cm}
        \begin{subfigure}[b]{0.4\linewidth}
            \centering
        \begin{tikzpicture}[scale=0.7]
            \draw[darkdarkgreen, very thick] (0,0) -- (5,2.7) node[pos=0.7, above, sloped]{\fontsize{14}{14}\selectfont $z(a)$};
    
            \fill[niceblue, opacity=0.3] (0,5) -- (0,0) -- (5,2.7) -- (5,5) -- cycle;
    
            \fill[orange, opacity=0.3] (0,0) -- (5,0) -- (5,2.7) -- cycle;

            \node[darkdarkgreen, anchor=south] at (1.8,3) {\fontsize{20}{20}\selectfont $B$};
            \node[darkdarkgreen, anchor=south] at (3.1,0.4) {\fontsize{20}{20}\selectfont $A$};
                                            \draw[->] (0,0) -- (5.1,0) node[right] {$a$};
                                            \draw[->] (0,0) -- (0,5.1) node[above] {$b$};
        \end{tikzpicture}         
            \caption{Figure \ref{fg:costs}: Sorting pattern whose implementation requires damages.}
            \label{fg:costs}
        \end{subfigure}
    \end{figure}

\paragraph{Why damages can be optimal.} As the following example shows, the optimal sorting pattern may sometimes require the use of damages:

\begin{example}\label{ex:1}
Suppose the designer cares only about welfare, i.e., $\gamma=0$; consider the density
        \vspace{-0.3cm}
\begin{center}
        \hspace{-2cm}
        \vspace{-0.3cm}
\begin{minipage}[c]{0.60\textwidth}
\[
f(a,b) =
\begin{cases}
\displaystyle
\varepsilon \,\frac{2}{\bigl(\tfrac12 - \varepsilon\bigr)^2}, 
& \text{if}\ \ b - a \ge \tfrac12 + \varepsilon,\\[1.5em]
\displaystyle
\frac{1}{3}\,\frac{2}{\varepsilon - \varepsilon^2}, 
& \text{if}\ \  b - a \in \bigl[\tfrac12,\tfrac12 + \varepsilon\bigr),\\[1.5em]
\displaystyle
\tfrac87\,\bigl(\tfrac23 - \varepsilon\bigr), 
& \text{if}\ \ b - a < \tfrac12,
\end{cases}
\]
\end{minipage}
\hspace{0.5cm}
\begin{minipage}[c]{0.32\textwidth}
\centering
\begin{tikzpicture}[scale=0.75]
    \fill[gray, opacity=0.3] 
        (0,0) -- (5,0) -- (5,5) -- (2.7,5) -- (0,2.3) -- cycle;

    \fill[red, opacity=0.6] 
        (0,2.3) -- (2.7,5) -- (2.3,5) -- (0,2.7) -- cycle;

    \fill[yellow, opacity=0.65] 
        (0,2.7) -- (0,5) -- (2.3,5) -- cycle;

    \draw[->] (0,0) -- (5.1,0) node[right] {$a$};
    \draw[->] (0,0) -- (0,5.1) node[above] {$b$};

    \draw[decorate, decoration={brace, amplitude=7pt}, thick]
        (0,0) -- (0,2.3)
        node[midway, left=6pt] {$\frac{1}{2}$};

    \draw[decorate, decoration={brace, amplitude=7pt}, thick]
        (0,2.3) -- (0,2.7)
        node[midway, left=6pt] {$\varepsilon$};
\end{tikzpicture}
\end{minipage}
\end{center}
and supplies given by $s_A = \frac23-\varepsilon$ and $s_B = \frac13+\varepsilon.$ For $\varepsilon>0$ sufficiently small, a mechanism using only tolls is not optimal.
\end{example}

The distribution from the example is illustrated in the figure to the right. The probability masses equal $\varepsilon$ in the yellow area, $1/3$ in the red area, and $2/3-\varepsilon$ in the gray area. The supply of good $B$ is chosen to exactly match the total mass of the yellow and red areas, while the supply of good $A$ matches the mass in the gray area.

Consider a mechanism for this distribution which uses only tolls but not damages. Discarding any supply of either good would not be helpful, so it is without loss to consider only mechanisms in which the whole supply is allocated. Without damages, this can only be achieved by giving out good $A$ for free, and giving good $B$ with a toll $c=1/2$. Indeed, this mechanism induces the sorting pattern illustrated in Figure \ref{fig:example1}, with types shaded in blue receiving good \(B\) and types shaded in orange receiving good \(A\). As discussed above, the boundary splitting the two regions is angled at $45$ degrees.

Now, notice that for agents in the strip between the solid and dashed lines in Figure \ref{fig:example1}, the surplus from getting good $B$ over getting good $A$ is at most $\varepsilon$. This is because most of their surplus is consumed by the toll $c=1/2$. Consider then the case where $\varepsilon\to 0$. As $\varepsilon$ falls, this surplus goes to zero for the whole $1/3$-mass of agents in the aforementioned strip. Similarly, the mass of agents above the dashed line, equal to $\varepsilon$, also tends to zero. Consequently, total welfare then tends to that which would have resulted from all agents getting good $A$ for free.

\begin{figure}[h!]
        \centering
        \begin{subfigure}[b]{0.4\linewidth}
        \centering
\begin{tikzpicture}[scale=0.8]
    \draw[->] (0,0) -- (5.1,0) node[right] {$a$};
    \draw[->] (0,0) -- (0,5.1) node[above] {$b$};

    \draw[darkdarkgreen, very thick] (0,2.3) -- (2.7,5);

    \fill[orange, opacity=0.3] 
        (0,0) -- (5,0) -- (5,5) -- (2.7,5) -- (0,2.3) -- cycle;
    \fill[niceblue, opacity=0.3] 
        (0,2.3) -- (0,5) -- (2.7,5) -- cycle;

    \draw[darkdarkgreen, dashed] (0,2.7) -- (2.3,5);

    \draw[decorate, decoration={brace, amplitude=7pt}, thick]
        (0,0) -- (0,2.3)
        node[midway, left=6pt] {$\frac{1}{2}$};

    \draw[decorate, decoration={brace, amplitude=7pt}, thick]
        (0,2.3) -- (0,2.7)
        node[midway, left=6pt] {$\varepsilon$};
\end{tikzpicture}

            \caption{Figure \ref{fig:example1}: The mechanism in Example \ref{ex:1} that gives good $B$ with a toll.}
            \label{fig:example1}
        \end{subfigure}
        \hspace{1.3cm}
        \begin{subfigure}[b]{0.4\linewidth}
            \centering
\begin{tikzpicture}[scale=0.8]
    \draw[->] (0,0) -- (5.1,0) node[right] {$a$};
    \draw[->] (0,0) -- (0,5.1) node[above] {$b$};

    \draw[darkgreen, very thick] (0,0) -- (1.78571,5);

    \draw[darkred, very thick] (0,2.7) -- (2.3,5);

    \fill[orange, opacity=0.3]
        (0,0) -- (5,0) -- (5,5) -- (1.78571,5) -- cycle;

    \fill[niceblue, opacity=0.3]
        (0,0) -- (1.78571,5) -- (0,5) -- cycle;

    \draw[decorate, decoration={brace, amplitude=7pt}, thick]
        (0,0) -- (0,2.3)
        node[midway, left=6pt] {$\frac{1}{2}$};

    \draw[decorate, decoration={brace, amplitude=7pt}, thick]
        (0,2.3) -- (0,2.7)
        node[midway, left=6pt] {$\varepsilon$};
\end{tikzpicture}

            \caption{Figure \ref{fg:betterexample}: The mechanism in Example \ref{ex:1} that damages good $B$.}
            \label{fg:betterexample}
        \end{subfigure}
    \end{figure}

By contrast, consider the mechanism that uses no tolls but damages good $B$ to the point where the resulting boundary satisfies both goods' supply constraints (Figure \ref{fg:betterexample}). Here too, total welfare of agents above the red line becomes negligible as $\varepsilon \to 0$. Note, however, that agents who are below the red line and far to the left of the green line now benefit substantially from getting good $B$ over getting good $A$ for free. These agents have a strong relative preference for $B$ over $A$, and thus strongly prefer even damaged $B$ to undamaged $A$. Since the mass of such agents does not go to zero as $\varepsilon$ decreases, this mechanism creates substantial gains from allocating good $B$ over $A$ even in the limiting case. This is in contrast to the toll-only mechanism, where the gains from allocating
\(B\) rather than \(A\) are largely absorbed by the toll differential used to
screen agents. Intuitively, this occurs because most recipients of good \(B\)
under the toll-based mechanism have value differences \(b-a\) very close to the
market-clearing toll differential \(c_B-c_A=1/2\), and therefore obtain little
surplus from receiving \(B\) rather than \(A\).

I now provide sufficient conditions under which damaging goods is and is not optimal. I discuss the economic meaning of these conditions after stating both results.

\paragraph{When are damages not optimal?} 
For analytical convenience, I establish the first result under the following technical restrictions.
\begin{assumption}\label{ass:1}
$F$ has a strictly positive, Lipschitz continuous density $f$ on $[0,1]^2$.
\end{assumption}
\begin{assumption}\label{ass:2}
The designer is restricted to quality rules $x:[0,1]^2\to [0,1]$ that are piecewise continuously differentiable in each dimension of the type.
\end{assumption}
Note that the latter assumption permits allocation rules to have finitely many discontinuities. 

\begin{theorem}\label{th:1}
Suppose Assumptions \ref{ass:1} and \ref{ass:2} hold and that the inverse anti-hazard rates of the conditional distributions,
\begin{equation}\label{eq:anitihazards}
\frac{F_{A|B}(a|b)}{f_{A|B}(a|b)}, \quad \frac{F_{B|A}(b|a)}{f_{B|A}(b|a)},
\end{equation}
are increasing in $a$ and $b$, and for each ratio at least one of the two monotonicities is strict. Then the optimal mechanism offers only two options: undamaged goods $A$ and $B$ with tolls of $c_A^*$ and $c_B^*$, respectively. These tolls are chosen so that the whole supply of both goods is allocated.
\end{theorem}

Thus, under the above conditions, the optimal mechanism replicates the competitive equilibrium allocation of goods: agents whose values for both goods are sufficiently low receive nothing, while all other agents get an undamaged version of one of the goods and pay the associated toll (Figure \ref{fig:th1}). While the optimality of this mechanism is immediate when tolls are welfare-neutral, the result asserts it remains optimal when the designer values a unit of toll revenue at any $\gamma\in [0,1]$ and so, in particular, when tolls are completely wasteful. Throughout the rest of the paper, I refer to this mechanism as the \emph{market-clearing toll mechanism} and write \(c_A^*,c_B^*\) for these market-clearing tolls. I also use 
\[
 z_0(a):=a+c_B^*-c_A^*,
\]
to denote the boundary generated by this mechanism and write $(\bar a^*,\bar b^*)$ for its end point.

\begin{figure}[h!]
    \centering
\begin{tikzpicture}[scale=0.70]
    \coordinate (Z) at (1.2,1.6);   
    \coordinate (T) at (4.6,5);     

    \draw[->, thick] (0,0) -- (5.1,0) node[right] {$a$};
    \draw[->, thick] (0,0) -- (0,5.2) node[above] {$b$};

    \draw[thick] (1.2,0.08) -- (1.2,-0.08)
        node[below] {$c_A^*$};

    \draw[thick] (0.08,1.6) -- (-0.08,1.6)
        node[left] {$c_B^*$};

    \draw[darkgreen, very thick] (Z) -- (T);

    \draw[darkdarkgreen, very thick] (1.2,0) -- (Z);
    \draw[darkdarkgreen, very thick] (0,1.6) -- (Z);

    \fill[niceblue, opacity=0.3]
        (0,5) -- (0,1.6) -- (Z) -- (T) -- cycle;

    \fill[orange, opacity=0.3]
        (5,0) -- (1.2,0) -- (Z) -- (T) -- (5,5) -- cycle;

    \node[darkdarkgreen, font=\fontsize{20}{20}\selectfont] at (1.5,3.8) {$B$};
    \node[darkdarkgreen, font=\fontsize{20}{20}\selectfont] at (3.3,1.6) {$A$};
    \node[darkdarkgreen, font=\large] at (0.6,0.8) {$\varnothing$};
\end{tikzpicture}
        \caption{Optimal allocation under the conditions of Theorem \ref{th:1}.}
        \label{fig:th1}
\end{figure}

Note also that when the values for the two goods are independent, $F(a,b)=F_A(a)F_B(b)$, the distributional condition in the theorem reduces to the requirement that
\[
\frac{F_A(a)}{f_A(a)}, \quad \frac{F_B(b)}{f_B(b)}
\]
be strictly increasing. This holds, for example, when the marginals are uniform, truncated normal,
truncated decreasing exponential, or Beta\((\alpha,\beta)\) distributions with
\(\alpha,\beta\geq 1\). Before providing an intuition for this distributional condition, I state Theorem \ref{thm:2}, which provides conditions under which the optimal mechanism \emph{does} use damages.

\paragraph{When are damages optimal?} Define the following objects for \(a\in[c_A^*,1]\) and \(b\in[c_B^*,1]\):
\begin{equation}\label{eq:definitiosssspartcon}
        P_A^a
        :=
        \int_0^{\min\{1,z_0(a)\}} f(a,t)\,dt,
        \quad
        P_B^b
        :=
        \int_0^{\min\{1,z_0^{-1}(b)\}} f(t,b)\,dt,
        \quad
        P_{AB}
        :=
        \int_{c_A^*}^{{\bar a}^*}
        f(a,z_0(a))\,da.
\end{equation}
\(P_A^a\) is the density of agents who value good $A$ at $a$ and choose it under the market-clearing toll mechanism; \(P_B^b\) is defined analogously. \(P_{AB}\) measures the density of agents along the interior \(A\)-\(B\) boundary (Figure \ref{fg:kinkint1}).

\begin{theorem}\label{thm:2}
Suppose Assumption \ref{ass:1} holds and the market-clearing toll for good \(B\) satisfies \(0<c_B^*<1\). Suppose also that for some \(\tilde b\in(c_B^*,1)\), we have
\begin{equation}\label{eq:local-damage-condition-main}
        \int_{c_B^*}^{\tilde b}
        \bigl((1-\gamma)\tilde b-b\bigr)P_B^b\,db
        >
        (1-\gamma)
        \left(
        \alpha s_A+\beta s_B
        \right),
\end{equation}
where
\begin{equation}\label{eq:alphabetadef}
        \alpha
        =
        \frac{
        P_B^{c^*_B}\left(P_{AB}(\tilde b-c_B^*)-Q\right)
        }{
        (P_A^{c^*_A}+P_{AB})(P_B^{c^*_B}+P_{AB})-P_{AB}^2
        },
\qquad
        \beta
        =
        \frac{
        P_A^{c^*_A}Q
        +(P_A^{c^*_A}+P_{AB})P_B^{c^*_B}(\tilde b-c_B^*)
        }{
        (P_A^{c^*_A}+P_{AB})(P_B^{c^*_B}+P_{AB})-P_{AB}^2
        },
\end{equation}
and
\[
        Q
        :=
        \int_{c_A^*}^{{\bar a}^*}
        \bigl(\tilde b-z_0(a)\bigr)_+
        f(a,z_0(a))\,da.
\]
Then the optimal mechanism uses damages.
\end{theorem}

To understand this result, observe first that the market-clearing toll mechanism
is optimal among mechanisms that do not use damages. I prove the following lemma in the appendix.

\begin{lemma}\label{lemma:bestundamaged}
Suppose Assumption \ref{ass:1} holds. Then, for every
\(\gamma\in[0,1]\), the market-clearing toll mechanism is optimal
among mechanisms that do not use damages.
\end{lemma}

Intuitively, when damages are not used, each kind of good must be offered at a single toll. The mechanism with market-clearing tolls then maximizes welfare among all feasible ones: if either good's supply constraint were slack, its toll could be lowered slightly, increasing agents' utility while preserving feasibility. Also, by the standard efficiency property of competitive equilibrium, these tolls maximize allocative efficiency:
\begin{equation}\label{eq:allocceff}
        \int
        \left[
        \mathbb{1}_{y(a,b)=A}x(a,b)a
        +
        \mathbb{1}_{y(a,b)=B}x(a,b)b
        \right]
        dF(a,b).
\end{equation}
Since the designer's objective for any \(\gamma\in[0,1]\) is a nonnegative linear combination of welfare and allocative efficiency, the market-clearing toll mechanism is optimal among undamaged mechanisms for every \(\gamma\in[0,1]\).

It therefore suffices to show that a local perturbation introducing damages can
improve on the market-clearing toll mechanism. The proof constructs such a
perturbation in two steps. The first is the \emph{new-option step}, in which the designer adds a slightly damaged option for good $B$ to the market-clearing menu. This new
option provides \(B\) at quality \(1-\varepsilon\) and offers a toll discount
of \(\varepsilon\tilde b\). Agents who previously chose \(B\) prefer the new
damaged option if and only if
\[
        (1-\varepsilon)b-(c_B^*-\varepsilon\tilde b)
        >
        b-c_B^*, \quad \Longleftrightarrow \quad b<\tilde b.
\]
Thus, among old \(B\)-choosers, those with \(b<\tilde b\) get extra rents from this new option. However, the damaged option also attracts some agents who previously chose
either good \(A\) or nothing. Such agents are shaded dark-blue in Figure
\ref{fg:kinkint2}. Because of these additional \(B\)-consumers, demand for good \(B\) now exceeds its
supply. This necessitates a \emph{toll-correction step} in which the designer changes the tolls on goods \(A\) and \(B\) to restore market clearing (Figure \ref{fg:kinkint3}).

\begin{figure}[h!]
        \centering
        \begin{subfigure}[b]{0.31\linewidth}
            \centering
\begin{tikzpicture}[scale=0.70]
    \coordinate (Z) at (1.2,1.6);   
    \coordinate (T) at (4.6,5);     

    \coordinate (A0) at (3.15,0);
    \coordinate (A1) at (3.15,3.55); 

    \coordinate (B0) at (0,3.05);
    \coordinate (B1) at (2.65,3.05); 

    \fill[niceblue, opacity=0.3]
        (0,5) -- (0,1.6) -- (Z) -- (T) -- cycle;

    \fill[orange, opacity=0.3]
        (5,0) -- (1.2,0) -- (Z) -- (T) -- (5,5) -- cycle;

    \draw[darkred, very thick] (A0) -- (A1)
        node[pos=0.52, right, xshift=4pt, font=\large] {$P_A^a$};

    \draw[niceblue, very thick] (B0) -- (B1)
        node[pos=0.48, above, yshift=4pt, font=\large] {$P_B^b$};

    \draw[thick] (3.15,0) -- (3.15,-0.12) node[below] {$a$};
    \draw[thick] (0,3.05) -- (-0.12,3.05) node[left] {$b$};

    \draw[darkgreen, very thick] (0,1.6) -- (Z);
    \draw[darkgreen, very thick] (1.2,0) -- (Z);
    \draw[darkgreen, very thick] (Z) -- (T)
        node[pos=0.58, above, yshift=7pt, font=\large] {$P_{AB}$};

    \draw[->, thick] (0,0) -- (5.1,0) node[right] {};
    \draw[->, thick] (0,0) -- (0,5.2) node[above] {};
\end{tikzpicture}
        \vspace{0.24cm}
        \caption{Figure \ref{fg:kinkint1}}
        \label{fg:kinkint1}
        \end{subfigure}
        \begin{subfigure}[b]{0.31\linewidth}
        \centering
\begin{tikzpicture}[scale=0.70]
    \coordinate (Z) at (1.2,1.6);   
    \coordinate (T) at (4.6,5);     

    \coordinate (K) at (2.85,3.25);

    \coordinate (R) at (1.2,0.75);

    \coordinate (H) at (0,0.75);

    \fill[niceblue, opacity=0.3]
        (0,5) -- (0,1.6) -- (Z) -- (T) -- cycle;

    \fill[orange, opacity=0.3]
        (5,0) -- (1.2,0) -- (R) -- (K) -- (T) -- (5,5) -- cycle;

    \fill[niceblue, opacity=0.6]
        (0,1.6) -- (Z) -- (K) -- (R) -- (H) -- cycle;

    \draw[darkgreen, very thick] (0,1.6) -- (Z);
    \draw[darkgreen, very thick] (1.2,0) -- (Z);
    \draw[darkgreen, very thick] (Z) -- (T);

    \draw[red, very thick] (R) -- (K);
    \draw[red, very thick] (H) -- (R);

    \fill[red] (K) circle (2.5pt);

    \draw[thick] (0,3.25) -- (-0.12,3.25) node[left] {$\tilde b$};

    \draw[->, thick] (0,0) -- (5.1,0) node[right] {};
    \draw[->, thick] (0,0) -- (0,5.2) node[above] {};
\end{tikzpicture}
        \vspace{0.7cm}
        \caption{Figure \ref{fg:kinkint2}}
        \label{fg:kinkint2}
        \end{subfigure}
        \begin{subfigure}[b]{0.31\linewidth}
            \centering
\begin{tikzpicture}[scale=0.70]
    \coordinate (Z) at (1.2,1.6);      
    \coordinate (T) at (4.6,5);        

    \coordinate (V) at (1.45,0);       
    \coordinate (R) at (1.45,1.00);    
    \coordinate (H) at (0,1.00);       

    \coordinate (Knew) at (2.65,3.25);
    \coordinate (Tnew) at (4.40,5);

    \fill[niceblue, opacity=0.3]
        (0,5) -- (H) -- (R) -- (Knew) -- (Tnew) -- cycle;

    \fill[orange, opacity=0.3]
        (5,0) -- (V) -- (R) -- (Knew) -- (Tnew) -- (5,5) -- cycle;

    \draw[darkgreen, thick, dashed] (0,1.6) -- (Z);
    \draw[darkgreen, thick, dashed] (1.2,0) -- (Z);
    \draw[darkgreen, thick, dashed] (Z) -- (T);

    \draw[red, very thick] (H) -- (R);
    \draw[red, very thick] (V) -- (R);
    \draw[red, very thick] (R) -- (Knew);
    \draw[red, very thick] (Knew) -- (Tnew);

    \draw[thick] (1.45,0) -- (1.45,-0.12)
        node[below] {$c_A^*+\alpha\varepsilon$};

    \draw[->, thick] (0,0) -- (5.1,0) node[right] {};
    \draw[->, thick] (0,0) -- (0,5.2) node[above] {};
\end{tikzpicture}
        \caption{Figure \ref{fg:kinkint3}}
        \label{fg:kinkint3}
        \end{subfigure}
\caption{The first subfigure illustrates the definitions in
\eqref{eq:definitiosssspartcon}. The second shows the effect of introducing the
damaged option to the market-clearing toll menu. The last subfigure illustrates the toll-correction step
that restores the supply constraints after the damaged option is added.}
\label{figure:fig777}
\end{figure}

Condition \eqref{eq:local-damage-condition-main} then compares the welfare
consequences of these two steps. The left-hand side accounts for the gains to
old \(B\)-choosers who switched to the damaged option. Such an agent loses quality worth
\(\varepsilon b\), but receives a toll reduction worth
\(\varepsilon\tilde b\). Since the designer values a unit of toll revenue at
\(\gamma\), the net first-order gain from such a type is
\[
        (1-\gamma)\tilde b-b.
\]
The right-hand side captures the cost
of the toll-correction step. Restoring market clearing requires raising the
original tolls by \(\alpha \varepsilon\) and \(\beta \varepsilon\), and the net social cost of a
one-unit toll increase is \(1-\gamma\). Hence, the total first-order cost of the
toll correction is
\[
        (1-\gamma)
        \left(
        \alpha s_A+\beta s_B
        \right).
\]

The requirement in Theorem \ref{thm:2} becomes especially simple when the total
supply adds up to one and the designer does not value tolls:
\begin{corollary}\label{cor:damagesnormalized}
Suppose Assumption \ref{ass:1} holds, tolls are fully wasteful, \(\gamma=0\), supplies add up to one, \(s_A+s_B=1\), and the
market-clearing toll for good \(B\) satisfies \(0<c_B^*<1\). Then damages are
optimal if for some \(\tilde b\in(c_B^*,1)\) we have
\begin{equation}\label{eq:covariancecorr}
        \operatorname{Cov}
        \left(
        \frac{F_{A\mid B}(b-c_B^*\mid b)}
        {f_{A\mid B}(b-c_B^*\mid b)}, \
        (\tilde b-b)_+
        \ \middle|\ a=b-c_B^*
        \right)
        >0.
\end{equation}
\end{corollary}
Formally, the conditional covariance is taken with respect to the probability
measure on \([c_B^*,1]\) with density proportional to
\(f(b-c_B^*,b)\). I now explain this covariance condition intuitively.

First, note that since
supplies add up to \(1\) and the market-clearing toll for good \(B\) is
positive, the other toll is zero: \(c_A^*=0\). Indeed, if both tolls were
positive, some agents would not choose either good and some supply would remain
unallocated. 

Let us then consider a perturbation that introduces a slightly
damaged option in this setting, analogous to the one used for Theorem
\ref{thm:2}. Figures \ref{fg:corrperturb1} and \ref{fg:corrperturb3}
illustrate the new-option and toll-correction steps for this perturbation. 
We first focus on the new-option step, which has two effects. The first is the
\emph{inframarginal effect}: old \(B\)-choosers with \(b<\tilde b\) benefit
from the damaged option. The second is the \emph{switcher effect}:
some agents who barely preferred \(A\) now switch into \(B\); these switchers are close to the old boundary \(z_0\).

\begin{figure}[h!]
        \centering
        \begin{subfigure}[c]{0.31\linewidth}
            \centering
\begin{tikzpicture}[scale=0.70]
    \coordinate (Z) at (0,1.50);     
    \coordinate (T) at (3.50,5);     

    \coordinate (K) at (2.75,4.25);

    \coordinate (H) at (0,0.55);

    \fill[niceblue, opacity=0.3]
        (0,5) -- (Z) -- (T) -- cycle;

    \fill[orange, opacity=0.3]
        (5,0) -- (0,0) -- (H) -- (K) -- (T) -- (5,5) -- cycle;

    \fill[niceblue, opacity=0.6]
        (Z) -- (K) -- (H) -- cycle;

    \draw[red, very thick] (H) -- (K);

    \draw[darkgreen, very thick] (Z) -- (T);

    \fill[red] (K) circle (2.5pt);

    \draw[thick] (0,4.25) -- (-0.12,4.25) node[left] {$\tilde b$};

    \draw[thick] (0,1.50) -- (-0.12,1.50) node[left] {$c_B^*$};

    \draw[->, thick] (0,0) -- (5.1,0) node[right] {};
    \draw[->, thick] (0,0) -- (0,5.2) node[above] {};
\end{tikzpicture}
            \caption{Figure \ref{fg:corrperturb1}}
            \label{fg:corrperturb1}
        \end{subfigure}
        \begin{subfigure}[c]{0.31\linewidth}
            \centering
\begin{tikzpicture}[scale=0.70]
    \coordinate (Z) at (0,1.50);      
    \coordinate (T) at (3.50,5);      

    \coordinate (Hnew) at (0,0.95);

    \coordinate (Knew) at (2.45,4.25);
    \coordinate (Tnew) at (3.20,5);

    \fill[niceblue, opacity=0.3]
        (0,5) -- (Hnew) -- (Knew) -- (Tnew) -- cycle;

    \fill[orange, opacity=0.3]
        (5,0) -- (0,0) -- (Hnew) -- (Knew) -- (Tnew) -- (5,5) -- cycle;

    \draw[darkgreen, thick, dashed] (Z) -- (T);

    \draw[red, very thick] (Hnew) -- (Knew);
    \draw[red, very thick] (Knew) -- (Tnew);

    \draw[thick] (0,4.25) -- (-0.12,4.25) node[left] {$\tilde b$};

    \fill[red] (Knew) circle (2.5pt);

    \draw[->, thick] (0,0) -- (5.1,0) node[right] {};
    \draw[->, thick] (0,0) -- (0,5.2) node[above] {};
\end{tikzpicture}
            \caption{Figure \ref{fg:corrperturb3}}
            \label{fg:corrperturb3}
        \end{subfigure}
        \begin{subfigure}[c]{0.31\linewidth}
            \centering
\begin{tikzpicture}[scale=0.70]

    \coordinate (Z) at (0,1.50);
    \coordinate (T) at (3.50,5);
    \coordinate (B) at (2.15,3.65);
    \coordinate (Bleft) at (0,3.65);

    \fill[niceblue, opacity=0.3] (0,5) -- (Z) -- (T) -- cycle;
    \fill[orange, opacity=0.3] (0,0) -- (5,0) -- (5,5) -- (T) -- (Z) -- cycle;

    \draw[darkgreen, very thick] (Z) -- (T);
    \draw[black, very thick] (Bleft) -- (B);
    \fill[black] (B) circle (3.5pt);

    \draw[thick] (0,1.50) -- (-0.12,1.50) node[left] {$c_B^*$};
    \draw[thick] (0,3.65) -- (-0.12,3.65) node[left] {$b$};

    \draw[->, thick] (0,0) -- (5.1,0) node[right] {};
    \draw[->, thick] (0,0) -- (0,5.2) node[above] {};
\end{tikzpicture}%
            \caption{Figure \ref{fg:corrperturb2}}
            \label{fg:corrperturb2}
        \end{subfigure}
\caption{The first subfigure shows how the type split changes after the
damaged \(B\)-option is added but before tolls are readjusted. The second
subfigure shows how the type split changes after the toll-correction step. The
third subfigure illustrates inframarginal \(B\)-choosers
and switchers on a horizontal slice.}
\label{fig:corrperturb}
\end{figure}

Note that the inframarginal effect is beneficial: it gives additional rents to old
\(B\)-choosers who take the damaged option. The switcher effect, on the other hand, is costly:
additional switchers increase demand for \(B\), requiring a larger increase in
the \(B\)-toll during the toll-correction step. Thus, to assess whether the perturbation is worth it, we compare these two effects. To that end, fix \(b<\tilde b\) and consider the horizontal slice of the type space illustrated in Figure \ref{fg:corrperturb2}. Under the original mechanism, types with $a<b-c_B^*$ choose \(B\), while the marginal type \(a=b-c_B^*\) is indifferent between \(A\) and \(B\). The inframarginal beneficiaries are therefore the types in the black segment, whose density-weighted mass along the slice is $\int_0^{b-c_B^*} f(t,b)\,dt.$ The switchers come from a small neighborhood of the boundary point shown in black, so their mass from a marginal movement of the boundary is proportional to $f(b-c_B^*,b).$ The ratio of inframarginal beneficiaries to marginal switchers on the slice is therefore
\begin{equation}\label{eq:antihazarddd}
        \frac{
        \int_0^{b-c_B^*} f(t,b)\,dt
        }
        {
        f(b-c_B^*,b)
        }
=
        \frac{F_{A\mid B}(b-c_B^*\mid b)}
        {f_{A\mid B}(b-c_B^*\mid b)}.
\end{equation}
Thus, the ratio in \eqref{eq:antihazarddd} measures the strength of the inframarginal effect relative to the switcher effect on the horizontal slice indexed by \(b\). Condition
\eqref{eq:covariancecorr} then says that the perturbation is beneficial when
the ratio in \eqref{eq:antihazarddd} is positively related to
\[
        (\tilde b-b)_+.
\]
Intuitively, this term measures the gain from taking the damaged option over the
old \(B\)-option along the slice with value \(b\). It is largest for low values
of \(b\), since these agents are more willing to accept damage to good \(B\) in
exchange for a lower toll. Consequently, condition \eqref{eq:covariancecorr}
requires that the ratio of the inframarginal to switcher effects be more
favorable on the horizontal slices where the damaged option has the strongest
effect.

\paragraph{Intuition for the anti-hazard-rate condition in Theorem \ref{th:1}.}
The covariance condition also clarifies the role of the monotonicity condition
on the conditional inverse anti-hazard rates in \eqref{eq:anitihazards}. This
condition ensures that
\begin{equation}\label{eq:hazardfunctionaaa}
        b\mapsto
        \frac{F_{A\mid B}(b-c_B^*\mid b)}
        {f_{A\mid B}(b-c_B^*\mid b)}
\end{equation}
is increasing along the old boundary \(z_0\). Since \(b\mapsto(\tilde b-b)_+\) is decreasing, the covariance in \eqref{eq:covariancecorr} is then negative, and introducing a damaged option is not locally beneficial.

\paragraph{Affiliated values and the optimality of damages.}
The monotonicity condition on the inverse anti-hazard ratio in \eqref{eq:antihazarddd} is related to affiliation. Following \citet{milgromweber}, values for goods $A$ and $B$ are affiliated when the joint density is log-supermodular. If $f$ is strictly positive and twice continuously differentiable, this is equivalent to
\begin{equation}
\frac{\partial^2}{\partial a\,\partial b}\log f(a,b) \geq 0.
\end{equation}
The reverse inequality corresponds to log-submodularity, or negative affiliation.

Note that the conditions of Theorem \ref{th:1}, which require increasing conditional
inverse anti-hazard rates, rule out strictly positive affiliation
between the two values but are compatible with negative dependence. To see this, suppose \(f\) is differentiable; then:
\begin{equation}\label{eq:aff}
\frac{d}{db}\frac{F_{A\mid B}(a\mid b)}{f_{A\mid B}(a\mid b)}
=
\frac{1}{f(a,b)}
\int_0^a
f(t,b)
\left(
\frac{\partial}{\partial b}\log f(t,b)
-
\frac{\partial}{\partial b}\log f(a,b)
\right)
\,dt .
\end{equation}
Since positive affiliation makes \(\partial_b \log f(a,b)\) increasing in \(a\), the integrand in \eqref{eq:aff} is negative, so affiliation pushes the inverse anti-hazard ratio downward as \(b\) increases. Negative affiliation reverses this inequality: the integrand is positive, and
the inverse anti-hazard ratio is increasing in \(b\), as required by the
anti-hazard-rate condition in Theorem \ref{th:1}.

Moreover, positive affiliation makes damages more likely to be optimal. Along the old boundary \(z_0\), affiliation tends to make the ratio in \eqref{eq:antihazarddd} lower at high values of \(b\), and therefore relatively higher at low values of \(b\). Since \((\tilde b-b)_+\) is also highest at low values of \(b\), this makes the covariance in \eqref{eq:covariancecorr} more likely to be positive.

The connection between affiliation and the optimality of damages has a simple
economic interpretation. If values are positively affiliated, then a high value
of \(b\) is good news about \(a\). Thus, on high-\(b\) slices, old
\(B\)-choosers tend to have relatively high values of \(a\), placing many of
them close to the old \(A\)-\(B\) boundary. Making \(B\) more attractive on
these slices therefore creates relatively many switchers per inframarginal
beneficiary. Conversely, a low value of \(b\) is bad news about \(a\), so old
\(B\)-choosers on low-\(b\) slices are more likely to lie far to the left of
the boundary. Low-\(b\) slices therefore contain many inframarginal takers
relative to switchers, whereas high-\(b\) slices contain few. This is precisely
the pattern that makes damages attractive. Negative affiliation reverses it, making the market-clearing toll mechanism more likely to be optimal.

The following example illustrates how the sign and strength of affiliation affect the optimality of damages.

\begin{example}
\label{ex:affiliated-profitable-damages}
Assume tolls are fully wasteful, so \(\gamma=0\). For each \(\lambda\in\mathbb{R}\), consider the density
\[
        f_\lambda(a,b)
        \propto
        e^{\lambda ab}
        \qquad (a,b)\in[0,1]^2 .
\]
Fix \(c_B^*\in(0,1)\). For each \(\lambda\), choose supplies equal to the
masses cleared by the tolls \(c_A^*=0\) and \(c_B^*\). Then, for all \(\lambda\leq 0\), the optimal mechanism
does not use damages, while, for all sufficiently large \(\lambda\), the optimal
mechanism uses damages.
\end{example}

Indeed, \(\lambda\) directly indexes the strength of affiliation, since $\frac{\partial^2}{\partial a\,\partial b}\log f_\lambda(a,b)
        =
        \lambda .$

\begin{remark}
The strength and direction of affiliation reflect the relative importance of horizontal and vertical components of preferences. My analysis therefore yields different prescriptions depending on which component is likely to dominate in a particular setting. 

Consider, for example, routine administrative appointments, such as vehicle registration renewals, for which agents can choose among several geographically dispersed offices. In this case, preferences are mostly horizontal: overall need for an appointment varies little, but people prefer offices located closer to where they live; indeed, preferences for proximity naturally generate negative affiliation in values across offices. In such settings, tolls alone are more likely to be optimal: more congested offices should use higher booking fees or walk-in queues rather than allowing waitlists to grow long.

By contrast, consider public housing programs with broad eligibility criteria, where households may differ substantially in their overall need for subsidized housing.\footnote{In many European countries, affordable housing is not reserved for the most disadvantaged households but is available to a much wider segment of the population \citep{whitehead2007social}.
} Although they also have geographic preferences across developments, these may be relatively weak compared with this vertical component of demand, generating positive affiliation in values. In such settings, differentiated waitlists may be optimal, possibly in combination with toll instruments such as differential rent subsidies.
\end{remark}

\section{Proof of Theorem \ref{th:1}}\label{sec:proofth1}

This section gives the main steps of the proof of Theorem \ref{th:1}; I show the referenced lemmas in the appendix. The argument is based on the
boundary representation from the previous section: any implementable mechanism
is summarized by good-specific indirect utilities \(U_A\) and \(U_B\), together
with a boundary \(z\) that sorts types between the two goods. This lets us treat
the two-dimensional problem as two endogenously connected one-dimensional screening problems.

We first make a useful normalization. Recall that the terminal point of the
boundary lies on the boundary of the unit square: either \(\ool a=1\), so the
boundary reaches the right edge first, or \(\ool b=1\), so it reaches the top
edge first. Since the boundary can equivalently be parametrized by its inverse
\(z^{-1}(b)\) on \([\uul b,\ool b]\), we can break symmetry and without loss
assume that \(\ool b=1\).

\paragraph{Reformulating welfare.} Throughout most of the proof, I show the result for a purely welfarist objective, corresponding to the special case \(\gamma=0\); I then show that if the market-clearing toll mechanism is optimal when tolls have no social value, it remains optimal when the designer puts positive weight on toll revenue. I begin by rewriting aggregate welfare in terms of the boundary \(z\) and the \(A\)-indirect utility \(U_A\). To that end, define the \emph{extended boundary} \(\hat z\) by
\[
\hat{z}(a) =
\begin{cases}
0, & \text{if } a < \uul a,\\
z(a), & \text{if } a \in [\uul a,\, \ool a],\\
1, & \text{if } a > \ool a.
\end{cases}
\]
That is, $\hat z$ equals $z$ on the latter's domain, takes value zero below it and takes value 1 above it (Figure \ref{fig:ccccc}). Throughout, I use $\hat z^{-1}$ to denote its generalized inverse.

\begin{figure}[h!]
    \centering
    \begin{subfigure}[b]{0.3\linewidth}
        \centering
\begin{tikzpicture}[scale=0.70]
    \draw[->] (0,0) -- (5.1,0) node[right] {$a$};
    \draw[->] (0,0) -- (0,5.1) node[above] {$b$};

    \draw[red, very thick] (0,0) -- (1.2,0);

    \coordinate (w1) at (1.4,1.9);
    \coordinate (w2) at (2.05,2.8);
    \coordinate (w3) at (2.5,3.75);
    \coordinate (w4) at (2.9,4.5);

    \draw[red, very thick] plot[smooth, tension=0.7] coordinates {(1.2,1.6) (w1) (w2) (w3) (w4) (3.325,5)};

    \draw[red, very thick] (3.325,5) -- (5,5);

    \draw[darkdarkgreen, thick, dashed] (1.2,0) -- (1.2,1.6);
    \draw[darkdarkgreen, thick, dashed] (0,1.6) -- (1.2,1.6);

    \fill[niceblue, opacity=0.3]
        (0,5) -- (0,1.6) -- plot[smooth, tension=0.7] coordinates {(1.2,1.6) (w1) (w2) (w3) (w4) (3.325,5)} -- cycle;

    \fill[orange, opacity=0.3]
        (5,0) -- (1.2,0) -- (1.2,1.6) -- plot[smooth, tension=0.7] coordinates {(1.2,1.6) (w1) (w2) (w3) (w4) (3.325,5)} -- (5,5) -- cycle;

    \node[red, anchor=south west, rotate=58] at (1.7,2.4) {\fontsize{14}{14}\selectfont $\hat z(a)$};
    \node at (1.2, -0.5) {$\uul{a}$};
    \node at (-0.35, 1.6) {$\uul{b}$};
\end{tikzpicture}
        \caption{Figure \ref{fig:ccccc}: Extended boundary $\hat z$.}
        \label{fig:ccccc}
        \vspace{0.42cm}
    \end{subfigure}
    \hspace{2cm}
    \begin{subfigure}[b]{0.3\linewidth}
        \centering
        \begin{tikzpicture}[scale=0.70]
            \draw[->] (0,0) -- (5.1,0) node[right] {$a$};
            \draw[->] (0,0) -- (0,5.1) node[above] {$b$};

            \draw[darkgreen, dashed] (1.2,0) -- (1.2,1.6);
            \draw[darkgreen, dashed] (0,1.6) -- (1.2,1.6);

            \fill[niceblue, opacity=0.3] (0,5) -- (0,1.6) -- (1.2,1.6)
                                .. controls (1.8,2) .. (2.5,3)
                                .. controls (3.1,3.8) and (3.4,4.5) .. (4,5) -- cycle;

            \fill[orange, opacity=0.3] (5,0) -- (1.2,0) -- (1.2,1.6)
            .. controls (1.8,2) .. (2.5,3)
            .. controls (3.1,3.8) and (3.4,4.5) .. (4,5) -- (5,5) -- cycle;

            \draw[darkred, thick] (1.9,2.2) -- (1.9,0);
            \draw[darkred, thick] (1.9,2.2) -- (0,2.2);

            \draw[darkred, thick] (2.5,3) -- (2.5,0);
            \draw[darkred, thick] (2.5,3) -- (0,3);

            \draw[darkred, thick] (3.5,4.4) -- (3.5,0);
            \draw[darkred, thick] (3.5,4.4) -- (0,4.4);

            \draw[darkgreen, very thick] (1.2,1.6) .. controls (1.8,2) .. (2.5,3)
                                .. controls (3.1,3.8) and (3.4,4.5) .. (4,5);

            \node at (1.2, -0.5) {$\uul{a}$};
            \node at (-0.35, 1.6) {$\uul{b}$};
        \end{tikzpicture}
        \caption{Figure \ref{fig:Lshapes}: Agents on the same $L$-shaped curves have equal utilities.}
        \label{fig:Lshapes}
    \end{subfigure}
\end{figure}
\begin{lemma}\label{lem:objintbyparts}
Consider a mechanism with a boundary $z$ and $A$-indirect utility $U_A$. Total welfare under this mechanism is then given by:
\begin{equation}\label{eq:Wprime}
U_A(1) -\int_0^1 U_A'(a) \cdot F(a,\hat z(a))da.
\end{equation}
\end{lemma}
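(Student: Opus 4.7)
The plan is to express total welfare as a sum of integrals of agents' equilibrium utilities over the $A$-, $B$-, and non-participation regions, eliminate $U_B$ using the boundary indifference condition, and recognize the resulting expression as an integration by parts of $U_A$ against $F(a, \hat z(a))$.

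By Proposition \ref{prop:boundary}, the equilibrium utility of type $(a,b)$ equals $U_A(a)$ in the $A$-region, $U_B(b)$ in the $B$-region, and $0$ on the non-participation region. Assuming $\ool b = 1$ as permitted, I would write
\[
W = \int_{\uul a}^1 U_A(a)\!\int_0^{\hat z(a)}\! f(a,b)\,db\,da \;+\; \int_{\uul b}^1 U_B(b)\!\int_0^{z^{-1}(b)}\! f(a,b)\,da\,db.
\]
The indifference condition \eqref{eq:boundaryindifference} gives $U_B(b) = U_A(z^{-1}(b))$ on $[\uul b, 1]$, so the change of variables $b = z(t)$ rewrites the second integral as $\int_{\uul a}^{\ool a} U_A(t)\, z'(t) \int_0^t f(s, z(t))\, ds\, dt$, putting everything in terms of $U_A$ alone.

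The central observation is that, by the chain rule,
\[
\tfrac{d}{da} F(a, \hat z(a)) \;=\; \int_0^{\hat z(a)} f(a,b)\, db \;+\; \hat z'(a) \int_0^a f(s, \hat z(a))\, ds
\]
wherever $\hat z$ is differentiable. On $[\uul a, \ool a]$ these two summands are precisely the integrands appearing in the $A$- and (transformed) $B$-integrals above; on $(\ool a, 1]$ we have $\hat z \equiv 1$ and $\hat z' \equiv 0$, so the right-hand side reduces to $\int_0^1 f(a,b)\,db$, matching the $A$-integrand there. Summing then gives $W = \int_0^1 U_A(a)\, \tfrac{d}{da} F(a, \hat z(a))\, da$.

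A Stieltjes integration by parts then produces $W = U_A(1) F(1,1) - \int_0^1 U_A'(a) F(a,\hat z(a))\, da$, and since $F(1,1) = 1$ this is exactly \eqref{eq:Wprime}. The one delicate point is the jump of $\hat z$ at $a = \uul a$ (from $0$ to $\uul b$), which places an atom of size $F(\uul a, \uul b)$ in the measure $d[F(a, \hat z(a))]$; this atom is annihilated because $U_A(\uul a) = 0$. The normalization $\ool b = 1$ rules out an analogous jump at $\ool a$, which is where that convention enters the argument.
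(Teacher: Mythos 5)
Your proof is correct and follows essentially the same route as the paper: decompose welfare over the $A$- and $B$-regions, use \eqref{eq:boundaryindifference} and the change of variables $b = z(t)$ to push everything onto $U_A$, recognize the combined integrand as $\tfrac{d}{da}F(a,\hat z(a))$, and integrate by parts, with the atom at $\uul a$ killed by $U_A(\uul a)=0$. The only cosmetic difference is that the paper splits the $A$-integral at $\ool a$ and performs two ordinary integrations by parts whose boundary terms cancel at $\ool a$, whereas you handle it in one pass as a Stieltjes integral; both are valid and equivalent.
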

 
To see why welfare can be expressed using $z$ and $U_A$, note that the utility of an agent choosing good \(A\) depends only on \(a\), while the utility of an agent choosing good \(B\) depends only on \(b\). Since agents on the boundary are indifferent between the two goods, all types lying on the same inverted \(L\)-shaped curve in Figure \ref{fig:Lshapes} have the same utility. We can therefore calculate welfare by integrating over types who choose $A$ while also taking into account the $B$-taking types on the same $L$-shaped curves. Such a calculation yields the expression \eqref{eq:Wprime}.

This form of the objective also bears a resemblance to a Myersonian virtual value which would appear in a single-dimensional setting with a welfarist objective (see, e.g., \cite{CONDORELLI2012613}). Moreover, by the envelope theorem, we can think of $U_A'(a)$ as the quality allocated to agents who receive good $A$ and value it at $a$. Unfortunately, however, the setting does not lend itself to a Myersonian solution method. This is because the shape of the ``virtual value'' itself is endogenous to the choice of boundary. In fact, it is optimizing over the shape of the boundary that poses the greatest difficulty in proving the result.

The proof of the $\gamma=0$ case proceeds in three steps. First, I characterize the feasible pairs \((z,U_A)\). Second, I fix a boundary \(z\) and find the welfare-maximizing \(U_A\) compatible with it. Third, I optimize over the boundary itself. The main step is the last one: using optimal control arguments, I show that the welfare-maximizing boundary must be linear with slope \(1\). Such a boundary is implemented by undamaged goods and market-clearing tolls.

\paragraph{Characterizing feasible $(z,U_A)$.} We say $(z,U_A)$ is \emph{feasible} if there exists a mechanism $(c,x,y)$ with $A,B$-indirect utilities $U_A,U_B$ such that:
\begin{equation}\label{eq:III}
        U_A(a) = U_B(z(a)) \quad \text{for all} \quad a \in [\uul a,\ool a].
\end{equation}

\begin{lemma}\label{lemma:indutilityimplement}
Suppose Assumptions \ref{ass:1} and \ref{ass:2} hold. Then the pair \((z,U_A)\) is feasible if and only if:
\begin{enumerate}
\item[(i)] \(U_A'\) and \(U_A'/z'\) are non-decreasing and take values in \([0,1]\),
\item[(ii)] the boundary \(z\) has finite, strictly positive one-sided derivatives at every \(a\in(\uul a,\ool a)\), and a finite, strictly positive left derivative at \(\ool a\),
\item[(iii)] the supply constraint \eqref{eq:SS} holds:
\begin{equation}\label{eq:SS}
        \int_{\uul a}^{1} \int_0^{\hat{z}(a)} f(a,v)dv\  da \leq s_A, \quad \int_{\uul b}^{1} \int_0^{\hat{z}^{-1}(b)} f(v,b)dv \ db \leq s_B,\tag{S'}
\end{equation}
\item[(iv)] \(U_A'\) and \(U_A'/z'\) are piecewise continuously differentiable, and \(z\) is piecewise twice continuously differentiable.
\end{enumerate}
\end{lemma}
To understand these conditions, note that the \(A\)-indirect utility \(U_A\) and the boundary \(z\) pin down \(U_B\) through \eqref{eq:III}. In
particular, differentiating this condition gives
\begin{equation}\label{eq:DIII}
U_A'(a) = U_B'(z(a))\cdot z'(a)
\quad \Rightarrow \quad
U_B'(z(a)) = \frac{U_A'(a)}{z'(a)}.
\end{equation}

Since \(U_A\) and \(U_B\) must be convex, condition \((i)\) requires the
implied marginal utility profiles to be nondecreasing. Condition \((ii)\)
records the regularity of the boundary implied by \eqref{eq:DIII}. Condition
\((iii)\) rewrites the supply constraints in terms of the boundary: types below
\(z\) receive good \(A\), while types above \(z\) receive good \(B\). Condition \((iv)\) is the regularity imposed by Assumption \ref{ass:2}: the induced marginal utilities and boundary must be piecewise
smooth. Without this assumption, this condition would be omitted.

\paragraph{Welfare-maximizing \(U_A\) for a fixed boundary \(z\).}
I now fix a boundary \(z\) and find the \(A\)-indirect utility \(U_A\) that
maximizes total welfare \eqref{eq:Wprime} subject to \((z,U_A)\) being
feasible. This is the first step in the proof that needs Assumption
\ref{ass:2}, which rules out pathological boundaries and ensures that \(z\), \(U_A'\), and the implied \(B\)-marginal utility \(U_A'/z'\) are piecewise smooth.

\begin{lemma}\label{lemma:optimplement}
Fix a piecewise twice continuously differentiable boundary \(z\). Then
\((z,U_A)\), with \(U_A\) defined by \eqref{eq:optlower}, maximizes total
welfare \eqref{eq:Wprime} among all feasible pairs \((z,\check U_A)\).
\begin{equation}\label{eq:optlower}
U_A'(a)=
\begin{cases}
0, & a \in (0,\uul a),\\[6pt]
m(a)\cdot k, & a \in (\uul a, \ool a),\\[6pt]
1, & a \in (\ool a,1).
\end{cases}
\end{equation}
where
\[
m(a) = \exp \left(
\int_{\uul a}^{a}\max\!\left[0,\,\frac{z''(s)}{z'(s)}\right]\,ds
\right)
\prod_{\substack{z^{\prime +}(t)> z^{\prime -}(t), \\ t\leq a}} \  \  \frac{z^{\prime +}(t)}{z^{\prime -}(t)},
\quad \quad  k=\frac{1}{\max\left\{m(\ool a),\frac{m(\ool a)}{z^{\prime -}(\ool a)}\right\}}.
\]
\end{lemma}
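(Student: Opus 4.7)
The plan is to recast the problem as a single-variable optimization that is pointwise separable. By Lemma~\ref{lem:objintbyparts} and $U_A(0)=0$, the welfare \eqref{eq:Wprime} equals $\int_0^1 U_A'(a)[1-F(a,\hat z(a))]\,da$, so the pointwise weight on $U_A'$ is non-negative and the goal is to make $U_A'$ as large as feasibility allows.

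Given $z$, the constraints from Lemma~\ref{lemma:indutilityimplement} on $U_A'$ reduce to: $U_A'\in[0,1]$, $U_A'$ non-decreasing on $[0,1]$, and $U_A'/z'$ non-decreasing on $[\uul a,\ool a]$ (the supply constraint depends only on $z$). The function $m$ in the statement is, by construction, the minimal multiplicative envelope consistent with both monotonicity requirements: at smooth points $m'/m=\max[0,z''/z']$, and at each upward jump of $z'$ the function $m$ jumps by exactly the factor $z^{\prime +}(t)/z^{\prime -}(t)$. Checking at smooth points and at jumps, both monotonicity constraints on $[\uul a,\ool a]$ hold if and only if $\psi(a):=U_A'(a)/m(a)$ is non-decreasing there; the reverse direction uses that $m/z'$ is itself non-decreasing (constant where $z''/z'>0$ and at upward jumps of $z'$, and increasing elsewhere), so that $U_A'/z'=(U_A'/m)\cdot(m/z')$ inherits monotonicity.

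The upper bound on $\psi$ is determined at the right endpoint $\ool a$. First, $U_A'\leq 1$ at $\ool a^-$ gives $m(\ool a)\psi(\ool a^-)\leq 1$. Second, \eqref{eq:DIII} gives $U_B'(z(a))=U_A'(a)/z'(a)$, so the quality bound $U_B'(1)\leq 1$ yields $m(\ool a)\psi(\ool a^-)/z^{\prime -}(\ool a)\leq 1$. Combining, $\psi(\ool a^-)\leq c=1/\max[m(\ool a),\,m(\ool a)/z^{\prime -}(\ool a)]$, and since $\psi$ is non-decreasing this bound applies throughout $(\uul a,\ool a)$. With the integrand $m(a)[1-F(a,z(a))]$ non-negative, the pointwise-maximal non-decreasing $\psi$ bounded above by $c$ is $\psi\equiv c$, yielding $U_A'(a)=m(a)c$ on $(\uul a,\ool a)$. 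On $(\ool a,1)$ set $U_A'\equiv 1$ (the quality upper bound) and on $(0,\uul a)$ set $U_A'\equiv 0$ (no participation); the resulting $U_A$ is exactly \eqref{eq:optlower}, and feasibility of $(z,U_A)$ follows directly from the construction of $m$ and the choice of $c$.

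The main obstacle is the equivalence between the pair of monotonicity constraints and a single constraint on $\psi$: this has to be verified separately on smooth stretches (where the binding constraint switches between $U_A'$ monotonicity and $U_A'/z'$ monotonicity depending on the sign of $z''/z'$) and at the jump points of $z'$ (where $U_A'$ must absorb exactly the multiplicative jump built into $m$). Once this reduction is in place, the remainder of the argument is pointwise maximization subject to a single monotonicity constraint and a single scalar upper bound derived from the two endpoint inequalities.
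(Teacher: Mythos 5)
Your proof is correct and takes essentially the same route as the paper's. The paper compares the proposed $U_A'$ directly to an arbitrary feasible $\check U_A'$ and shows the ratio $U_A'/\check U_A'$ is non-increasing on $(\uul a,\ool a)$, which (after normalizing $\check U_A$ at $\ool a$) gives pointwise dominance; your reparametrization $\psi=U_A'/m$ with ``$\psi$ non-decreasing and $\psi(\ool a^-)\le c$'' is an equivalent repackaging of that ratio argument, using the same algebraic facts about $m$ (that $\log m$ has derivative $\max[0,z''/z']$, that $m$ absorbs exactly the upward jumps of $z'$, and that $m/z'$ is non-decreasing). Two minor presentational points: when you list the reduced constraints you should include $U_A'/z'\le 1$ explicitly (you do in fact use it as the second endpoint inequality), and when you invoke the reverse implication you are implicitly using that $m$ itself is non-decreasing, which follows from the non-negativity of $\max[0,z''/z']$ and the fact that the jump factors exceed one. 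Neither point affects the validity of the argument.
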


The main idea is that, for a fixed boundary \(z\), the best implementation damages goods as little as possible while still satisfying incentive compatibility. Indeed, \eqref{eq:optlower} has a simple implication: \(U_A'\) is constant on intervals where \(z\) is concave, while \(U_A'\) is proportional to \(z'\) on intervals where \(z\) is convex. Recall that
\begin{equation}\label{eq:DIIIIII}
U_B'(z(a))= \frac{U_A'(a)}{z'(a)}.
\end{equation}
Thus, we can equivalently say that $U_A'(a)$ is constant on concave intervals of $z$ while $U_B'(z(a))$ is constant on its convex intervals (Figure \ref{fig:convexconcave}).
\begin{figure}[h!]
        \centering
        \hspace{-1.8cm} 
        \begin{tikzpicture}[scale=0.7]

          \draw[darkdarkgreen, very thick] (1.2,1.6) .. controls (1.6,2.5) and (2,2.8) .. (2.5,3)
                                      .. controls (3,3.2) and (3.4,3.4) .. (3.8,4.5)
                                      .. controls (3.88,4.7) .. (4,5);

          \fill[niceblue, opacity=0.3] (0,5) -- (0,1.6) -- (1.2,1.6)
                                  .. controls (1.6,2.5) and (2,2.8) .. (2.5,3)
                                  .. controls (3,3.2) and (3.4,3.4) .. (3.8,4.5)
                                  .. controls (3.9,4.8) .. (4,5) -- cycle;
      
          \fill[orange, opacity=0.3] (5,0) -- (1.2,0) -- (1.2,1.6)
          .. controls (1.6,2.5) and (2,2.8) .. (2.5,3)
          .. controls (3,3.2) and (3.4,3.4) .. (3.8,4.5)
          .. controls (3.9,4.8) .. (4,5) --(5,5) -- cycle;

          \draw[darkred, dashed] (1.2,0) -- (1.2,1.6);  
          \draw[darkred, dashed] (2.5,0) -- (2.5,3);

          \draw[darkred, thick, decorate, decoration={brace, amplitude=10pt}] (0,3) -- (0,5) node[midway, left=12pt] {$U_B'$ constant};

          \draw[darkdarkgreen, dashed] (0,1.6) -- (1.2,1.6); 
          \draw[darkred, dashed] (0,3) -- (2.5,3);  
          \draw[darkred, dashed] (0,5) -- (4,5);  
          \draw[darkred, thick, decorate, decoration={brace, amplitude=10pt, mirror}] (1.2,0) -- (2.5,0) node[midway, below=12pt] {$U_A'$ constant};
                                    \draw[->] (0,0) -- (5.1,0) node[right] {$a$};
                                    \draw[->] (0,0) -- (0,5.1) node[above] {$b$};
        \end{tikzpicture}
        \captionsetup{width=0.5\textwidth}
        \caption{\(U_A'(a)\) and \(U_B'\left(z(a)\right)\) are constant where the boundary \(z\) is concave and convex, respectively.}
        \label{fig:convexconcave}
\end{figure}
Now, by Lemma \ref{lemma:indutilityimplement}, $U_A'(a)$ and $U_B'\left(z(a)\right)$ must be increasing. The above observation therefore means that at least one of these monotonicity constraints must always bind. Intuitively, were neither constraint to bind on some interval, we could increase $U_A'(a)$ and $U_B'\left(z(a)\right)$ pointwise in a \eqref{eq:DIIIIII}-preserving manner until one of them started binding. This would create pointwise higher utility profiles and thus produce superior welfare \eqref{eq:Wprime}. 

\paragraph{Showing the welfare-maximizing boundary is linear.}

Having pinned down the best \(U_A\) for each fixed boundary \(z\), we can
optimize over the boundary itself. This is where the monotonicity condition on
the conditional inverse anti-hazard rates in \eqref{eq:anitihazards} enters.

\begin{lemma}\label{lemma:vararg}
Under the conditions of Theorem \ref{th:1}, the welfare-maximizing mechanism,
among mechanisms that allocate positive masses of both goods, features a linear
boundary $z^{*}:[\uul a^*,\ool a^*]\to[\uul b^*,\ool b^*].$
\end{lemma}

Intuitively, I show that under the anti-hazard-rate condition in Theorem
\ref{th:1}, no boundary with strictly convex or concave parts, or with kinks,
can satisfy the necessary optimality conditions. The optimal boundary must
therefore be linear. Consider some interval $[\uul a,\ool a]$ on which the boundary is concave and, for simplicity, assume it consists of multiple small, linear pieces (Figure \ref{fg:preperturb}). We will consider the perturbations to these linear pieces on this part of the boundary. Notice, however, that such perturbations have to respect the supply constraint \eqref{eq:SS}, and thus must preserve the probability mass below and above the boundary. Still, we can construct perturbations preserving the supply constraint by perturbing one piece of the boundary upwards and another one downwards in a ratio that leaves the probability masses unchanged (Figure \ref{fg:postperturb-c}). First-order optimality conditions then tell us that, when perturbing one such piece, we can capture the effect of this perturbation on the supply constraint by a Lagrange multiplier $\mu$.

\begin{figure}[h!]
  \centering
  \begin{subfigure}[b]{0.32\linewidth}
    \centering
\begin{tikzpicture}[scale=0.7]
  \draw[->] (0,0) -- (5.8,0) node[right] {$a$};
  \draw[->] (0,0) -- (0,5.5) node[above] {$b$};

  \draw[ultra thick, darkgreen]
      (0.5,0.9)
      -- (1.625,2.9775)   
      -- (2.85,4.2025)    
      -- (4.075,4.8150)   
      -- (5,5);           

  \draw[dashed] (1.625,2.9775) -- (1.625,0) node[below] {$\tilde a$};
  \draw[dashed] (2.85,4.2025)   -- (2.85,0)   node[below] {$\tilde a + \delta$};
  \draw[dashed] (4.075,4.8150)  -- (4.075,0);

  \draw[dashed] (0,2.9775) -- (1.625,2.9775);
  \node[left] at (0,2.9775) {$\tilde b$};
\end{tikzpicture}

    \caption{Figure \ref{fg:preperturb}}
    \label{fg:preperturb}
  \end{subfigure}\hfill
  \begin{subfigure}[b]{0.32\linewidth}
    \centering
\begin{tikzpicture}[scale=0.7]
  \draw[->] (0,0) -- (5.8,0) node[right] {$a$};
  \draw[->] (0,0) -- (0,5.1) node[above] {$b$};

    \draw[ultra thick, red]
      (0.5,0.9)
      -- (1.455693,2.6638)
    -- (3.32023,4.5452)   
      -- (3.75,4.7550)   
      -- (5,5);           

  \draw[ultra thick, darkgreen]
      (0.5,0.9)
      -- (1.625,2.9775)   
      -- (2.85,4.2025)    
      -- (4.075,4.8150)   
      -- (5,5);           

  \draw[dashed] (1.625,2.9775) -- (1.625,0) node[below] {$\tilde a$};
  \draw[dashed] (2.85,4.2025)   -- (2.85,0)   node[below] {$\tilde a + \delta$};
  \draw[dashed] (4.075,4.8150)  -- (4.075,0);

  \draw[dashed] (0,2.9775) -- (1.625,2.9775);
  \node[left] at (0,2.9775) {$\tilde b$};
\end{tikzpicture}

    \caption{Figure \ref{fg:postperturb-c}}
    \label{fg:postperturb-c}
  \end{subfigure}\hfill
  \begin{subfigure}[b]{0.32\linewidth}
    \centering
\begin{tikzpicture}[scale=0.7]
  \draw[->] (0,0) -- (5.8,0) node[right] {$a$};
  \draw[->] (0,0) -- (0,5.1) node[above] {$b$};

\fill[niceblue, opacity=0.3] 
    (0,2.36938)  -- (1.29693,2.36938) 
-- (1.625,2.9775)
 -- (2.85,4.2025) 
  -- (3.30023,4.42892)
  -- (0,4.42892)-- cycle;

  \fill[orange, opacity=0.5]
     (1.29693,2.36938) -- (3.30023,4.42892)
      -- (2.85,4.20250) -- (1.625,2.97750) -- cycle;

  \draw[red, ultra thick]
      (1.29693,2.36938) -- (3.30023,4.42892);

  \draw[ultra thick, darkgreen]
      (0.5,0.9)
      -- (1.625,2.97750)   
      -- (2.85,4.20250)    
      -- (4.075,4.81500)   
      -- (5,5);            

  \draw[dashed] (1.625,2.97750) -- (1.625,0) node[below] {$\tilde a$};
  \draw[dashed] (2.85,4.20250)   -- (2.85,0)   node[below] {$\tilde a + \delta$};
  \draw[dashed] (4.075,4.81500)  -- (4.075,0);
\end{tikzpicture}

    \caption{Figure \ref{fg:postperturb}}
    \label{fg:postperturb}
  \end{subfigure}
\end{figure}

Recall that by Lemma \ref{lem:objintbyparts}, total welfare is given by \eqref{eq:Wprime}:
\[
U_A(1) -\int_0^1 U_A'(a) \cdot F(a,\hat z(a))da.
\]
Moreover, Lemma \ref{lemma:optimplement}  tells us that $U_A'$ is equal to some constant on the region where the boundary is concave. Thus, we can write the effect of the boundary in the region $[\tilde a,\tilde a+\delta]$, up to scaling, as follows:
\[
-\int_{\tilde a}^{\tilde a+\delta}  F\left(a,\tilde b + (a-\tilde a)\cdot s\right)da.
\]
Now, consider the effect of a small downward perturbation to the height of the boundary on this interval, as illustrated in Figure \ref{fg:postperturb}. The first-order effect of this perturbation is given by:
\[
-
\frac{d}{d\tilde b}
\int_{\tilde a}^{\tilde a+\delta}
F\left(a,\tilde b+(a-\tilde a)\cdot s\right)\,da
+
\mu
\frac{d}{d\tilde b}
\int_{\tilde a}^{\tilde a+\delta}
\int_0^{\tilde b+(a-\tilde a)\cdot s}
f(a,b)\,db\,da.
\]
The latter term captures the effect of the perturbation on the probability mass under the boundary, which is valued according to the aforementioned multiplier $\mu$ on the supply constraint \eqref{eq:S}. Performing a small perturbation like this (in either direction) is not beneficial when
\[
0 \ \approx \ 
      \int_{\mathcal{K}} \ f(a,b)\,da\  db \;-\; \mu \,\int_{\mathcal{L}} \ f(a,b)\,da\  db
\]
where $\mathcal{K}$ and $\mathcal{L}$ are, respectively, the blue region in Figure \ref{fg:postperturb} and the orange region in Figure \ref{fg:postperturb}. When the length $\delta$ of the perturbed interval becomes small, this gives:
\begin{equation}\label{eq:perturbationintuition}
0 \ = \ \frac{F_{A|B}(\tilde a|z(\tilde a))}{f_{A|B}( \tilde a|z(\tilde a))} \ - \ \mu.
\end{equation}
Consequently, when the boundary is strictly concave on some region, a profitable perturbation analogous to that in Figure \ref{fg:postperturb-c} does not exist only if \eqref{eq:perturbationintuition} holds at every point in that interval. However, this cannot be the case under the condition in
Theorem \ref{th:1}, which guarantees that the relevant conditional inverse
anti-hazard rate is strictly increasing along the boundary. Thus, if we were indifferent about perturbing the boundary slightly downward at some level of \(a\), the same downward perturbation would be strictly profitable at any higher \(a\). The proof formalizes this reasoning in the continuous case using optimal control methods.

\paragraph{The welfare-maximizing linear boundary has unit slope.}

Lemma \ref{lemma:vararg} tells us the welfare-maximizing boundary is linear and Lemma \ref{lemma:optimplement} pins down the best implementation of any linear
boundary: in particular, if it has slope \(s\geq 1\), then good \(A\)
is undamaged and good \(B\) is delivered at quality \(1/s\). It remains to show
that the optimal slope is \(1\).

\begin{lemma}\label{lem:slope1}
Under the conditions of Theorem \ref{th:1}, the welfare-maximizing mechanism,
among mechanisms that allocate positive masses of both goods, features a linear
boundary with slope \(1\).
\end{lemma}

Suppose the boundary \(z\) has slope \(s>1\) and
reaches the ceiling of the unit square before reaching its right wall, as in
Figure \ref{fig:133}. By Lemma \ref{lemma:optimplement}, this boundary is
implemented by offering good \(A\) undamaged at toll \(\uul a_z\), and good
\(B\) at quality \(1/s\) and toll \(\uul b_z/s\).

\begin{figure}[h!]
        \centering
        \begin{subfigure}[b]{0.45\linewidth}
            \centering
\begin{tikzpicture}[scale=0.70]
    \draw[->, thick] (0,0) -- (5.1,0) node[right] {$a$};
    \draw[->, thick] (0,0) -- (0,5.2) node[above] {$b$};

    \draw[red, very thick] (1.2,1.6) -- (2.5,5);

    \draw[darkdarkgreen, very thick, dashed] (1.2,0) -- (1.2,1.6);
    \draw[darkdarkgreen, very thick, dashed] (0,1.6) -- (1.2,1.6);

    \fill[niceblue, opacity=0.3]
        (0,5) -- (0,1.6) -- (1.2,1.6) -- (2.5,5) -- cycle;

    \fill[orange, opacity=0.3]
        (5,0) -- (1.2,0) -- (1.2,1.6) -- (2.5,5) -- (5,5) -- cycle;

    \node[red, anchor=south west, rotate=69] at (1.7,2.9) {\fontsize{14}{14}\selectfont $z(a)$};

    \node at (1.2, -0.5) {$\uul{a}_z$};
    \node at (-0.35, 1.6) {$\uul{b}_z$};
\end{tikzpicture}
            \caption{Figure \ref{fig:133}: $s$-sloped boundary $z$.}
            \label{fig:133}
        \end{subfigure}
        \hspace{0.2cm}
        \begin{subfigure}[b]{0.45\linewidth}
        \centering

\begin{tikzpicture}[scale=0.70, >=latex]

    \draw[->, thick] (0,0) -- (5.1,0) node[right] {$a$};
    \draw[->, thick] (0,0) -- (0,5.2) node[above] {$b$};

    \coordinate (ra) at (1.2,1.6); 
    \coordinate (rb) at (2.5,5);   
    \coordinate (ba) at (0.9,2.5); 
    \coordinate (bb) at (4,5);   

    \path[name path=redline]   (ra) -- (rb);
    \path[name path=blueline]  (ba) -- (bb);
    \path[name intersections={of=redline and blueline, by=I}];

    \fill[teal,opacity=0.5]
      (ba) -- (ba|-0,0) -- (ra|-0,0) -- (ra) -- cycle;

    \fill[teal,opacity=0.5]
      (ba) -- (ra) -- (I) -- cycle;

    \fill[gray,opacity=0.5]
      (I) -- (rb) -- (bb) -- cycle;

    \draw[red, very thick] (ra) -- (rb)
      node[midway, anchor=south west, rotate=69] {\fontsize{14}{14}\selectfont $z(a)$};

    \draw[darkdarkgreen, very thick, dashed] (ra) -- (ra|-0,0);
    \draw[darkdarkgreen, very thick, dashed] (ra) -- (ra-|0,0);
    \node at (1.43,-0.5) {$\uul{a}_z$};
    \node at (-0.35,1.6) {$\uul{b}_z$};

    \draw[blue, very thick] (ba) -- (bb)
      node[midway, anchor=north west, rotate=24] {\fontsize{14}{14}\selectfont $r(a)$};

    \draw[darkdarkgreen, very thick, dashed] (ba) -- (ba|-0,0);
    \draw[darkdarkgreen, very thick, dashed] (ba) -- (ba-|0,0);
    \node at (0.85,-0.5) {${\uul{a}_r}$};
    \node at (-0.35,2.5) {${\uul{b}_r}$};

    \draw[gray, very thick, dashed] (I|-0,0) -- (I);
\node at (2.07,-0.36)  {$a^*$};

\end{tikzpicture}

        \caption{Figure \ref{fig:1444}: Regions $\ool {\mathcal{D}}$ (gray) and $\uul {\mathcal{D}}$ (green).}
        \label{fig:1444}
        \end{subfigure}
    \end{figure}

Now compare \(z\) to a slightly flatter boundary \(r\) that preserves the masses
allocated to both goods, as in Figure \ref{fig:1444}. The two boundaries cross
once, at \(a^*\), and the flatter boundary has a lower toll for good \(A\):
\(\uul a_r<\uul a_z\). Let
\[
\uul {\mathcal{D}}
=
\{(a,b)\colon 0<a<a^*,\ \hat z(a)<b<\hat r(a)\},
\qquad
\ool {\mathcal{D}}
=
\{(a,b)\colon a^*<a<1,\ \hat r(a)<b<\hat z(a)\}.
\]
These are the types who enter and leave the allocation of good \(A\),
respectively. Types in \(\uul{\mathcal D}\) switch to \(A\), either from
good \(B\) or from nonparticipation, while types in
\(\ool{\mathcal D}\) switch from \(A\) to \(B\). Since \(r\) preserves
the mass assigned to good \(A\),
\begin{equation}\label{eq:equalmasses}
    \int_{\ool{\mathcal D}} f(a,b)\,da\  db
    =
    \int_{\uul{\mathcal D}} f(a,b)\,da\  db.
\end{equation}
Let \(\Delta\) be the welfare difference between \(r\) and \(z\). Since good \(A\) is undamaged under both boundaries,
\(U_A'\equiv 1\), and so
\begin{align*}
\Delta 
&= (1-\uul a_r)
   - \int_{0}^1 F\left(a,\hat{r}(a)\right)\,da
   - \biggl(\,(1-\uul a_z)
     - \int_{0}^1 F\left(a,\hat z(a)\right)\,da\biggr) \\
&= (\uul a_z - \uul a_r)
   - \left(
    \int_{\uul {\mathcal{D}}}
   \frac{F_{A\mid B}(a \mid b)}{f_{A\mid B}(a \mid b)}\cdot
   f(a,b)\,da\  db
\;-\;
\int_{\ool {\mathcal{D}}}
   \frac{F_{A\mid B}(a \mid b)}{f_{A\mid B}(a \mid b)}\cdot
   f(a,b)\,da\  db
   \right).
\end{align*}
The first term is positive because \(\uul a_r<\uul a_z\). The bracketed term is negative by the anti-hazard-rate condition in Theorem
\ref{th:1}: the inverse anti-hazard rate is increasing in both arguments,
\(\ool{\mathcal D}\) lies to the north-east of \(\uul{\mathcal D}\), and the
two regions have equal probability mass by \eqref{eq:equalmasses}.

Since \(\Delta>0\), any boundary with slope \(s>1\) is dominated by a flatter boundary, and so the optimal linear boundary has slope \(1\), which in turn is optimally implemented without damages.

\paragraph{Optimality of the market-clearing toll mechanism.}
The preceding argument establishes that, under the purely welfarist objective
\((\gamma=0)\), the optimal mechanism uses no damages.\footnote{While the preceding argument considers
mechanisms that allocate positive masses of both kinds of goods, any optimal mechanism
must do so. If one kind of good were unallocated, offering it undamaged at a toll just
below \(1\) would attract a sufficiently small positive mass of agents and
strictly increase welfare while preserving feasibility.}
Recall, however, that Lemma \ref{lemma:bestundamaged} implies that the market-clearing toll mechanism maximizes welfare among such mechanisms. It therefore remains to extend this
conclusion to \(\gamma>0\).

To that end, we can use an argument analogous to the proof of Lemma \ref{lemma:bestundamaged} to show that the market-clearing toll mechanism also maximizes allocative efficiency \eqref{eq:allocceff} among \emph{all} feasible mechanisms. Finally, observe that for any \(\gamma\in[0,1]\), the designer's objective is a nonnegative linear
combination of welfare and allocative efficiency. Since the
market-clearing toll mechanism separately maximizes both components, it also
maximizes their combination and is therefore optimal for every
\(\gamma\in[0,1]\).

\section{Extension: heterogeneous toll costs}\label{sec:heterotollcost}\label{sec:ext}

The baseline model assumes that a given toll imposes the same cost on every agent. This assumption may be restrictive in some applications. In the context of social programs, for example, screening through monetary payments may impose a greater burden on poorer agents whom the program aims to target. To accommodate such statistical dependence, I now allow the cost of a toll to vary across agents. Suppose an agent's type is \((a,b,r)\), where \(r>0\) is her unit cost of the toll. If she receives good \(A\) or \(B\), her utility is, respectively,
\begin{align*}
x\cdot a-r\cdot c & \ \  \text{if}\ \  y=A,\\
x\cdot b-r\cdot c & \ \  \text{if}\ \  y=B.
\end{align*}
Assume that \(r\in[\uul r,\ool r]\), with \(\uul r>0\), and that types are distributed according to \(\tilde F\) with full support on
\([0,1]^2\times[\uul r,\ool r]\). Throughout this section, I assume that the designer cares only about agents' utility, so that \(\gamma=0\); extending the analysis to incorporate revenue motives is left for future work.

When \(\gamma=0\), the problem can be reformulated in terms of transformed values and welfare weights. Following \cite{DKA}, observe that a type \((a,b,r)\) is behaviorally equivalent to a type with unit toll cost and values
\[
        (\hat a,\hat b)
        =
        \left(\frac{a}{r},\frac{b}{r}\right).
\]
Let \(G\) denote the distribution of transformed values \((\hat a,\hat b)\) induced by \(\tilde F\). Its support is contained in
\([0,1/\uul r]^2\), and it has full support on this square. The transformation also changes the designer's objective. Since the utility of the original type \((a,b,r)\) is equal to \(r\) times the utility of the transformed type \((a/r,b/r,1)\), transformed types receive welfare weights
\[
        \lambda(\hat a,\hat b)
        =
        \mathbb{E}\left[
        r \ \Big| \ \frac{a}{r}=\hat a,\ 
        \frac{b}{r}=\hat b
        \right].
\]
Thus, after renormalizing toll costs to one, the designer's objective becomes
\begin{equation}\label{eq:objweight}
        \int
        \lambda(\hat a,\hat b)
        u_{\hat a,\hat b}(\hat a,\hat b)
        \,dG(\hat a,\hat b).
\end{equation}
Theorem \ref{th:1} then has the following direct analogue.

\begin{corollary}\label{corr:th1}
Suppose the designer is restricted to quality rules
\(x:[0,1/\uul r]^2\to[0,1]\) that are piecewise continuously differentiable
in each dimension, and suppose that total supply does not exceed the mass of
agents: \(s_A+s_B\leq1\). Let \(g\) denote the density of \(G\), suppose that \(g\) is strictly positive on the interior of the type space, and that \(g\) and \(\lambda g\) are Lipschitz continuous. Moreover, assume that the weighted inverse anti-hazard ratios
\[
        \frac{\int_0^{\hat a} \lambda(\hat v,\hat b)g(\hat v,\hat b)\,d\hat v}
        {g(\hat a,\hat b)},
        \qquad
        \frac{\int_0^{\hat b} \lambda(\hat a,\hat v)g(\hat a,\hat v)\,d\hat v}
        {g(\hat a,\hat b)}
\]
are increasing in both arguments, and that for each ratio at least one of the two monotonicities is strict. Then the market-clearing toll mechanism is optimal.
\end{corollary}

The proof is the same after replacing the density \(f\) in the welfare terms by the weighted density \(\lambda g\), while keeping the supply constraints expressed in terms of the unweighted density \(g\). Consequently, the same argument applies with the weighted inverse anti-hazard ratios above in place of the corresponding ratios from the baseline model.

Theorem \ref{thm:2} can be modified analogously: the supply-preserving terms are still computed using the unweighted transformed density \(g\), while utility changes are computed using the weighted density \(\lambda g\). The following is the normalized analogue of Corollary \ref{cor:damagesnormalized}:

\begin{corollary}\label{corr:damagesnormalized-weighted}
Let \(\bar v:=1/\uul r\). Suppose that \(g\) is strictly positive on the interior of the type space, that \(g\) and \(\lambda g\) are Lipschitz continuous, that the market-clearing toll for good \(B\) satisfies \(0<c_B^*<\bar v\), and that \(s_A+s_B=1\). Then damages are optimal if, for some \(\tilde b\in(c_B^*,\bar v)\),
\begin{equation}\label{eq:covariancecorr-weighted}
        \operatorname{Cov}
        \left(
        \frac{
        \int_0^{\hat b-c_B^*}
        \lambda(\hat v,\hat b)g(\hat v,\hat b)\,d\hat v
        }
        {g(\hat b-c_B^*,\hat b)},
        (\tilde b-\hat b)_+
        \ \middle|\ \hat a=\hat b-c_B^*
        \right)
        >0.
\end{equation}
\end{corollary}
The conditional covariance is taken with respect to the probability measure on
\([c_B^*,\bar v]\) with density proportional to
\(g(\hat b-c_B^*,\hat b)\).

\paragraph{How heterogeneous toll costs affect the results.}
To understand the intuition for these modified results, note first that the transformed values $\hat a=a/r$ and $\hat b=b/r$ measure the \emph{amounts of the toll} agents are willing to incur to obtain the goods, rather than their raw values for them. The welfare weight $\lambda(\hat a,\hat b)$ then measures the expected cost of one unit of the toll among agents with these transformed values. Consequently, $\lambda(\hat a,\hat b)\hat a$ and $\lambda(\hat a,\hat b)\hat b$ are their expected raw values for goods $A$ and $B$, respectively.

Recall that, in the baseline model, the key objects are the ratios of
inframarginal beneficiaries to marginal switchers. In the transformed model,
these ratios become
\[
\frac{
\int_0^{\hat b-c_B^*}
\lambda(\hat v,\hat b)g(\hat v,\hat b)\,d\hat v
}{
g(\hat b-c_B^*,\hat b)
}
\qquad\text{and}\qquad
\frac{
\int_0^{\hat a+c_B^*}
\lambda(\hat a,\hat v)g(\hat a,\hat v)\,d\hat v
}{
g(\hat a,\hat a+c_B^*)
}.
\]
Their numerators measure the welfare-weighted masses of existing recipients who would benefit from introducing a discounted damaged option for a good, while their denominators measure the densities of agents at the boundary $z_0$ who would switch into this good if such an option were introduced. Analogously to the intuition in Section \ref{sec:2d}, damages are more attractive when these ratios are higher at lower values of $\hat b$ and $\hat a$, respectively.\footnote{Inframarginal beneficiaries are now counted using the weighted density $\lambda g$, whereas marginal switchers are still counted using the unweighted density $g$. This is intuitive: the former term measures the \emph{welfare gains} received by agents, while the latter measures the \emph{number} of agents who switch goods and hence the resulting change in demand.}

The effect of heterogeneous toll costs can then be understood through their influence on these ratios. In particular, the covariance in \eqref{eq:covariancecorr-weighted} is more likely to be positive when welfare weights $\lambda(\hat a,\hat b)$ are relatively high at low values of $\hat a$ and $\hat b$. Equivalently, damages become more attractive when agents with low willingness to incur tolls for goods $A$ and $B$ tend to have high toll costs $r$. In that case, their low values of $\hat a=a/r$ and $\hat b=b/r$ reflect the high cost of the toll rather than low underlying values for the goods. Reducing their toll burden through the discount attached to a damaged option therefore generates large welfare gains. Conversely, when agents with low values of $\hat a$ and $\hat b$ also have low toll costs, their low willingness to incur the toll is more likely to reflect a genuinely low value for the good. The welfare gains from offering them a toll discount are then smaller, making damages less attractive.

This matters for applications in which agents' values and toll costs are correlated. Consider first an administrative toll. Agents with the greatest need for the good may also face language barriers, limited digital access, or less familiarity with bureaucratic procedures, making the application process especially costly for them. They may nevertheless complete it because they value the good highly. By contrast, agents with less need may find the process easier but still choose not to apply because the benefit is not valuable enough to justify even a modest burden. If differences in values dominate differences in administrative costs, agents with low transformed values $\hat a$ and $\hat b$ are then likely to have both low raw values and low toll costs. Introducing a discounted damaged option for them therefore generates relatively small welfare gains. Thus, in this case, heterogeneity works in favor of screening with tolls. Conversely, suppose the toll is a monetary application fee and the good, such as subsidized housing, is especially valuable to low-income households, who also find payments especially burdensome. If the latter effect is stronger, such households may have low transformed values $\hat a$ and $\hat b$ despite having high raw values. Low willingness to pay then reflects the high cost of money rather than a low value for the good, thereby strengthening the case for damages.

\section{Discussion}\label{sec:disc}

The main contribution of this paper is to distinguish between two kinds of
screening instruments: tolls, whose costs are separable from agents' values for
the allocated good, and damages, which are more burdensome for agents who value
the good more. I study when a welfare-maximizing designer should use each class
of instruments. When the allocated good is homogeneous, my results unambiguously predict that
tolls are superior to damages. Consider, for instance, rationed access to
specialist appointments or elective procedures. If access is allocated through
a waitlist, the burden falls especially heavily on patients whose need for
treatment is greatest. When such waitlists become long, introducing or raising
a copayment is likely to be welfare-improving: in equilibrium, it screens out
the lowest-need visits and reduces wait times for the sickest patients.
Similarly, in food assistance programs such as SNAP, usage restrictions are
disproportionately burdensome for households that rely on the program most. My
model suggests they
should be replaced by more stringent application or recertification
requirements.

In settings where the allocated goods are heterogeneous, however, the prediction
is more nuanced and depends on the statistical relationship between agents'
values for the different goods. When these are strongly negatively related, the
optimal mechanism is likely to use only tolls. Consider, for instance, the
allocation of DMV appointments across offices dispersed throughout a city.
Applicants may place high value on an appointment near where they live, while
placing low value on appointments at more distant offices. In this case, the
designer should use office-specific toll instruments---such as location-specific
booking fees, application burdens, or walk-in queues---to clear excess demand
and sort applicants across offices. Long waitlists at the most popular offices
are likely to be suboptimal. When values for different goods are strongly positively related, the conclusion
is likely to change. This can be the case for certain affordable housing programs, where the neediest households value all units highly, while those with better outside options do not. When such
correlation is strong, market-clearing tolls are more likely to absorb much of
the surplus generated by the allocation, and damages may improve sorting across
goods. Thus, in some housing settings, differentiated waitlists may be part of
an optimal mechanism, possibly alongside project-specific toll-like
instruments, such as higher rent contributions for overdemanded projects or
rent subsidies for less-demanded ones. Characterizing the exact optimal design
of such mechanisms is an avenue for further research.

More broadly, the paper makes a methodological contribution by developing a
tractable approach to two-dimensional screening problems under the restriction
to deterministic mechanisms. The two-good case captures qualitatively important
economic forces, such as screening agents into and out of different options,
while remaining analytically simple. The approach could therefore be useful in
other screening problems where deterministic mechanisms are economically
plausible. For example, it may be applied to two-good monopoly pricing with
unit-demand consumers, or to optimal tax problems in which a couple chooses
whether to file jointly or separately and faces different labor-supply
incentives under the two filing regimes.

Future work could generalize my results by relaxing some of the restrictions imposed in the
analysis. While the restriction to deterministic mechanisms reflects practical concerns in many settings of interest, my model could be extended to accommodate some forms of randomization. Suppose, for example, that agents who pay a toll can enter a lottery that gives them one of the goods with some probability, and that unsuccessful agents can re-enter by paying the same toll again. When represented as a direct-revelation mechanism, a menu of such pure lotteries with re-entry would be equivalent to a deterministic
mechanism of the kind studied here (see Appendix \ref{appsec:waitlists} for a
related discussion). In such cases, my restriction would only rule out ``mixed lotteries'' that allocate either good with some probability.

Some aspects of the analysis could also be extended beyond two goods. The
two-good case is the simplest and most tractable environment in which screening
instruments serve not only to exclude low-value types, but also to sort agents
between available options. Nevertheless, the main forces identified in the
paper also apply in more general settings. In particular, the approach underlying
Theorem \ref{thm:2} generalizes: starting from the \(N\)-good market-clearing
toll mechanism, the designer can introduce a slightly damaged option for one of
the goods and ask whether the resulting inframarginal gains dominate the toll
correction needed to restore market clearing. As in the two-good model, this comparison depends on whether the ratio of inframarginal beneficiaries to marginal switchers is most favorable where the
gains from taking the damaged option are highest. However, with many goods, marginal
switchers no longer lie on a single boundary, but instead are on ``indifference
surfaces'' between the regions picking the damaged good and each alternative good. Nevertheless, the
affiliation intuition still applies: when values are positively
related, low-value regions for the damaged good tend to contain more
inframarginal consumers relative to marginal switchers, creating the same force
favoring damages as in the two-good case.

\appendix
\setstretch{1}

\section{Justifying the direct-revelation formulation}\label{sec:dirrev}

In the main model, I restrict the designer to deterministic mechanisms, meaning that each menu option allocates good \(A\), good \(B\), or nothing with certainty. In this environment, the standard revelation principle does not immediately apply, because there may exist equilibria that are more favorable for the designer in which agents randomize over menu options. I now show that this is not the case: for any feasible randomized selection among agents' favorite menu options, there exists a deterministic selection that allocates the same aggregate quantities of both goods, collects the same toll revenue, and leaves every agent's utility unchanged. Then, under such a selection rule, the usual revelation argument applies.

First, note that we can restrict attention to menus contained in
\([0,1]^2\), because any option charging a toll \(c>1\) is dominated by
non-participation. We can also restrict attention to closed, and hence compact,
menus. Indeed, replacing a menu \(M_y\subseteq[0,1]^2\) by its closure
\(\overline M_y\) does not change any type's maximal attainable utility, since
utility is continuous in \((x,c)\). Moreover, any option selected from the
original menu remains utility-maximizing after taking the closure, so the same
selection rule, allocations, tolls, and welfare can be preserved. Let us
therefore fix two compact menus
\(M_A,M_B\subseteq[0,1]\times[0,1]\), and denote the outside option by
\(\varnothing\). I use \(\Omega\) to denote the set of options available to
agents:
\[
\Omega
:=
\{\varnothing\}
\ \cup\
\bigl(\{A\}\times M_A\bigr)
\ \cup\
\bigl(\{B\}\times M_B\bigr).
\]
I write the utility of type \((a,b)\) from option \(\omega\in \Omega\) as $u_{a,b}(\omega)$ and use $\kappa(\omega)$ to denote the toll paid under that option:
\[
u_{a,b}(\omega):=
\begin{cases}
ax-c & \text{if }\omega=(A,x,c),\\
bx-c & \text{if }\omega=(B,x,c),\\
0 & \text{if }\omega=\varnothing.
\end{cases}
,\qquad
\kappa(\omega):=
\begin{cases}
c & \text{if }\omega=(A,x,c)\text{ or }\omega=(B,x,c),\\
0 & \text{if }\omega=\varnothing.
\end{cases}
\]

Given the menus \(M_A\) and \(M_B\), a \emph{selection rule} is a measurable
stochastic kernel \(\sigma:[0,1]^2\to\Delta(\Omega)\), where
\(\sigma_{a,b}(E)\) is the probability that type \((a,b)\) selects an option in the
Borel set \(E\subseteq\Omega\). An extended mechanism consists of the menus
together with such a selection rule. 

The extended mechanism satisfies the supply constraints if
\[
\int \sigma_{a,b}\bigl(\{A\}\times M_A\bigr)\,dF(a,b)\ \le\ s_A,
\qquad
\int \sigma_{a,b}\bigl(\{B\}\times M_B\bigr)\,dF(a,b)\ \le\ s_B.
\]
For each type \((a,b)\), let
\[
G(a,b):=\argmax_{\omega\in\Omega}u_{a,b}(\omega).
\]
We say \(\sigma\) is \emph{agent-optimal} if $\sigma_{a,b}(G(a,b))=1$ for every $(a,b).$ Note \(G\) is nonempty, compact-valued, and measurable because \(\Omega\) is compact and \(u_{a,b}(\omega)\) is continuous in \(\omega\).

The following result shows that it is without loss to use a deterministic selection rule.

\begin{proposition}\label{prop:purify_goods}
Fix compact menus \(M_A,M_B\subseteq[0,1]\times[0,1]\), and let
\(\sigma\) be an agent-optimal selection rule satisfying the supply constraints. Then there exists a deterministic selection rule \(\tilde\sigma\), of the form $\tilde\sigma_{a,b}=\delta_{\tilde m(a,b)}$ for some measurable map \(\tilde m:[0,1]^2\to\Omega\), such that
\[
\int \tilde\sigma_{a,b}\bigl(\{A\}\times M_A\bigr)\,dF(a,b)
=
\int \sigma_{a,b}\bigl(\{A\}\times M_A\bigr)\,dF(a,b),
\]
\[
\int \tilde\sigma_{a,b}\bigl(\{B\}\times M_B\bigr)\,dF(a,b)
=
\int \sigma_{a,b}\bigl(\{B\}\times M_B\bigr)\,dF(a,b),
\]
\[
\int \kappa(\tilde m(a,b))\,dF(a,b)
=
\int \int_\Omega \kappa(\omega)\,d\sigma_{a,b}(\omega)\,dF(a,b),
\]
and
\[
u_{a,b}\bigl(\tilde m(a,b)\bigr)
=
\int_{\Omega} u_{a,b}(\omega)\,d\sigma_{a,b}(\omega)
\quad \text{for every }(a,b).
\]
Consequently, the deterministic selection rule has the same aggregate use of each good, the same aggregate toll, and the same value of the designer's objective as the randomized selection rule.
\end{proposition}

\begin{proof}Define the finite-dimensional payoff vector
\[
h(\omega)
:=
\left(
\mathbb{1}_{\omega\in \{A\}\times M_A},
\mathbb{1}_{\omega\in \{B\}\times M_B},
\kappa(\omega)
\right).
\]
Since the menus are bounded, \(h\) is bounded and measurable. Moreover, \(F\) is
atomless and \(\sigma_{a,b}(G(a,b))=1\) for every \((a,b)\), so a standard
purification theorem applies (see, for example,
\citet{feinberg2006dvoretzky}). Hence, there exists a measurable selection
\(\tilde m(a,b)\in G(a,b)\), up to an \(F\)-null set, such that
\begin{equation}\label{eq:invector}
\int h(\tilde m(a,b))\,dF(a,b)
=
\int \int_\Omega h(\omega)\,d\sigma_{a,b}(\omega)\,dF(a,b).
\end{equation}
Changing \(\tilde m\) on the null set if necessary, using any measurable
selection from \(G\), we may take \(\tilde m(a,b)\in G(a,b)\) for every
\((a,b)\).
Define \(\tilde\sigma_{a,b}:=\delta_{\tilde m(a,b)}\). The first two coordinates of \eqref{eq:invector} give
\[
\int \tilde\sigma_{a,b}\bigl(\{A\}\times M_A\bigr)\,dF(a,b)
=
\int \sigma_{a,b}\bigl(\{A\}\times M_A\bigr)\,dF(a,b),
\]
\[
\int \tilde\sigma_{a,b}\bigl(\{B\}\times M_B\bigr)\,dF(a,b)
=
\int \sigma_{a,b}\bigl(\{B\}\times M_B\bigr)\,dF(a,b).
\]
Hence the deterministic selection satisfies the same supply constraints as \(\sigma\). The third coordinate gives
\[
\int \kappa(\tilde m(a,b))\,dF(a,b)
=
\int \int_\Omega \kappa(\omega)\,d\sigma_{a,b}(\omega)\,dF(a,b).
\]

It remains only to note that every type's utility is unchanged. Recall
\(\sigma_{a,b}\) is supported on \(G(a,b)\) and \(\tilde m(a,b)\in G(a,b)\), so both selections give type \((a,b)\) her maximal utility:
\[
u_{a,b}\bigl(\tilde m(a,b)\bigr)
=
\int_\Omega u_{a,b}(\omega)\,d\sigma_{a,b}(\omega)
=
\max_{\omega\in\Omega}u_{a,b}(\omega).
\]
Since the deterministic selection preserves each type's utility and the aggregate toll, it also preserves the value of the objective.
\end{proof}

\section{Waitlists}\label{appsec:waitlists}

I now show that the baseline model can be interpreted as a reduced-form
description of a steady-state waitlist environment with free re-entry.

\paragraph{Waitlist model.}
Consider a stationary waitlist environment in continuous time. There are two
goods, \(A\) and \(B\). Agents have values \((a,b)\in[0,1]^2\), where \(a\) is
the value of receiving good \(A\) and \(b\) is the value of receiving good \(B\).
At each instant \(\tau\in\mathbb R\), flow masses \(s_A,s_B>0\) of goods \(A\)
and \(B\) arrive, with \(s_A+s_B\leq 1\). At the same time, a unit flow mass of
agents arrives, with types distributed according to \(F\). Agents discount utility at rate \(\rho>0\). Receiving good \(A\) gives utility
\(a\), receiving good \(B\) gives utility \(b\), and receiving no good gives
utility zero. Tolls enter utility additively.

\paragraph{Mechanisms.}
There is a separate waitlist for each good. For each waitlist, the designer
chooses a menu of triples \((c,t,p)\), where \(c\in\mathbb R_+\) is a toll,
\(t\in\mathbb R_+\) is a wait time, and \(p\in[0,1]\) is the probability of
receiving the good at the end of the wait. Options with \(p=0\) are identified
with the outside option. An agent who chooses \((c,t,p)\)
incurs the toll immediately, waits \(t\) units of time, and then receives the
good with probability \(p\). If she does not receive the good, she may re-enter
the mechanism.

I restrict attention to stationary mechanisms that admit a steady state. Thus,
the same menus are offered at every date, and all agents of the same type choose
the same good and the same menu option, independently of their arrival date and
of how many times they have already participated. Steady-state feasibility then requires
\begin{equation}\label{eq:SSS}
\int \mathbb{1}_{(a,b)\text{ chooses }A}\,dF(a,b)\leq s_A,
\qquad
\int \mathbb{1}_{(a,b)\text{ chooses }B}\,dF(a,b)\leq s_B .
\end{equation}
These constraints take this form because agents can freely re-enter after an
unsuccessful attempt. Hence, any agent who chooses a waitlist option with \(p>0\)
eventually receives the corresponding good almost surely. Per-attempt success
probabilities affect the timing of receipt, but not the eventual mass of goods
consumed.

\paragraph{Reduction to the baseline model.}
Consider a type-\((a,b)\) agent who chooses option \((c,t,p)\) on the waitlist
for good \(A\), with \(p>0\). Let \(V_A(a;c,t,p)\) be her expected discounted
utility at the moment she chooses the option. After an unsuccessful attempt, she
faces the same continuation problem, shifted forward by \(t\). Hence
\[
V_A(a;c,t,p)
=
-c
+
p e^{-\rho t}a
+
(1-p)e^{-\rho t}V_A(a;c,t,p).
\]
Solving gives
\begin{equation}\label{eq:waitlist_value}
V_A(a;c,t,p)
=
\frac{-c+p e^{-\rho t}a}{1-(1-p)e^{-\rho t}}
=
x a-\tilde c,
\end{equation}
where
\[
x:=
\frac{p e^{-\rho t}}{1-(1-p)e^{-\rho t}}\in[0,1],
\qquad
\tilde c:=
\frac{c}{1-(1-p)e^{-\rho t}}\in\mathbb R_+ .
\]
The case of good \(B\) is analogous. Thus, each waitlist option has a reduced-form representation \((x,\tilde c)\).
Conversely, any \((x,\tilde c)\in(0,1]\times\mathbb R_+\) can be implemented by
a waitlist option: take \(p=1\), \(t=-(1/\rho)\log x\), and \(c=\tilde c\). The
case \(x=0\) corresponds to the outside option.

Using this reduced form and the revelation principle established in Section \ref{sec:dirrev}, the designer's problem reduces to choosing an allocation rule
\[
(\tilde c,x,y):[0,1]^2\to \mathbb R_+\times[0,1]\times\{A,B,\varnothing\},
\]
where \(y(a,b)\) is the waitlist chosen by type \((a,b)\), \(x(a,b)\) is the
discounted allocation intensity, and \(\tilde c(a,b)\) is the reduced-form
toll cost. The utility of type \((a,b)\) from reporting \((a',b')\) is
\[
u_{a,b}(a',b')
:=
\begin{cases}
a\,x(a',b')-\tilde c(a',b') & \text{if }y(a',b')=A,\\
b\,x(a',b')-\tilde c(a',b') & \text{if }y(a',b')=B,\\
0 & \text{if }y(a',b')=\varnothing .
\end{cases}
\]
The direct mechanism must satisfy incentive compatibility, individual rationality, and the steady-state resource constraints
\[
\int \mathbb{1}_{y(a,b)=A}\,dF(a,b)\leq s_A,
\qquad
\int \mathbb{1}_{y(a,b)=B}\,dF(a,b)\leq s_B .
\]
This matches the form of the baseline problem, provided that the designer discounts each toll payment back to the agent's initial entry date at the same rate \(\rho\) at which the agent discounts utility.

\section{Omitted proofs}

First, recall that \(U_A\) and \(U_B\) are convex and strictly increasing on
\([\uul a,\ool a]\) and \([\uul b,\ool b]\), respectively, with
\(U_A(\uul a)=U_B(\uul b)=0\). Hence each is differentiable except at countably
many points, and is pinned down by its derivative wherever the derivative exists.
At nondifferentiability points, the one-sided derivatives are obtained as limits
of derivatives at nearby differentiability points. For example, if \(D_A\) is the
set of points at which \(U_A'\) exists, then
\[
        U_A^{\prime -}(t')
        =
        \lim_{t\uparrow t',\, t\in D_A} U_A'(t),
        \qquad
        U_A^{\prime +}(t')
        =
        \lim_{t\downarrow t',\, t\in D_A} U_A'(t).
\]
Analogous statements hold for \(U_B\).

\subsection{Proof of Proposition \ref{prop:1Dnodamage}}
By single crossing, any feasible mechanism is characterized by a cutoff
\(\uul a\) such that all types above the cutoff receive good \(A\):
\[
        \uul a
        :=
        \inf\{a\in[0,1]: y(a,b)=A\},
        \qquad
        \inf\emptyset:=1.
\]
Fix such a cutoff. I show that any mechanism implementing it using damages can be replaced with a toll mechanism without lowering any agent's utility or the designer's objective.

First, incentive compatibility implies that all types below the cutoff receive
the same indirect utility \(\uul U\). By the envelope theorem, the indirect utility of type \(a\) can then be written as
\[
        U(a,b)
        =
        \uul U
        +
        \int_{\uul a}^{\max\{\uul a,a\}} x(v,b)\,dv .
\]
Since \(x(v,b)\leq 1\), we have
\begin{equation}\label{eq:boundimplementtt}
        U(a,b)
        \leq
        \uul U+\max\{\uul a,a\}-\uul a .
\end{equation}
This upper bound is implementable without damages by offering good $B$ with a toll of $b-\uul U$ and good $A$ with a toll of $\uul a-\uul U$. These tolls are nonnegative: \(\uul U\leq b\), because the
outside option \(B\) has value \(b\), and \(\uul U\leq \uul a\), because no
\(A\)-option can give utility above \(a\) to a type with value \(a\). Note that this menu implements the same cutoff and attains the utility bound in \eqref{eq:boundimplementtt}.

It remains to check the designer's objective. We can write type $(a,b)$'s contribution to \eqref{eq:obj} as:
\[
        u_{a,b}+\gamma c(a,b)
        =
        u_{a,b}+\gamma(xv-u_{a,b})
        =
        (1-\gamma)u_{a,b}+\gamma xv,
\]
where $v$ denotes her value for the good she receives. Note the above replacement weakly raises \(u_{a,b}\) and sets \(x=1\). Since \(\gamma\in[0,1]\), this contribution weakly increases for all types.
\subsection{Proof of Proposition \ref{prop:boundary}}

Agents choose among good \(A\), good \(B\), and non-participation by comparing
\(U_A(a)\), \(U_B(b)\), and \(0\). By the definitions of \(\uul a\) and
\(\uul b\), types with \(a<\uul a\) get no positive utility from good \(A\), and
types with \(b<\uul b\) get no positive utility from good \(B\). Hence every type
\((a,b)<(\uul a,\uul b)\) chooses non-participation. If \(a>\uul a\) and \(b<\uul b\), then good \(A\) gives positive utility while
good \(B\) does not, so the type chooses good \(A\). Symmetrically, if
\(a<\uul a\) and \(b>\uul b\), the type chooses good \(B\). Since positive masses
of both goods are allocated, \(\uul a,\uul b<1\). It remains to describe choices on the upper rectangle. If
\(U_B(1)\geq U_A(1)\), set $\ool a=1$ and $\ool b=U_B^{-1}(U_A(1));$ otherwise, set $\ool a=U_A^{-1}(U_B(1))$ and $\ool b=1.$ On \([\uul a,\ool a]\), define \(z=U_B^{-1}\circ U_A\). Then \(z\) is
continuous and strictly increasing, with \(z(\uul a)=\uul b\) and
\(z(\ool a)=\ool b\). For \(a\leq\ool a\), types below the boundary choose
good \(A\), while types above it choose good \(B\). If \(\ool a<1\), types
with \(a>\ool a\) choose good \(A\), since $U_A(a)>U_A(\ool a)=U_B(1)\geq U_B(b).$ Types on the boundary are indifferent.

\subsection{Proof of Lemma \ref{lemma:bestundamaged}}

We first show that the market-clearing toll
mechanism maximizes welfare among all undamaged mechanisms. By \eqref{eq:IC}, we may restrict attention to mechanisms that offer each good at a single toll, \(c_A\) and \(c_B\) (possibly equal to $1$). Consider such a mechanism where the supply constraint for good \(A\) is slack. It must then have \(c_A\in (0,1]\); otherwise, almost every agent would receive some good, contradicting slackness together with
\(s_A+s_B\leq1\). By full support and continuity of demand, \(c_A\) could then
be lowered so as to increase the utility of a positive mass of agents while
leaving demand for \(A\) below its supply. Demand for \(B\) could only fall as a result, so
the modified mechanism would remain feasible, contradicting optimality. The
same argument applies to good \(B\). Thus, both supply constraints must bind at the optimal mechanism, and so the optimum must be the market-clearing toll mechanism.

Next, I show that the market-clearing toll mechanism maximizes allocative efficiency. Let
\[
q_A^*(a,b)
=
\mathbb{1}\{a-c_A^*>\max\{0,b-c_B^*\}\},
\qquad
q_B^*(a,b)
=
\mathbb{1}\{b-c_B^*>\max\{0,a-c_A^*\}\},
\]
up to a measure-zero set of indifferent types. Consider any feasible undamaged
mechanism, and let \(q_A,q_B\) denote its allocation indicators. Pointwise,
\[
q_A^*(a,b)(a-c_A^*)+q_B^*(a,b)(b-c_B^*)
\geq
q_A(a,b)(a-c_A^*)+q_B(a,b)(b-c_B^*),
\]
because \((q_A^*,q_B^*)\) selects a maximizer among
\(0\), \(a-c_A^*\), and \(b-c_B^*\). Integrating and using market clearing gives
\[
\int \bigl(q_A^*a+q_B^*b\bigr)\,dF
-c_A^*s_A-c_B^*s_B\geq
\int \bigl(q_Aa+q_Bb\bigr)\,dF
-c_A^*\int q_A\,dF-c_B^*\int q_B\,dF.
\]
Since \(c_A^*,c_B^*\geq0\) and the mechanism is feasible,
\[
\int \bigl(q_A^*a+q_B^*b\bigr)\,dF
\geq
\int \bigl(q_Aa+q_Bb\bigr)\,dF.
\]

Finally, for any undamaged mechanism,
\[
\int (u+\gamma c)\,dF
=
(1-\gamma)\int u\,dF
+
\gamma\int \bigl(q_Aa+q_Bb\bigr)\,dF.
\]
The market-clearing toll mechanism maximizes both terms on the right-hand side
and is therefore optimal among undamaged mechanisms for every
\(\gamma\in[0,1]\).

\subsection{Proof of Theorem \ref{thm:2}}

By Lemma \ref{lemma:bestundamaged}, the market-clearing toll mechanism is
optimal for every \(\gamma\in[0,1]\) among mechanisms that do not use damages. It therefore suffices to construct a feasible mechanism using damages that strictly improves on it. We do so through a local
perturbation.

\paragraph{Step 1: The perturbing menu.} Consider the market-clearing toll menu consisting of the
\(A\)-option \((1,c_A^*)\) and the \(B\)-option \((1,c_B^*)\). Fix \(\tilde b\in(c_B^*,1)\). For small \(\varepsilon>0\), take the alternative menu
\[
        \left\{
        (A,1,c_A^*+\varepsilon\alpha),\,
        (B,1,c_B^*+\varepsilon\beta),\,
        (B,1-\varepsilon,c_B^*+\varepsilon\beta-\varepsilon\tilde b)
        \right\}.
\]
Let \(U_A\) and \(U_B\) denote the \(A\)- and \(B\)-indirect utilities induced by
this perturbed menu. Then
\[
        U_A(a)
        =
        (a-c_A^*-\varepsilon\alpha)_+,
\quad \text{and} \quad
        U_B(b)
        =
        \max\left\{
        0,\,
        b-c_B^*-\varepsilon\beta,\,
        (1-\varepsilon)b-c_B^*-\varepsilon\beta+\varepsilon\tilde b
        \right\}.
\]

For any fixed \(b>c_B^*\), an agent with \(B\)-value \(b\) continues to participate for all sufficiently
small \(\varepsilon\). Hence, away from the lower participation margin, we only
need to compare the two \(B\)-options:
\[
\begin{aligned}
        \max\left\{
        b-c_B^*-\varepsilon\beta,\,
        (1-\varepsilon)b-c_B^*-\varepsilon\beta+\varepsilon\tilde b
        \right\}
        &=
        b-c_B^*
        +
        \varepsilon\left[(\tilde b-b)_+-\beta\right].
\end{aligned}
\]
Thus, the damaged option raises the \(B\)-side payoff by
\(\varepsilon(\tilde b-b)\) for types with \(b<\tilde b\), and has no direct
effect on the \(B\)-side payoff of types with \(b\geq \tilde b\), apart from the
common toll increase \(\varepsilon\beta\).

\paragraph{Step 2: Mass change from the interior \(A\)-\(B\) boundary.} We now compute the first-order change in the masses assigned to the two goods. Define
\[
        D_\varepsilon(a,b)
        :=
        U_A(a)-U_B(b),
\]
and fix a compact interval \(K\subset(c_A^*,\bar a^*)\). For \(a\in K\) and \(b\) in a
small neighborhood of \(z_0(a)\), all agents strictly prefer either good to nothing. Hence
\[
        D_\varepsilon(a,b)
        =
        z_0(a)-b
        +
        \varepsilon
        \left[
        \beta-\alpha-\bigl(\tilde b-b\bigr)_+
        \right].
\]
Let \(z_\varepsilon(a)\) denote the perturbed \(A\)-\(B\) boundary over \(K\); then $D_\varepsilon(a,z_\varepsilon(a))=0.$ Since $\beta-\alpha-(\tilde b-b)_+$ is bounded uniformly on \(K\), this equation
implies $ z_\varepsilon(a)-z_0(a)=O(\varepsilon),$ uniformly in \(a\in K\). Since \(b\mapsto(\tilde b-b)_+\) is Lipschitz, $\bigl(\tilde b-z_\varepsilon(a)\bigr)_+
        =
        \bigl(\tilde b-z_0(a)\bigr)_+
        +
        O(\varepsilon),$ uniformly in \(a\in K\). Substituting this into
\(D_\varepsilon(a,z_\varepsilon(a))=0\) gives
\begin{equation}\label{eq:expansionofz}
        z_\varepsilon(a)
        =
        z_0(a)
        +
        \varepsilon
        \left[
        \beta-\alpha-\bigl(\tilde b-z_0(a)\bigr)_+
        \right]
        +
        o(\varepsilon),
\end{equation}
uniformly in \(a\in K\).

Now, fix \(\eta>0\) and set $K_\eta:=[c_A^*+\eta,\bar a^*-\eta].$ Define
\[
        \Delta M_A^{AB}(\varepsilon;K_\eta)
        :=
        \int_{K_\eta}
        \int_{z_0(a)}^{z_\varepsilon(a)}
        f(a,b)\,db\,da .
\]
This is the signed change in the mass assigned to good \(A\) coming from the
movement of the \(A\)-\(B\) boundary over \(K_\eta\). Using
\eqref{eq:expansionofz} and the continuity of \(f\), for each fixed \(\eta>0\),
\[
        \Delta M_A^{AB}(\varepsilon;K_\eta)
        =
        \varepsilon
        \int_{K_\eta}
        \left[
        \beta-\alpha-\bigl(\tilde b-z_0(a)\bigr)_+
        \right]
        f(a,z_0(a))\,da
        +
        o_\eta(\varepsilon).
\]

We now pass from \(K_\eta\) to the full old interior boundary. Let \(E_\eta:=[c_A^*,\bar a^*]\setminus K_\eta\). This set has length at most \(2\eta\). In
addition, all relevant cutoffs and the \(A\)-\(B\) boundary move by an amount of order \(\varepsilon\), uniformly for small \(\varepsilon\). Hence the region
swept out over \(E_\eta\) has area at most \(C_0\varepsilon\eta\), for some
constant \(C_0\) independent of \(\eta\). Since \(f\) is bounded, there is a
constant \(C_1\), also independent of \(\eta\), such that
\[
        \left|
        \Delta M_A^{AB}(\varepsilon;E_\eta)
        \right|
        \leq
        C_1\varepsilon\eta .
\]
Therefore, for every fixed \(\eta>0\),
\[
\begin{aligned}
        \limsup_{\varepsilon\downarrow0}
        \left|
        \frac{\Delta M_A^{AB}(\varepsilon;[c_A^*,\bar a^*])}{\varepsilon}
        -
        \int_{K_\eta}
        \left[
        \beta-\alpha-\bigl(\tilde b-z_0(a)\bigr)_+
        \right]
        f(a,z_0(a))\,da
        \right|
        \leq C_1\eta .
\end{aligned}
\]
Now, let \(\eta\downarrow0\). Since the integrand is bounded and continuous on
\([c_A^*,\bar a^*]\), the integral over \(K_\eta\) converges to the integral over
\([c_A^*,\bar a^*]\). Hence
\[
        \Delta M_A^{AB}(\varepsilon;[c_A^*,\bar a^*])
        =
        \varepsilon
        \int_{c_A^*}^{\bar a^*}
        \left[
        \beta-\alpha-\bigl(\tilde b-z_0(a)\bigr)_+
        \right]
        f(a,z_0(a))\,da
        +o(\varepsilon).
\]
Since good \(B\) lies on the other side of the same boundary, the corresponding contribution to the mass assigned to good
\(B\) is given by $\Delta M_B^{AB} = - \Delta M_A^{AB}$.

\paragraph{Step 3: The participation margins.} It remains to compute the participation-margin contributions. Along the lower \(A\)-participation margin, the cutoff changes from \(c_A^*\) to
\(c_A^*+\varepsilon\alpha\). Therefore the signed change in the mass assigned to
good \(A\) along this margin is
\[
        -\int_0^{c_B^*}
        \int_{c_A^*}^{c_A^*+\varepsilon\alpha}
        f(a,b)\,da\,db
        =
        -\varepsilon\alpha
        \int_0^{c_B^*} f(c_A^*,b)\,db
        +o(\varepsilon).
\]
By the definition of \(P_A^{c^*_A}\), this equals $-\varepsilon\alpha P_A^{c^*_A}+o(\varepsilon).$ The possible overlap with the lower \(B\)-participation margin is confined to an
\(O(\varepsilon)\times O(\varepsilon)\) neighborhood of \((c_A^*,c_B^*)\), and hence
has mass \(o(\varepsilon)\).

Along the lower \(B\)-participation margin, the undamaged \(B\)-option has cutoff
\(c_B^*+\varepsilon\beta\), while the damaged \(B\)-option has cutoff solving $ (1-\varepsilon)b-c_B^*-\varepsilon\beta+\varepsilon\tilde b=0.$ Thus, the lower cutoff for participation in the \(B\)-menu is
\[
        b_\varepsilon
        =
        c_B^*+\varepsilon(\beta+c_B^*-\tilde b)+o(\varepsilon),
\]
because \(\tilde b>c_B^*\) makes the damaged option the lower-threshold
\(B\)-option. Hence the signed change in the mass assigned to good \(B\) along
this margin is
\[
        -\int_0^{c_A^*}
        \int_{c_B^*}^{b_\varepsilon}
        f(a,b)\,db\,da
        =
        -\varepsilon(\beta+c_B^*-\tilde b)
        \int_0^{c_A^*}f(a,c_B^*)\,da
        +o(\varepsilon).
\]
By the definition of \(P_B^{c^*_B}\), this equals $\varepsilon P_B^{c^*_B}(\tilde b-c_B^*-\beta)+o(\varepsilon).$ Again, the possible overlap with the lower \(A\)-participation margin is
confined to an \(O(\varepsilon)\times O(\varepsilon)\) neighborhood of
\((c_A^*,c_B^*)\), and hence has mass \(o(\varepsilon)\).

\paragraph{Step 4: Preserving supplies to first order.} We now choose $\alpha,\beta$ so that this first-order perturbation leaves the total allocated amounts of both goods unchanged. Combining the lower participation-margin contributions with the interior
\(A\)-\(B\) boundary contribution gives
\[
        \Delta M_A
        =
        \varepsilon
        \left\{
        -\alpha P_A^{c^*_A}
        +
        \int_{c_A^*}^{\bar a^*}
        \left[\beta-\alpha-\bigl(\tilde b-z_0(a)\bigr)_+\right] f(a,z_0(a))\,da
        \right\}
        +o(\varepsilon).
\]
\[
        \Delta M_B
        =
        \varepsilon
        \left\{
        P_B^{c^*_B}(\tilde b-c_B^*-\beta)
        -
        \int_{c_A^*}^{\bar a^*}
        \left[\beta-\alpha-\bigl(\tilde b-z_0(a)\bigr)_+\right] f(a,z_0(a))\,da
        \right\}
        +o(\varepsilon).
\]
Setting these first-order changes equal to zero gives
\begin{equation}\label{eq:A-supply-preserve-damages}
        -\alpha P_A^{c^*_A}
        +
        \int_{c_A^*}^{\bar a^*}
        \left[
        \beta-\alpha-\bigl(\tilde b-z_0(a)\bigr)_+
        \right]
        f(a,z_0(a))\,da
        =0.
\end{equation}
\begin{equation}\label{eq:B-supply-preserve-damages}
        P_B^{c^*_B}(\tilde b-c_B^*-\beta)
        +
        \int_{c_A^*}^{\bar a^*}
        \left[
        \bigl(\tilde b-z_0(a)\bigr)_+
        +\alpha-\beta
        \right]
        f(a,z_0(a))\,da
        =0.
\end{equation}
Using the definitions of \(P_{AB}\) and \(Q\), equations
\eqref{eq:A-supply-preserve-damages} and
\eqref{eq:B-supply-preserve-damages} can be written as
\[
        (P_A^{c^*_A}+P_{AB})\alpha-P_{AB}\beta
        =
        -Q,
\quad \text{and} \quad
        -P_{AB}\alpha+(P_B^{c^*_B}+P_{AB})\beta
        =
        P_B^{c^*_B}(\tilde b-c_B^*)+Q.
\]
The coefficient matrix is
\begin{equation}\label{eq:matrixx}
        \begin{pmatrix}
        P_A^{c^*_A}+P_{AB} & -P_{AB}\\
        -P_{AB} & P_B^{c^*_B}+P_{AB}
        \end{pmatrix};
\end{equation}
it has the determinant
\begin{equation}\label{eq:determinattt}
        (P_A^{c^*_A}+P_{AB})(P_B^{c^*_B}+P_{AB})-P_{AB}^2
        =
        P_A^{c^*_A}P_B^{c^*_B}+P_A^{c^*_A}P_{AB}+P_B^{c^*_B}P_{AB} >0,
\end{equation}
which is positive because \(P_A^{c^*_A}>0\) and \(P_{AB}>0\). The latter follows from \(c_A^*<1\), \(c_B^*<1\), and the fact that the old interior boundary has positive length. The supply-preserving first-order corrections are thus unique and satisfy \eqref{eq:alphabetadef}.

Moreover, these corrections are nonnegative. Indeed, for \(a\geq c_A^*\), we have $\bigl(\tilde b-z_0(a)\bigr)_+
        \leq \tilde b-c_B^*,$ and hence \(Q\leq P_{AB}(\tilde b-c_B^*)\). Therefore
\(\alpha \geq0\). The formula for \(\beta \) immediately
gives \(\beta \geq0\). Thus the old \(A\)- and \(B\)-option tolls
remain nonnegative. Since \(c_B^*>0\), the damaged \(B\)-option also has a
nonnegative toll for all sufficiently small \(\varepsilon>0\).

\paragraph{Step 5: The first-order objective gain.}
We now compute the first-order effect on the objective.
The movements of the choice boundaries do not create additional first-order
utility terms, because types on the old \(A\)-\(B\) boundary are indifferent
between the two goods, and types on the old participation margins receive zero
utility. Hence the first-order utility change is obtained by integrating the
direct utility changes over the old assignment regions.

Old \(A\)-choosers receive the first-order utility change \(-\alpha\). Since
the undamaged menu clears the supply of good \(A\), their utility contribution is $-\alpha s_A.$
Old \(B\)-choosers with value \(b\) receive the first-order utility change $ (\tilde b-b)_+-\beta.$
Since \(P_B^b\) is the density of old \(B\)-choosers with value \(b\), their
utility contribution is
\[
        \int_{c_B^*}^{1}
        \left[(\tilde b-b)_+-\beta\right]P_B^b\,db.
\]
Therefore the first-order change in the utility component is
\[
        \varepsilon
        \left\{
        \int_{c_B^*}^{\tilde b}
        (\tilde b-b)P_B^b\,db
        -
        \alpha s_A
        -
        \beta s_B
        \right\}
        +o(\varepsilon).
\]
Now consider the toll component. Since the perturbation preserves the masses
assigned to both goods to first order, the zeroth-order toll terms
\(c_A^*s_A+c_B^*s_B\) do not change to first order. The first-order change in total
tolls is
\[
        \varepsilon
        \left\{
        \alpha s_A+\beta s_B
        -
        \tilde b
        \int_{c_B^*}^{\tilde b}P_B^b\,db
        \right\}
        +o(\varepsilon).
\]
Indeed, old \(A\)-choosers face the toll increase \(\varepsilon\alpha\), old
\(B\)-choosers face the common toll increase \(\varepsilon\beta\), and old
\(B\)-choosers with \(b<\tilde b\) select the damaged option, whose toll is
lower by \(\varepsilon\tilde b\).

Combining the utility and toll components gives
\[
        \varepsilon
        \left\{
        \int_{c_B^*}^{\tilde b}
        \bigl((1-\gamma)\tilde b-b\bigr)P_B^b\,db
        -
        (1-\gamma)(\alpha s_A+\beta s_B)
        \right\}
        +o(\varepsilon).
\]
Substituting from \eqref{eq:alphabetadef}, condition
\eqref{eq:local-damage-condition-main} says that the coefficient of
\(\varepsilon\) is strictly positive. Hence the objective strictly increases for
all sufficiently small \(\varepsilon>0\).

\paragraph{Step 6: Exact feasibility.}
The previous steps choose \(\alpha \) and \(\beta \) so that
the supply constraints are preserved to first order. I now perturb these two
numbers slightly so that the constraints hold exactly. For \(i\in\{A,B\}\), let \(M_i(\varepsilon,\alpha,\beta)\) denote the mass
assigned to good \(i\) by the perturbed menu, and let \(M_i^0\) denote the mass
assigned to good \(i\) by the original menu. For \(\varepsilon>0\), define the
normalized supply errors
\[
        H_i(\varepsilon,\alpha,\beta)
        :=
        \frac{
        M_i(\varepsilon,\alpha,\beta)-M_i^0
        }{\varepsilon}.
\]
The expansions above are uniform in a small neighborhood
of \((\alpha,\beta)\). Therefore \(H_A\) and \(H_B\) extend continuously to
\(\varepsilon=0\), with
\[
H_A(0,\alpha,\beta)
        =
        -\alpha P_A^{c^*_A}
        +
        \int_{c_A^*}^{\bar a^*}
        \left[
        \beta-\alpha-\bigl(\tilde b-z_0(a)\bigr)_+
        \right]
        f(a,z_0(a))\,da,
\]
\[
H_B(0,\alpha,\beta)
        =
        P_B^{c^*_B}(\tilde b-c_B^*-\beta)
        -
        \int_{c_A^*}^{\bar a^*}
        \left[
        \beta-\alpha-\bigl(\tilde b-z_0(a)\bigr)_+
        \right]
        f(a,z_0(a))\,da .
\]
Since \(f\) is Lipschitz, these extensions are
continuously differentiable in \((\alpha,\beta)\). By construction, $H_A(0,\alpha ,\beta )
        =
        H_B(0,\alpha ,\beta )
        =
        0 .$ Moreover, the derivative of \((H_A,H_B)\) with respect to
\((\alpha,\beta)\) at \((0,\alpha,\beta)\) is the negative of
\eqref{eq:matrixx}, and hence is invertible by \eqref{eq:determinattt}. 

If \(c_A^*>0\), the implicit function theorem implies that, for all
sufficiently small \(\varepsilon>0\), there exist
\[
        \alpha_\varepsilon=\alpha+o(1),
        \qquad
        \beta_\varepsilon=\beta+o(1),
\]
such that $H_A(\varepsilon,\alpha_\varepsilon,\beta_\varepsilon)
        =
        H_B(\varepsilon,\alpha_\varepsilon,\beta_\varepsilon)
        =
        0.$ Since \(c_A^*>0\), all three tolls remain nonnegative for sufficiently small
\(\varepsilon\).

Suppose instead that \(c_A^*=0\). Then \(P_B^{c_B^*}=0\), and hence
\(\alpha=0\). Set \(\alpha_\varepsilon=0\), so that the \(A\)-option remains
free. Almost every type then chooses one of the two goods, and therefore
\[
        M_A(\varepsilon,0,\beta)+M_B(\varepsilon,0,\beta)=1
        =
        M_A^0+M_B^0.
\]
Moreover,
\[
        \frac{\partial H_B}{\partial\beta}(0,0,\beta)
        =
        -P_{AB}
        <0.
\]
The one-dimensional implicit function theorem therefore gives $\beta_\varepsilon=\beta+o(1)$ such that $M_B(\varepsilon,0,\beta_\varepsilon)=M_B^0.$ The preceding identity then also implies $M_A(\varepsilon,0,\beta_\varepsilon)=M_A^0.$

Thus, in either case, the adjusted perturbed menu preserves both supplies
exactly and has nonnegative tolls for all sufficiently small
\(\varepsilon>0\). Since the exact-feasibility adjustments converge to the
first-order corrections, they do not change the first-order coefficient of the
objective gain computed in Step 5.

By condition
\eqref{eq:local-damage-condition-main}, that coefficient is strictly positive.
Therefore, for all sufficiently small \(\varepsilon>0\), the adjusted perturbed
menu is feasible and strictly improves on the market-clearing toll mechanism.

\subsection{Proof of Corollary \ref{cor:damagesnormalized}}

Since \(s_A+s_B=1\), the market-clearing toll mechanism must allocate some good to almost
every type. If both tolls were strictly positive, all types with \(a<c_A^*\) and \(b<c_B^*\) would choose neither good, contradicting market clearing. Since \(c_B^*>0\), it
follows that \(c_A^*=0\).

By Theorem \ref{thm:2} and the assumption that \(\gamma=0\), it is enough to show
\[
        \int_{c_B^*}^{\tilde b}
        (\tilde b-b)
        \left[
        \int_0^{b-c_B^*} f(t,b)\,dt
        \right]db
        >
        \alpha s_A+\beta s_B.
\]
Because \(c_A^*=0\),
\[
        P_B^{c^*_B}=\int_0^{c_A^*}f(a,c_B^*)\,da=0,
\]
and hence \(\alpha =0\). Moreover,
\[
        P_{AB}
        =
        \int_0^{1-c_B^*} f(a,a+c_B^*)\,da
\]
and
\[
        Q
        =
        \int_0^{1-c_B^*}
        (\tilde b-a-c_B^*)_+
        f(a,a+c_B^*)\,da
        =
        \int_0^{\tilde b-c_B^*}
        (\tilde b-c_B^*-a)f(a,a+c_B^*)\,da.
\]
Thus
\[
        \beta 
        =
        \frac{
        \int_0^{\tilde b-c_B^*}
        (\tilde b-c_B^*-a)f(a,a+c_B^*)\,da
        }{
        \int_0^{1-c_B^*} f(a,a+c_B^*)\,da
        }.
\]
Changing variables \(b=a+c_B^*\), the left-hand side of the damage condition becomes
\[
\begin{aligned}
        \int_{c_B^*}^{\tilde b}
        (\tilde b-b)
        \left[
        \int_0^{b-c_B^*} f(t,b)\,dt
        \right]db
        &=
        \int_0^{\tilde b-c_B^*}
        (\tilde b-c_B^*-a)
        \left[
        \int_0^a f(t,a+c_B^*)\,dt
        \right]da .
\end{aligned}
\]
Also, since the undamaged menu clears supplies and \(s_A+s_B=1\),
\[
        s_B
        =
        \int_{c_B^*}^1
        \left[
        \int_0^{b-c_B^*} f(t,b)\,dt
        \right]db
        =
        \int_0^{1-c_B^*}
        \left[
        \int_0^a f(t,a+c_B^*)\,dt
        \right]da .
\]
Thus the sufficient condition from Theorem \ref{thm:2} becomes
\[
        \int_0^{\tilde b-c_B^*}
        (\tilde b-c_B^*-a)
        \left[
        \int_0^a f(t,a+c_B^*)\,dt
        \right]da
        >
        \frac{
        \int_0^{\tilde b-c_B^*}
        (\tilde b-c_B^*-a)f(a,a+c_B^*)\,da
        }{
        \int_0^{1-c_B^*} f(a,a+c_B^*)\,da
        }
        \int_0^{1-c_B^*}
        \left[
        \int_0^a f(t,a+c_B^*)\,dt
        \right]da .
\]
Equivalently,
\[
\frac{
\int_0^{\tilde b-c_B^*}
(\tilde b-c_B^*-a)
\left[
\int_0^a f(t,a+c_B^*)\,dt
\right]da
}{
\int_0^{\tilde b-c_B^*}
(\tilde b-c_B^*-a)f(a,a+c_B^*)\,da
}
>
\frac{
\int_0^{1-c_B^*}
\left[
\int_0^a f(t,a+c_B^*)\,dt
\right]da
}{
\int_0^{1-c_B^*}
f(a,a+c_B^*)\,da
},
\]
which is equivalent to \eqref{eq:covariancecorr}.

\subsection{Proof of Lemma \ref{lem:objintbyparts}}

By Proposition \ref{prop:boundary}, welfare can be written as
\[
\int_{\uul a}^{\ool a}\int_0^{z(a)} U_A(a)f(a,b)\,db\,da
+
\int_{\uul b}^{1}\int_0^{\hat z^{-1}(b)} U_B(b)f(a,b)\,da\,db
+
\int_{\ool a}^{1}\int_0^1 U_A(a)f(a,b)\,db\,da .
\]
Using the boundary relation \(U_B(z(a))=U_A(a)\), the middle term can be
rewritten by the change of variables \(b=z(a)\). Hence the utility part equals
\[
        \int_{\uul a}^{\ool a} U_A(a)\,dF(a,z(a))
        +
        \int_{\ool a}^{1} U_A(a)\,dF(a,1).
\]
Equivalently, using the extended boundary \(\hat z\), this is
\[
        \int_0^1 U_A(a)\,dF(a,\hat z(a)).
\]
Integrating by parts gives
\[
        U_A(1)F(1,\hat z(1))
        -
        U_A(0)F(0,\hat z(0))
        -
        \int_0^1 U_A'(a)F(a,\hat z(a))\,da.
\]
Since \(F(1,\hat z(1))=F(1,1)=1\) and \(F(0,\hat z(0))=0\), this becomes
\[
        U_A(1)
        -
        \int_0^1 U_A'(a)F(a,\hat z(a))\,da.
\]

\subsection{Proof of Lemma \ref{lemma:indutilityimplement}}

We first prove necessity. Fix a feasible mechanism implementing \((z,U_A)\), and
let \(U_B\) be the induced \(B\)-indirect utility. Since \(U_A\) and \(U_B\) are
convex, their derivatives are non-decreasing wherever they exist. By Proposition
\ref{prop:boundary}, $U_A(a)=U_B(z(a))$ for all $a\in[\uul a,\ool a].$ Differentiating at points where the relevant derivatives exist gives
\[
        U_A'(a)=U_B'(z(a))z'(a),
        \qquad\text{so}\qquad
        \frac{U_A'(a)}{z'(a)}=U_B'(z(a)).
\]
Since \(z\) is increasing and \(U_B'\) is non-decreasing, \(U_A'/z'\) is
non-decreasing. Moreover, the envelope theorem and \(x\in[0,1]\) imply that
\(U_A'\) and \(U_B'=U_A'/z'\) take values in \((0,1]\) on their active
intervals. This proves \((i)\).

Next, Proposition \ref{prop:boundary} also gives $z=U_B^{-1}\circ U_A$ on the active interval. Hence the one-sided derivatives of \(z\) are determined
by the corresponding one-sided derivatives of \(U_A\) and \(U_B\), i.e.,
\[
        z^{\prime +}(a)
        =
        \frac{U_A^{\prime +}(a)}
        {(U_B^{\prime +}\circ z)(a)},
\]
and the analogous formula holds for left derivatives. Since the relevant
one-sided derivatives of \(U_A\) and \(U_B\) are finite and strictly positive on
the active intervals, \(z\) has finite, strictly positive one-sided derivatives
at every \(a\in(\uul a,\ool a)\), and a finite, strictly positive left derivative
at \(\ool a\). This proves \((ii)\).

The supply constraints follow from Proposition \ref{prop:boundary}. Up to null
sets, the agents receiving good \(A\) are those below the extended boundary, and
the agents receiving good \(B\) are those above it. Therefore
\[
        \int \mathbb{1}_{y(a,b)=A}\,dF(a,b)
        =
        \int_{\uul a}^{1}\int_0^{\hat z(a)} f(a,v)\,dv\,da, \quad
        \int \mathbb{1}_{y(a,b)=B}\,dF(a,b)
        =
        \int_{\uul b}^{1}\int_0^{\hat z^{-1}(b)} f(v,b)\,dv\,db.
\]
Feasibility of the original mechanism therefore implies \((iii)\).

It remains to prove the regularity condition. Under Assumption \ref{ass:2}, the
quality rule is piecewise continuously differentiable. By Proposition
\ref{prop:boundary}, types \((a,0)\) with \(a>\uul a\) choose good \(A\), and
types \((0,b)\) with \(b>\uul b\) choose good \(B\). The envelope theorem then
implies that, wherever the relevant derivatives exist, $U_A'(a)=x(a,0)$ and $U_B'(b)=x(0,b)$.
Thus \(U_A'\) and \(U_B'\) are piecewise continuously differentiable. Since
\(z=U_B^{-1}\circ U_A\), the boundary \(z\) is piecewise continuously
differentiable. On each interval on which the relevant functions are
differentiable,
\[
        z'(a)=\frac{U_A'(a)}{U_B'(z(a))}.
\]
The right-hand side is piecewise continuously differentiable, so \(z\) is
piecewise twice continuously differentiable. Finally,
\[
        \frac{U_A'(a)}{z'(a)}=U_B'(z(a)),
\]
so \(U_A'/z'\) is piecewise continuously differentiable. This proves \((iv)\).

We now prove sufficiency. Fix \((z,U_A)\) satisfying \((i)\)-\((iv)\). Define
\(U_B\) on the active interval by
\[
        U_B(b):=U_A(z^{-1}(b))
        \quad\text{for } b\in[\uul b,\ool b].
\]
Set $U_B(b)=0$ for $b\leq \uul b.$ For \(b\geq \ool b\), extend \(U_B\) linearly with slope
\[
        q:=\lim_{a\uparrow \ool a}\frac{U_A'(a)}{z'(a)},
        \quad \text{that is,} \quad
        U_B(b)=U_B(\ool b)+q(b-\ool b)
        \quad\text{for } b\geq \ool b.
\]
By \((i)\), this extension is convex and non-decreasing. Moreover, on the active
interval,
\[
        U_B'(z(a))=\frac{U_A'(a)}{z'(a)}
\]
wherever the derivative exists, and hence $U_B(z(a))=U_A(a)$ for all $ a\in[\uul a,\ool a]$. We now construct a direct mechanism. If neither good gives positive indirect
utility, assign non-participation and set \(x(a,b)=c(a,b)=0\). Otherwise assign
the good that gives the larger indirect utility:
\[
        y(a,b)=
        \begin{cases}
        A, & \text{if } U_A(a)\geq U_B(b) \text{ and } U_A(a)>0,\\
        B, & \text{if } U_B(b)>U_A(a),\\
        \varnothing, & \text{otherwise.}
        \end{cases}
\]
For types assigned goods \(A\) and $B$, respectively, set
\[
        x(a,b)=U_A^{\prime-}(a),
        \qquad
        c(a,b)=aU_A^{\prime-}(a)-U_A(a).
\]
\[
        x(a,b)=U_B^{\prime-}(b),
        \qquad
        c(a,b)=bU_B^{\prime-}(b)-U_B(b).
\]
For types assigned \(\varnothing\), keep \(x(a,b)=c(a,b)=0\). By \((i)\), these qualities lie in \([0,1]\). Convexity and
\(U_A(\uul a)=U_B(\uul b)=0\) also imply that the tolls are nonnegative.

It remains to verify incentive compatibility. Consider the \(A\)-options. Since
\(U_A\) is convex, \(U_A^{\prime-}(a)\) is a subgradient of \(U_A\) at \(a\). Thus
for every \(a'\),
\[
        U_A(a')
        \geq
        U_A(a)+U_A^{\prime-}(a)(a'-a).
\]
Equivalently, type \(a'\) gets weakly higher utility from the \(A\)-option
designed for \(a'\) than from the \(A\)-option designed for \(a\). Therefore no
type wants to misreport among \(A\)-options. The same argument applies to
\(B\)-options.

The utility from the assigned \(A\)-option is $        aU_A^{\prime-}(a)
        -
        \bigl(aU_A^{\prime-}(a)-U_A(a)\bigr)
        =
        U_A(a),$ and the utility from the assigned \(B\)-option is \(U_B(b)\). Since the mechanism
assigns each type the good that gives the larger of \(U_A(a)\) and \(U_B(b)\), no
type wants to switch goods. Individual rationality follows because both
indirect utilities are nonnegative and the outside option is available.

By construction $U_A(a)=U_B(z(a))$ for all $a\in[\uul a,\ool a],$ so the mechanism implements the boundary \(z\). Finally, the supply constraints
hold by \((iii)\), because Proposition \ref{prop:boundary} identifies the masses
assigned to goods \(A\) and \(B\) with the two integrals in \eqref{eq:SS}. Hence
the constructed mechanism is feasible and implements \((z,U_A)\).

\subsection{Proof of Lemma \ref{lemma:optimplement} }

Fix the boundary \(z\). By Lemma \ref{lemma:indutilityimplement}, feasibility of
\((z,\check U_A)\) is equivalent to \(\check U_A'\) and
\(\check U_A'/z'\) being non-decreasing, together with the bounds $0\leq \check U_A'\leq 1,$ $0\leq {\check U_A'}/{z'}\leq 1,$ and the regularity requirements stated there. Since the boundary is fixed, the
supply constraints do not depend on \(\check U_A\).

Let $p(a):=m(a)k$ for $a\in(\uul a,\ool a).$ We first show that the \(U_A\) defined in \eqref{eq:optlower} is feasible. The
function \(p\) is non-decreasing by construction. Moreover, on every interval on
which \(z\) is twice continuously differentiable,
\[
\bigl(\log(p/z')\bigr)'
=
\max\left\{0,\frac{z''}{z'}\right\}-\frac{z''}{z'}
\geq 0.
\]
At an upward jump of \(z'\), the product in the definition of \(m\) makes
\(p/z'\) continuous. At a downward jump of \(z'\), the function \(p\) is
continuous while \(z'\) falls, so \(p/z'\) jumps upward. Hence both \(p\) and
\(p/z'\) are non-decreasing. Finally,
\[
p^-(\ool a)
=
m(\ool a)k
=
\frac{1}{\max\{1,1/z^{\prime-}(\ool a)\}}
=
\min\{1,z^{\prime-}(\ool a)\}.
\]
Since \(p\) and \(p/z'\) are non-decreasing, this implies \(p\leq 1\) and
\(p/z'\leq 1\) on \((\uul a,\ool a)\). Thus \(U_A\) satisfies the feasibility
conditions of Lemma \ref{lemma:indutilityimplement}.

Now take any other feasible pair \((z,\check U_A)\), and write
\(\check p=\check U_A'\). We show that \(\check p\leq p\) almost everywhere.
On every smooth interval on which \(\check p>0\),
\[
(\log \check p)' \geq 0
\qquad\text{and}\qquad
(\log(\check p/z'))'\geq 0;
\quad \text{thus}\quad 
(\log \check p)'
\geq
\max\left\{0,\frac{z''}{z'}\right\}
=
(\log p)'.
\]
Therefore \(\log(p/\check p)\) is non-increasing on each such interval.

It remains only to check jumps. If \(z'\) jumps upward at \(t\), then
\[
\frac{p^+(t)}{p^-(t)}
=
\frac{z^{\prime+}(t)}{z^{\prime-}(t)}.
\]
Since \(\check p/z'\) is non-decreasing,
\[
\frac{\check p^+(t)}{\check p^-(t)}
\geq
\frac{z^{\prime+}(t)}{z^{\prime-}(t)}.
\]
Hence \((p/\check p)^+(t)\leq (p/\check p)^-(t)\). If \(z'\) jumps downward, then
\(p\) is continuous and \(\check p\) is non-decreasing, so again
\((p/\check p)^+(t)\leq (p/\check p)^-(t)\). The same conclusion is immediate at
any jump of \(\check p\) at which \(z'\) is continuous. Hence \(p/\check p\) is
non-increasing on \((\uul a,\ool a)\).

By feasibility, $\check p^-(\ool a)\leq 1$ and $\frac{\check p^-(\ool a)}{z^{\prime-}(\ool a)}\leq 1,$ so
\[
\check p^-(\ool a)\leq \min\{1,z^{\prime-}(\ool a)\}=p^-(\ool a).
\]
Since \(p/\check p\) is non-increasing, it follows that $\check p(a)\leq p(a)$ for a.e. $a\in(\uul a,\ool a).$ Also, both marginal utilities are zero on \((0,\uul a)\), while on
\((\ool a,1)\) feasibility gives \(\check U_A'\leq 1=U_A'\). Thus $\check U_A'(a)\leq U_A'(a)$ for a.e. $a\in(0,1).$

Finally, since \(U_A(a)=0\) below \(\uul a\), we can write \eqref{eq:Wprime} as
\[
\int_0^1 \bigl(1-F(a,\hat z(a))\bigr)U_A'(a)\,da.
\]
For the fixed boundary \(z\), the weight \(1-F(a,\hat z(a))\) is non-negative
and independent of \(U_A\). Since the constructed \(U_A'\) pointwise dominates
the derivative of every other feasible \(\check U_A\), it maximizes
\eqref{eq:Wprime} among all feasible pairs \((z,\check U_A)\).

\subsection{Proof of Lemma \ref{lemma:vararg}}

First, since \(z^*\) and its inverse are strictly increasing, the
anti-hazard-rate condition in Theorem \ref{th:1} implies that
\begin{equation}\label{eq:strict-antihazard-along-boundary}
\frac{F_{A|B}(a|z^*(a))}
     {f_{A|B}(a|z^*(a))}
\quad \text{is strictly increasing in \(a\),}
\qquad
\frac{F_{B|A}(b|(z^*)^{-1}(b))}
     {f_{B|A}(b|(z^*)^{-1}(b))}
\quad \text{is strictly increasing in \(b\).}
\end{equation}
The proof has two steps. First, I show that \(z^*\) is piecewise linear. Then I
rule out kinks.

\paragraph{Step 1: \(z^*\) is piecewise linear.}
Suppose towards a contradiction that \(z^*\) is not affine on some concave \(C^2\) piece. Then there is a
closed interval $[\alpha,\beta]\subset(\uul a^*,\ool a^*)$ on which $z^{*\prime\prime}(a)<0$ for all $a\in[\alpha,\beta].$

By Lemma \ref{lemma:optimplement}, the welfare-maximizing \(A\)-indirect utility
for a fixed boundary satisfies \(U_A'(a)=k\) on any concave interval of the
boundary, for some \(k>0\). Consider local perturbations of \(z^*\) on
\([\alpha,\beta]\) that keep fixed \(z^*(\alpha),z^*(\beta)\), the one-sided
slopes \(z^{*\prime}(\alpha),z^{*\prime}(\beta)\), and the mass below the
boundary on this interval:
\[
        \int_\alpha^\beta \int_0^{z(a)} f(a,b)\,db\,da
        =
        \int_\alpha^\beta \int_0^{z^*(a)} f(a,b)\,db\,da .
\]
For all sufficiently small perturbations of this kind that preserve concavity
and keep \(z'>0\), the same \(U_A\) remains feasible. Indeed, \(U_A'\) is still
constant and \(U_A'/z'\) remains non-decreasing. Since the endpoint slopes are
fixed, these monotonicity constraints also remain valid where the perturbed
interval is pasted back into the unchanged boundary. The fixed-mass condition
preserves the supply of good \(A\) on the interval, and therefore also preserves
the supply of good \(B\) on the same vertical strip. By Lemma \ref{lem:objintbyparts}, such perturbations affect welfare only through
\[
        -k\int_\alpha^\beta F(a,z(a))\,da .
\]
Thus \(z^*|_{[\alpha,\beta]}\) must locally maximize
\[
        -\int_\alpha^\beta F(a,z(a))\,da
\]
among concave paths with the same endpoint values, endpoint slopes, and mass
below the boundary. Equivalently, writing \(y=z'\), \(u=y'\) and $q'(a)=\int_0^{z(a)} f(a,b)\,db,$ the local maximization has fixed values of \(z\) and \(y\) at both endpoints and
fixed \(q(\beta)-q(\alpha)\).

Let \(\xi,\phi,\mu\) denote the costates associated with \(z,y,q\),
respectively. The Hamiltonian is
\[
        H(a,z,y,u,q,\xi,\phi,\mu)
        =
        -F(a,z)
        +
        \mu\int_0^{z} f(a,b)\,db
        +
        \xi y+\phi u .
\]
The costate on \(q\) is constant, so \(\mu\) is constant. The remaining costate
equations are
\[
        \xi'(a)
        =
        \int_0^a f(v,z^*(a))\,dv
        -
        \mu f(a,z^*(a)),
        \qquad
        \phi'(a)=-\xi(a).
\]
Hence
\begin{equation}\label{eq:phi-second-vararg}
        \phi''(a)
        =
        \mu f(a,z^*(a))
        -
        \int_0^a f(v,z^*(a))\,dv
        =
        f(a,z^*(a))
        \Bigl(\mu-\frac{F_{A|B}(a|z^*(a))}{f_{A|B}(a|z^*(a))}\Bigr).
\end{equation}
On \([\alpha,\beta]\), the control \(u^*=z^{*\prime\prime}\) is strictly below
the constraint \(u\leq 0\). Therefore the maximum condition for the free control
gives $ \phi(a)=0$ for all $a\in(\alpha,\beta).$ Thus, \(\phi''(a)=0\) on \((\alpha,\beta)\). By
\eqref{eq:phi-second-vararg},
\[
        \frac{F_{A|B}(a|z^*(a))}{f_{A|B}(a|z^*(a))}=\mu
        \qquad\text{for all }a\in(\alpha,\beta).
\]
This, however, contradicts the strict monotonicity in \eqref{eq:strict-antihazard-along-boundary}. Hence
\(z^*\) cannot be strictly concave on any open interval. Applying a symmetric argument rules out strictly
convex intervals. Therefore
\(z^{*\prime\prime}=0\) wherever the second derivative exists. Since \(z^*\) is
piecewise twice continuously differentiable by Lemma
\ref{lemma:indutilityimplement}, it is piecewise linear.

\paragraph{Step 2: \(z^*\) has no kinks.}
We now know that \(z^*\) is piecewise linear. Pick \(\ool v\) such that $[\uul a^*,\ool v]$ is the largest interval starting at \(\uul a^*\) on which \(z^*\) is either
convex or concave. Suppose first that \(z^*\) is concave on this interval. The
convex case is symmetric after passing to the inverse boundary and using the
\(B\)-side inverse anti-hazard rate.

On this initial concave interval, possible nonlinearity of \(z^*\) can only take
the form of downward jumps in the slope. Consider local variations of
\(z^*\) on \([\uul a^*,\ool v]\) that keep fixed the endpoint values
\(z^*(\uul a^*)\) and \(z^*(\ool v)\), keep fixed the terminal slope
\(z^{*\prime}_{-}(\ool v)\), and preserve the mass below the boundary:
\[
        \int_{\uul a^*}^{\ool v}
        \left(\int_0^{z(a)} f(a,b)\,db\right) da
        =
        \int_{\uul a^*}^{\ool v}
        \left(\int_0^{z^*(a)} f(a,b)\,db\right) da .
\]
The initial slope is free. The admissible paths are concave, so away from kinks
we can write
\[
        z'(a)=y(a), \qquad y'(a)=u(a), \qquad u(a)\leq 0,
\]
and at each kink \(a_i\), $ y_+(a_i)-y_-(a_i)=v_i\leq 0,$ while \(z\) and the accumulated-mass state do not jump.

The restriction of \(z^*\) to \([\uul a^*,\ool v]\) must solve this local
problem. Indeed, any admissible improvement can be pasted into the original
boundary. The endpoint conditions preserve continuity of the boundary and the
terminal slope, and the mass condition preserves the supply of good \(A\) on the
interval, hence also the supply of good \(B\) on the same vertical strip. Since
Lemma \ref{lemma:optimplement} gives \(U_A'=k\) on a concave interval, for some
constant \(k>0\), the same pasting argument implies that any improvement in
\[
        -\int_{\uul a^*}^{\ool v} F(a,z(a))\,da
\]
would strictly improve total welfare, contradicting the fact that \(z^*\) maximizes it.

We now show that such a local optimum cannot have a downward jump in
slope. Let \((\xi,\phi,\mu)\) be the costates associated with \(z,y\), and the
accumulated-mass state. Away from jump
points, the Hamiltonian is
\begin{equation}\label{eq:21}
        \mathcal{H}
        =
        -F(a,z^{*}(a))
        +
        \mu\int_0^{z^{*}(a)} f(a,b)\,db
        +
        \xi(a)y(a)
        +
        \phi(a)u(a).
\end{equation}
The costate on the accumulated-mass state is constant, so write it as \(\mu\).
The remaining costate equations are
\begin{equation}\label{eq:22}
        \xi'(a)
        =
        \int_0^a f(v,z^{*}(a))\,dv
        -
        \mu f(a,z^{*}(a)),
\end{equation}
and
\begin{equation}\label{eq:23}
        \phi'(a)=-\xi(a).
\end{equation}
Since the initial value of \(y\) is free, the transversality condition gives
\begin{equation}\label{eq:24}
        \phi(\uul a^*)=0;
\end{equation}
see \citet[p.~234]{neustadtoptimization}.

Because \(u\leq 0\) and the Hamiltonian is linear in \(u\), the maximum
condition implies
\begin{equation}\label{eq:phinonnegative}
        \phi(a)\geq 0
\end{equation}
away from jump points. Moreover, by the maximum principle with jumps
\cite[Theorem~7, pp.~196--197]{seierstad1986optimal}, the adjoints are
continuous across each jump in the present problem. Hence \(\xi\) and \(\phi\) are continuous across
every jump. Since \(\phi'=-\xi\) on both sides of each jump, it follows that
\begin{equation}\label{eq:phiprimecontinuous}
        \lim_{a\uparrow a_i}\phi'(a)
        =
        \lim_{a\downarrow a_i}\phi'(a).
\end{equation}
At any genuine downward jump, the jump size satisfies \(v_i<0\). Since \(v_i\)
enters the jump map only through \(y_+=y_-+v_i\), the first-order condition for
the jump size gives
\begin{equation}\label{eq:jump_phi_zero}
        \phi(a_i)=0.
\end{equation}
Combining \eqref{eq:phinonnegative} and \eqref{eq:jump_phi_zero}, every genuine
interior jump \(a_i\in(\uul a^*,\ool v)\) must be a local minimum of \(\phi\).
Together with \eqref{eq:phiprimecontinuous}, this implies
\begin{equation}\label{eq:jump_derivative_zero}
        \phi'(a_i)=0.
\end{equation}
I now show that no such point can exist.

Away from jump points, differentiating \eqref{eq:23} and using \eqref{eq:22}
gives
\begin{align}
        \phi''(a)
        &=
        -\xi'(a)  
        =
        f(a,z^{*}(a))
        \Big(
        \mu
        -
        \frac{F_{A|B}(a|z^*(a))}{f_{A|B}(a|z^*(a))}
        \Big).
        \label{eq:phisecond2}
\end{align}
Since \(f>0\), \eqref{eq:strict-antihazard-along-boundary} implies that \(\phi''\) changes
sign at most once. It is either always weakly positive, always weakly negative,
or positive up to some cutoff and negative after it.

Because \(\phi(\uul a^*)=0\) and \(\phi\geq 0\), we have $\phi'_+(\uul a^*)\geq 0.$ If \(\phi''\geq 0\) throughout the interval, then \(\phi'\) is non-decreasing
between jumps and continuous across jumps. Hence \(\phi'\) cannot vanish at an
interior local minimum of \(\phi\), so no interior jump can occur.

If \(\phi''\leq 0\) throughout the interval, then \(\phi'\) is non-increasing.
The case \(\phi'_+(\uul a^*)=0\) is impossible, because then \(\phi'\) would be
negative immediately to the right of \(\uul a^*\), forcing \(\phi<0\) nearby.
Thus \(\phi'_+(\uul a^*)>0\). Since \(\phi'\) is non-increasing, it can cross
zero at most once. Such a crossing is a local maximum of \(\phi\), not a local
minimum. Hence no interior jump can occur.

Finally, suppose there is a cutoff \(\tilde a\) such that \(\phi''>0\) on
\((\uul a^*,\tilde a)\) and \(\phi''<0\) on \((\tilde a,\ool v)\). Then
\(\phi'\) is increasing up to \(\tilde a\), so, since
\(\phi'_+(\uul a^*)\geq 0\), we have \(\phi'(a)>0\) on
\((\uul a^*,\tilde a]\). Hence no jump can occur weakly before \(\tilde a\).
For \(a>\tilde a\), \(\phi'\) is strictly decreasing and can cross zero at most
once. At such a crossing, \(\phi\) is strictly positive, because \(\phi\) has
already been increasing on \((\uul a^*,\tilde a]\). After the crossing,
\(\phi'<0\). Thus there is again no interior point at which both
\(\phi(a)=0\) and \(\phi'(a)=0\).

We have thus shown that \(z^*\) cannot have a downward jump in slope on an
initial concave interval. To complete the argument, suppose that \(z^*\) has
at least one kink, and let \(a_1\) denote its first kink. If the slope jumps
downward at \(a_1\), then \(z^*\) is concave on an interval beginning at
\(\uul a^*\) and ending just to the right of \(a_1\), contradicting the
preceding argument. If the slope jumps upward at \(a_1\), then the inverse
boundary has a downward jump at the corresponding point, and the symmetric
argument using the \(B\)-side inverse anti-hazard rate yields the same
contradiction. Hence \(z^*\) has no first kink. Since \(z^*\) is piecewise
linear, it must therefore consist of a single linear piece.

\subsection{Proof of Lemma \ref{lem:slope1}}

By Lemma \ref{lemma:vararg}, the optimal boundary \(z^*\) is linear. Let
\(\sigma>0\) denote its slope. We now show \(\sigma=1\). For each \(s\) in a neighborhood of \(\sigma\), let \(z_s\) denote the linear
boundary with slope \(s\) that allocates the same masses of goods \(A\) and
\(B\) as \(z^*\). Write \(\hat z_s\) for its extended boundary, and write
\(\uul a_s,\uul b_s\) for its participation cutoffs. Since \(f\) is
strictly positive and continuous, these endpoints are locally uniquely pinned
down by the two mass constraints and vary continuously with \(s\).

We first record a monotonicity fact. If \(s_1>s_2\), then the two extended
boundaries \(\hat z_{s_1}\) and \(\hat z_{s_2}\) cross exactly once. Let
\((a^*,b^*)\) be their crossing point. The flatter boundary \(\hat z_{s_2}\) is
above \(\hat z_{s_1}\) to the left of \(a^*\), and below it to the right of
\(a^*\). In particular, \(\uul a_s\) is increasing in \(s\). Define
\[
        I_A(s):=\int_0^1 F(a,\hat z_s(a))\,da .
\]
I claim that \(I_A(s)\) is strictly increasing in \(s\). To see this, fix
\(s_1>s_2\), and define
\[
\mathcal D^- :=
\{(a,b): a<a^*,\ \hat z_{s_1}(a)<b<\hat z_{s_2}(a)\},
\quad \text{and} \quad
\mathcal D^+ :=
\{(a,b): a>a^*,\ \hat z_{s_2}(a)<b<\hat z_{s_1}(a)\}.
\]
Then
\begin{align*}
I_A(s_1)-I_A(s_2)
&=
\int_0^1
\int_{\hat z_{s_2}(a)}^{\hat z_{s_1}(a)}
        \int_0^a f(v,b)\,dv\,db\,da                                                   \\
&=
\int_{\mathcal D^+}
        \frac{F_{A|B}(a|b)}{f_{A|B}(a|b)}
        f(a,b)\,da\,db
-
\int_{\mathcal D^-}
        \frac{F_{A|B}(a|b)}{f_{A|B}(a|b)}
        f(a,b)\,da\,db .
\end{align*}
By the anti-hazard-rate condition in Theorem \ref{th:1}, we have:
\[
        \frac{F_{A|B}(a|b)}{f_{A|B}(a|b)}>\frac{F_{A|B}(a^*|b^*)}{f_{A|B}(a^*|b^*)} \quad \text{on } \mathcal D^+,
        \qquad
        \frac{F_{A|B}(a|b)}{f_{A|B}(a|b)}<\frac{F_{A|B}(a^*|b^*)}{f_{A|B}(a^*|b^*)} \quad \text{on } \mathcal D^-.
\]
Moreover, because \(z_{s_1}\) and \(z_{s_2}\) allocate the same mass of good
\(A\),
\[
        \int_{\mathcal D^+} f(a,b)\,da\,db
        =
        \int_{\mathcal D^-} f(a,b)\,da\,db,
\]
so $ I_A(s_1)-I_A(s_2)>0.$

Now suppose, toward a contradiction, that \(\sigma>1\). For any linear boundary
with slope \(s>1\), Lemma \ref{lemma:optimplement} gives \(m\equiv 1\) and
\[
        k=\frac{1}{\max\{1,1/s\}}=1.
\]
Thus the optimal \(A\)-indirect utility satisfies
\[
        U_{A,s}'(a)=0 \quad \text{for } a<\uul a_s,
        \qquad
        U_{A,s}'(a)=1 \quad \text{for } a>\uul a_s .
\]
Using Lemma \ref{lem:objintbyparts}, welfare under \(z_s\) is therefore
\[
\begin{aligned}
W[z_s]
&=
\int_0^1 \bigl(1-F(a,\hat z_s(a))\bigr)U_{A,s}'(a)\,da  \\
&=
\int_{\uul a_s}^1 \bigl(1-F(a,\hat z_s(a))\bigr)\,da    \\
&=
1-\uul a_s-\int_0^1 F(a,\hat z_s(a))\,da                 =
1-\uul a_s-I_A(s),
\end{aligned}
\]
where the third equality uses \(F(a,0)=0\). Since \(\uul a_s\) is weakly increasing in \(s\) and \(I_A(s)\) is strictly increasing in \(s\), welfare is strictly decreasing in \(s\) on the region
\(s>1\). Hence, if \(\sigma>1\), lowering the slope slightly
while preserving the masses of both goods would strictly increase welfare. This
contradicts the optimality of \(z^*\).

It remains to rule out \(\sigma<1\). Let $\hat w_s(b):=\hat z_s^{-1}(b)$ denote the generalized inverse boundary. When \(s<1\), the inverse boundary has
slope \(1/s>1\). Applying the preceding argument with the roles of \(A\) and
\(B\) reversed, the expression
\[
        I_B(s):=\int_0^1 F(\hat w_s(b),b)\,db
\]
is strictly decreasing in \(s\) on the region \(s<1\), and \(\uul b_s\) is also
strictly decreasing in \(s\). Moreover, applying Lemma
\ref{lemma:optimplement} to the inverse boundary gives
\[
        U_{B,s}'(b)=0 \quad \text{for } b<\uul b_s,
        \qquad
        U_{B,s}'(b)=1 \quad \text{for } b>\uul b_s .
\]
Using the symmetric version of Lemma \ref{lem:objintbyparts}, welfare can be
written as
\[
W[z_s]
=
\int_0^1 \bigl(1-F(\hat w_s(b),b)\bigr)U_{B,s}'(b)\,db =
1-\uul b_s-I_B(s).
\]
Since both \(\uul b_s\) and \(I_B(s)\) are strictly decreasing in \(s\), welfare
is strictly increasing in \(s\) on the region \(s<1\). Hence, if
\(\sigma<1\), raising the slope slightly while preserving the masses of both
goods would strictly increase welfare. This again contradicts the optimality of
\(z^*\).

\section{Verifying examples}

\subsection{Verifying Example \ref{ex:1}}
Consider mechanisms which do not use damages. By an argument analogous to that for Lemma \ref{lemma:bestundamaged}, it suffices to consider the market-clearing toll mechanism. It takes the form:
\[
        y(a,b)=B,\ c(a,b)=1/2
        \quad \text{when } b-a>1/2, \quad \text{ and }
\quad
        y(a,b)=A,\ c(a,b)=0
        \quad \text{when } b-a<1/2.
\]
Thus, the total welfare from this mechanism is
\[
\int_{\{b-a>1/2\}} \left(b-\tfrac12\right) f(a,b)\,da\,db
+
\int_{\{b-a<1/2\}} a\, f(a,b)\,da\,db
=
\frac{14\varepsilon^2-9\varepsilon+23}{42}
-
\frac{28\varepsilon^2-46\varepsilon+25}{252(1-\varepsilon)}.
\]
This converges to $\frac{113}{252}$ as \(\varepsilon\to0^+\).

Now consider an alternative mechanism which offers two options: good \(A\) with
\(x_A=1\) and good \(B\) with \(x_B^\varepsilon<1\), with no tolls for either.
For small enough \(\varepsilon\), there exists \(x_B^\varepsilon\) for which both
supply constraints hold with equality. We can verify that
\[
x_B^\varepsilon
=
\frac{7}{16}
-
\frac{287}{1024}\,\varepsilon
+
O(\varepsilon^2),
\]
and so \(x_B^\varepsilon\to 7/16\) as \(\varepsilon\to0^+\). Calculation confirms
that the limit welfare from this mechanism is
\[
\lim_{\varepsilon\to 0^+}
\left[
\int_{\{x_B^\varepsilon b>a\}} x_B^\varepsilon b\, f(a,b)\,da\,db
+
\int_{\{x_B^\varepsilon b<a\}} a\, f(a,b)\,da\,db
\right]
=
\frac{17497}{36288}.
\]
$\frac{17497}{36288}
        >
        \frac{113}{252},$ so this mechanism dominates the market-clearing toll mechanism for \(\varepsilon>0\)
sufficiently small.

\subsection{Verifying Example \ref{ex:affiliated-profitable-damages}}

First consider \(\lambda\leq 0\). The conditional inverse anti-hazard rate of
\(A\) given \(B=b\) is
\[
        \frac{F_{A\mid B}(a\mid b)}
        {f_{A\mid B}(a\mid b)}
        =
        \frac{\int_0^a e^{\lambda tb}\,dt}
        {e^{\lambda ab}} .
\]
For \(\lambda b\neq 0\), this equals
\[
        \frac{1-e^{-\lambda ab}}{\lambda b},
\]
with the continuous extension equal to \(a\) when \(\lambda b=0\). When
\(\lambda=0\), the conditional inverse anti-hazard rate therefore equals \(a\),
so the condition in Theorem \ref{th:1} holds.

Now suppose \(\lambda<0\). We have
\[
        \frac{\partial}{\partial a}
        \left[
        \frac{1-e^{-\lambda ab}}{\lambda b}
        \right]
        =
        e^{-\lambda ab}
        >
        0
\]
and
\[
        \frac{\partial}{\partial b}
        \left[
        \frac{1-e^{-\lambda ab}}{\lambda b}
        \right]
        =
        \frac{(1+\lambda ab)e^{-\lambda ab}-1}{\lambda b^2}.
\]
Note that
\[
        (1+x)e^{-x}-1\leq 0
\]
for all \(x\), with equality only at \(x=0\). Therefore
\[
        \frac{\partial}{\partial b}
        \left[
        \frac{1-e^{-\lambda ab}}{\lambda b}
        \right]
        \geq 0
        \qquad\text{whenever}\qquad
        \lambda<0.
\]
Applying the same argument to the other conditional inverse anti-hazard rate
shows that the condition in Theorem \ref{th:1} holds for every
\(\lambda<0\). Together with the \(\lambda=0\) case above, this implies that
the market-clearing toll mechanism is optimal, and hence that the optimal
mechanism does not use damages, for every \(\lambda\leq0\).

Now consider large positive \(\lambda\). Note Corollary \ref{cor:damagesnormalized} applies and
\[
        \frac{F_{A\mid B}(b-c_B^*\mid b)}
        {f_{A\mid B}(b-c_B^*\mid b)}
        =
        \frac{\int_0^{b-c_B^*} e^{\lambda tb}\,dt}
        {e^{\lambda (b-c_B^*)b}}
        =
        \frac{1-e^{-\lambda b(b-c_B^*)}}{\lambda b}.
\]
Fix any \(\tau>0\), and set $\tilde b_\lambda
        :=
        1-\frac{\tau}{\lambda}.$ For all sufficiently large \(\lambda\), we have
\(\tilde b_\lambda\in(c_B^*,1)\). Let \(B_\lambda\) denote a random variable
distributed according to the boundary measure, with density proportional to
\(f(b-c_B^*,b)\) on \([c_B^*,1]\). Define $Y_\lambda:=\lambda(1-B_\lambda).$ Its density is obtained from the change of variables
\(b=1-y/\lambda\), so for \(y\in[0,\lambda(1-c_B^*)]\), the density of
\(Y_\lambda\) is proportional to
\[
        \exp\left\{
        \lambda
        \left(1-c_B^*-\frac{y}{\lambda}\right)
        \left(1-\frac{y}{\lambda}\right)
        \right\}
        =
        \exp\left\{
        \lambda(1-c_B^*)-(2-c_B^*)y+\frac{y^2}{\lambda}
        \right\}.
\]
After normalizing, the constant term \(\lambda(1-c_B^*)\) drops out. Hence
\(Y_\lambda\) converges in distribution to an exponential random variable
\(Y\) with rate \(2-c_B^*\). Then
\[
        (\tilde b_\lambda-B_\lambda)_+
        =
        \frac{(Y_\lambda-\tau)_+}{\lambda}.
\]
Also,
\[
        \frac{1-e^{-\lambda B_\lambda(B_\lambda-c_B^*)}}
        {\lambda B_\lambda}
        =
        \frac{1}{\lambda(1-Y_\lambda/\lambda)}
        +
        o(\lambda^{-2})
        =
        \frac{1}{\lambda}
        +
        \frac{Y_\lambda}{\lambda^2}
        +
        o(\lambda^{-2}).
\]
Therefore
\[
\begin{aligned}
        \lambda^3
        \operatorname{Cov}
        \left(
        \frac{F_{A\mid B}(b-c_B^*\mid b)}
        {f_{A\mid B}(b-c_B^*\mid b)},
        (\tilde b_\lambda-b)_+
        \ \middle|\ a=b-c_B^*
        \right)
        &\to
        \operatorname{Cov}\left(Y,(Y-\tau)_+\right).
\end{aligned}
\]
Since \(Y\sim\operatorname{Exp}(2-c_B^*)\),
\[
        \operatorname{Cov}\left(Y,(Y-\tau)_+\right)
        =
        \mathbb E\left[Y(Y-\tau)_+\right]
        -
        \mathbb E[Y]\mathbb E[(Y-\tau)_+]  
        =
        e^{-(2-c_B^*)\tau}
        \left(
        \frac{\tau}{2-c_B^*}
        +
        \frac{1}{(2-c_B^*)^2}
        \right)
        >
        0.
\]
Hence, for all sufficiently large \(\lambda\),
\[
        \operatorname{Cov}
        \left(
        \frac{F_{A\mid B}(b-c_B^*\mid b)}
        {f_{A\mid B}(b-c_B^*\mid b)},
        (\tilde b_\lambda-b)_+
        \ \middle|\ a=b-c_B^*
        \right)
        >
        0.
\]

\end{document}